\newtheorem{theorem}{Theorem}[section]
\newtheorem{definition}{Definition}
\newcommand{\id}{{\operatorname{id}}}
\newcommand{\inv}{{\operatorname{inv}}}
\newcommand{\ZZ}{{\mathbb Z}}
\newcommand{\RR}{{\mathbb R}}
\newcommand{\Z}{{\mathbb Z}}
\newcommand{\be}{\boldsymbol e}
\newcommand{\bv}{\boldsymbol{v}}
\newcommand{\bface}{\boldsymbol{f}}
\newcommand{\bt}{\boldsymbol{t}}
\newcommand{\bc}{\boldsymbol{c}}
\newcommand{\change}{\color{black}}
\newcommand{\Sq}{{\mathrm{Sq}}}
\newcommand{\supp}{\operatorname{supp}}
\newcommand{\pd}{\operatorname{PD}}
\begin{document}

\title{Pauli stabilizer formalism for topological quantum field theories and generalized statistics}

\author[1,*]{Yitao Feng}
\author[2,*]{Hanyu Xue}
\author[3]{Ryohei Kobayashi}
\author[4]{Po-Shen Hsin}
\author[1,$\dagger$]{Yu-An Chen}

\affil[1]{International Center for Quantum Materials, School of Physics, Peking University, Beijing 100871, China}
\affil[2]{Department of Physics, Massachusetts Institute of Technology, Cambridge, Massachusetts 02139, USA}
\affil[3]{School of Natural Sciences, Institute for Advanced Study, Princeton, NJ 08540, USA}
\affil[4]{Department of Mathematics, King’s College London, Strand, London WC2R 2LS, UK}

\renewcommand*{\thefootnote}{*}
\footnotetext[1]{These authors contributed equally to this work.}
\renewcommand*{\thefootnote}{$\dagger$}
\footnotetext[1]{Contact author: yuanchen@pku.edu.cn}
\renewcommand{\thefootnote}{\arabic{footnote}}

\date{\today}

\maketitle

\begin{abstract}

Topological quantum field theory (TQFT) provides a unifying framework for describing topological phases of matter and for constructing quantum error-correcting codes, playing a central role across high-energy physics, condensed matter, and quantum information.
A central challenge is to formulate topological order on lattices and to extract the properties of topological excitations from microscopic Hamiltonians.
In this work, we construct new classes of lattice gauge theories as Pauli stabilizer models, realizing a wide range of TQFTs in general dimensions. We develop a lattice description of extended excitations and systematically determine their generalized statistics.
Our main example is the $(4{+}1)$D \emph{fermionic-loop toric code}, obtained by condensing the $e^2 m^2$-loop in the $(4{+}1)$D $\mathbb{Z}_4$ toric code. We show that the loop excitation exhibits fermionic loop statistics: the 24-step loop-flipping process yields a phase of $-1$.
Our Pauli stabilizer models realize all twisted 2-form gauge theories in $(4{+}1)$D, the higher-form Dijkgraaf-Witten TQFT classified by $H^{5}(B^{2}G, U(1))$.
Beyond $(4{+}1)$D, the fermionic-loop toric codes form a family of $\mathbb{Z}_2$ topological orders in arbitrary dimensions, realized as explicit Pauli stabilizer codes using $\mathbb{Z}_4$ qudits.
Finally, we develop a Pauli-based framework that defines generalized statistics for extended excitations in any dimension, yielding computable lattice unitary processes to detect nontrivial statistics.
For example, we propose anyonic membrane statistics in $(6{+}1)$D, as well as fermionic membrane and volume statistics in arbitrary dimensions. We construct new families of $\mathbb{Z}_2$ topological orders: the \emph{fermionic-membrane toric code} and the \emph{fermionic-volume toric code}.
In addition, we demonstrate that $p$-dimensional excitations in $2p+2$ spatial dimensions can support anyonic $p$-brane statistics for only even $p$.

\end{abstract}

\tableofcontents

\bigskip

\section{Introduction}

Topological quantum field theory (TQFT) is a cornerstone of high-energy physics and has critical applications in condensed matter physics and quantum information. It describes the infrared fixed points of topological phases of matter and has been widely used as a design principle for quantum error-correcting codes~\cite{Dijkgraaf:1989pz, Atiyah1988TQFT, Kitaev2002Topologicalquantummemory, freedman2003topological, Nayak2008NonAbelian, Chen2013cohomology}.
While TQFTs in two spatial dimensions are well studied and often admit simple lattice realizations~\cite{Kitaev2003Fault, Levin2005String, Yidun2013quantumdouble, Kong2014Anyon, Levin2012Braiding, Ellison2022Pauli}, explicit constructions in higher dimensions are much less developed~\cite{Ye2023Fusion, Ye2025Diagrammatics}. In particular, lattice Hamiltonians that realize twisted higher-form gauge theories are frequently intricate, and the structure and statistics of their extended excitations—particles, loops, membranes, and higher-dimensional objects—can be difficult to analyze microscopically.

Topological phases realized by stabilizer Hamiltonians provide a natural setting for quantum error correction~\cite{Gottesman1997StabilizerThesis, Terhal2015QECMemories}. Pauli stabilizer codes combine the robustness of topological order with the algebraic simplicity of Pauli operators, and they form the foundation of many fault-tolerant quantum computing schemes~\cite{kitaev2009topological}. Higher-dimensional stabilizer codes are appealing: in $(4{+}1)$D and above, topological orders can exhibit energy barriers between distinct ground states that grow with system size, making them promising candidates for self-correcting quantum memories~\cite{Bombin2013Selfcorrecting, Hsin2025NonAbelian}. Nevertheless, the space of higher-dimensional topological orders that admit Pauli stabilizer realizations remains largely unexplored. A central challenge is to construct explicit stabilizers that realize general TQFTs and to analyze their excitations directly on the lattice.

While many twisted higher-form gauge theories are often described as ``exactly solvable,'' this statement typically refers to the existence of a closed-form characterization of the ground-state wave function (or partition function), rather than a microscopic Hamiltonian with fully tractable excitations. On the lattice, one can often write frustration-free local constraints whose common $+1$ eigenspace reproduces the ground-state sector. However, these local terms need not commute (unless one restricts to a special subspace, e.g., a zero-flux sector). In such cases, beyond the ground state space, the structure of extended excitations becomes much harder to access microscopically. This motivates the search for Pauli stabilizer realizations, whose ground states match the standard gauge-theory wave functions, while keeping excitation operators, fusion rules, and generalized statistics explicitly computable on the lattice.

Beyond providing analytic control, stabilizer realizations are also attractive from the perspective of code design. The $\ZZ_N$ toric code already supports fault-tolerant logical operations via its charge and flux excitations, but more structured stabilizer realizations can offer additional capabilities. Even in two dimensions, many stabilizer codes, such as the color code~\cite{bombin2006topological, kubica2015unfolding} and bivariate bicycle codes~\cite{Kovalev2013QuantumKronecker, Pryadko2022DistanceGB, Bravyi2024HighThreshold, wang2024coprime, Wang2024Bivariate, tiew2024low, wolanski2024ambiguity, shaw2024lowering,  voss2024multivariate, berthusen2025toward, Eberhardt2025Logical, liang2025generalized, liang2025selfdual}, are equivalent to multiple copies of toric codes up to finite-depth circuits~\cite{bombin_Stabilizer_14, haah_module_13, haah2016algebraic, haah_classification_21, Chen2023Equivalence, ruba2024homological}, yet their distinct stabilizer structures enable extra features, including transversal Clifford gates and higher encoding rates. These examples suggest that realizing higher-form TQFTs as Pauli stabilizer models could expand the design space for quantum codes, enabling richer excitation structures that may yield new advantages for both computation and robust quantum memory.

In this work, we develop a Pauli stabilizer formalism that systematically realizes a broad class of higher-dimensional TQFTs as Pauli stabilizer codes. Ref.~\cite{Ellison2022Pauli} constructed Pauli stabilizer models for all Abelian $(2{+}1)$D topological orders with gapped boundaries by condensing anyons in toric codes with qudits. We generalize this construction to arbitrary dimensions and obtain many new classes of higher-dimensional TQFTs, all of which are realized by Pauli operators acting on qudits.

Our main example is a family of $\ZZ_2$ topological orders, which we call the \emph{fermionic-loop toric codes}, defined in arbitrary $d \geq 4$ spatial dimensions. These phases support fermionic flux-loop excitations together with bosonic $(d{-}3)$-brane charge excitations, with a mutual braiding phase of $-1$ between charge and flux. The notion of fermionic loop statistics has been formulated recently~\cite{FHH21, CH21, kobayashi2024generalized, xue2025statisticsabeliantopologicalexcitations}: a loop-flipping process, implemented by the 24-step sequence, produces a phase of $-1$. The $(4{+}1)$D member of this family admits a particularly simple realization, obtained by condensing $e^2m^2$ loops in the standard $(4{+}1)$D $\ZZ_4$ toric code, as shown in Sec.~\ref{sec: (4+1)D fermionic-loop toric code}.

More generally, in Sec.~\ref{sec: Dijkgraaf-Witten 2-form gauge theories}, we construct Pauli stabilizer models for all $(4{+}1)$D Dijkgraaf-Witten $2$-form $G$ gauge theories classified by $H^{5}(B^{2}G,U(1))$, via appropriate loop condensations starting from standard toric codes. Beyond $(4{+}1)$D, the fermionic-loop toric codes extend naturally to higher dimensions using algebraic-topology notation such as higher cup products, as developed in Sec.~\ref{sec: Fermionic-loop toric codes in arbitrary dimensions}. In all cases, our stabilizer constructions make the relevant excitations fully accessible microscopically, enabling explicit analysis of their hopping operators, fusion rules, and consistent condensations.

Another key contribution of this work is a unified framework for computing the statistics of extended excitations within Pauli stabilizer models. Extending the Berry-phase approach of Ref.~\cite{kobayashi2024generalized}, we construct sequences of local unitaries whose many-body Berry phases define quantized invariants characterizing the statistics of particles, loops, membranes, and volume excitations. This provides a general operational method for directly detecting generalized statistics on the lattice, revealing statistical processes beyond conventional braiding. This framework is developed in Sec.~\ref{sec: Statistics in Pauli Stabilizer Models}.

Finally, our formalism naturally leads to new families of exactly solvable higher-dimensional lattice gauge theories realized as stabilizer codes. As concrete examples, we construct the \emph{semionic-membrane toric code} in $(6{+}1)$D by condensing $e^2m^2$ membranes in the standard $(6{+}1)$D $\mathbb{Z}_4$ toric code in Sec.~\ref{sec: Semionic-membrane toric code in (6+1)D}, and we discover a novel $\mathbb{Z}_4$ membrane statistics in 6 spatial dimensions whose generator we refer to as the semionic membrane. However, just as semions exist only in $(2{+}1)$D, this semionic membrane statistics is not stable in higher dimensions: only its $\mathbb{Z}_2$ subgroup survives. Therefore, we construct the \emph{fermionic-membrane toric code} families in $d \geq 7$ spatial dimensions in Sec.~\ref{sec: Fermionic-membrane toric code in arbitrary dimensions}.
The same procedure extends higher dimensions: condensing $e^2m^2$ volume excitations in the standard $(8{+}1)$D $\mathbb{Z}_4$ toric code yields the \emph{fermionic-volume toric code}, as described in Sec.~\ref{sec: Fermionic-volume toric code in arbitrary dimensions}, which forms a family in higher dimensions. Based on these constructions, we observe a periodicity of four in dimension: the $(2{+}1)$D semion is analogous to the $(6{+}1)$D semionic membrane, while the $(4{+}1)$D fermionic loop is analogous to the $(8{+}1)$D fermionic volume.

The main results of this paper are partially summarized in Table~\ref{tab:flux-anomaly-summary} and~\ref{tab: Pauli statistics}. In summary, we provide Pauli stabilizer realizations of all $(4{+}1)$D Dijkgraaf-Witten 2-form gauge theories, together with explicit higher-dimensional models such as the fermionic-loop, semionic-membrane, fermionic-membrane, and fermionic-volume toric codes.
We find that $p$-dimensional excitations in $2p{+}2$ spatial dimensions can exhibit anyonic $p$-brane statistics only when $p$ is even. For odd $p$, such anyonic statistics is absent. For example, the $em$ loop in the $(4{+}1)$D loop-only $\ZZ_N$ toric code is always bosonic~\cite{Chen2023Loops4d}.
Our generalized-statistics framework yields computable lattice Berry phases that detect novel statistics of extended excitations within the Pauli stabilizer setting. These results extend the class of TQFTs and topological orders accessible within the stabilizer formalism, which will be useful for the design of higher-dimensional quantum memories and fault-tolerant logical operations in the future.

\begin{table}[thb]
\centering
\renewcommand{\arraystretch}{1.3}
\begin{tabular}{|c|c|c|c|c|}
\hline
Topological order
&
Action $\frac{S}{2\pi i}$
&
\begin{tabular}{@{}c@{}}
Anomaly of \\
symmetry $(-1)^{\oint a}$
\end{tabular}
&
\begin{tabular}{@{}c@{}}
Statistics  \\
of flux~(Sec.~\ref{sec: Statistics in Pauli Stabilizer Models})
\end{tabular}
&
\begin{tabular}{@{}c@{}}
Stable in \\
higher dimensions
\end{tabular}
\\
\hline
\begin{tabular}{@{}c@{}}
$(2{+}1)$D \\
double semion~\cite{Levin2012Braiding, Ellison2022Pauli}
\end{tabular}
&
\begin{tabular}{@{}l@{}}
$\frac{1}{2} a_1 \cup \delta b_1 $ \\
$+ \frac{1}{4} b_1 \cup \delta b_1$
\end{tabular}
&
$\tfrac14\bigl(B_2 \cup B_2 + B_2 \cup_1 \delta B_2\bigr)$
&
semion
&
No 
\\[1em]
\hline
\begin{tabular}{@{}c@{}}
$(3{+}1)$D \\
fermionic-particle \\
toric code~\cite{Levin2006Quantumether, Chen:2018nog}
\end{tabular}
&
\begin{tabular}{@{}l@{}}
$\tfrac12 a_1 \cup \delta b_2$ \\
$+ \tfrac12 b_2 \cup b_2$
\end{tabular}
&
\begin{tabular}{@{}l@{}}
$\tfrac12 B_3 \cup_1 B_3$ \\
$= \tfrac12\,\operatorname{Sq}^2  B_3$ \\
$= \tfrac12\, w_2 \cup B_3$
\end{tabular}
&
fermion
&
\begin{tabular}{@{}c@{}}
    Yes \\
    $\tfrac12 a_1 \cup \delta b_n$ \\
    $+ \tfrac12\operatorname{Sq}^2 b_n$
\end{tabular}
\\[2em]
\hline
\begin{tabular}{@{}c@{}}
$(4{+}1)$D \\
fermionic-loop\\
toric code~(Sec.~\ref{sec: (4+1)D fermionic-loop toric code})
\end{tabular}
&
\begin{tabular}{@{}l@{}}
$\tfrac12 a_2 \cup \delta b_2$ \\
$+ \tfrac14 b_2 \cup \delta b_2$
\end{tabular}
&
\begin{tabular}{@{}l@{}}
$\tfrac14\bigl(B_3 \cup B_3 + B_3 \cup_1 \delta B_3\bigr)$ \\
$= \tfrac12\,\operatorname{Sq}^2 \operatorname{Sq}^1 B_3$ \\
$= \tfrac12\, w_3 \cup B_3$
\end{tabular}
&
\begin{tabular}{@{}c@{}}
fermionic \\
loop
\end{tabular}
&
\begin{tabular}{@{}c@{}}
    Yes~(Sec.~\ref{sec: Fermionic-loop toric codes in arbitrary dimensions}) \\
    $\tfrac12 a_2 \cup \delta b_n $ \\
    $+ \tfrac12 \operatorname{Sq}^2 \operatorname{Sq}^1 b_n$
\end{tabular}
\\[2em]
\hline
\begin{tabular}{@{}c@{}}
$(6{+}1)$D \\
semionic-membrane\\
toric code~(Sec.~\ref{sec: Semionic-membrane toric code in (6+1)D})
\end{tabular}
&
\begin{tabular}{@{}l@{}}
$\tfrac12 a_3 \cup \delta b_3$ \\
$+ \tfrac14 b_3 \cup \delta b_3$
\end{tabular}
&
$\tfrac14\bigl(B_4 \cup B_4 + B_4 \cup_1 \delta B_4\bigr)$
&
\begin{tabular}{@{}c@{}}
semionic \\
membrane
\end{tabular}
&
No
\\
\hline
\begin{tabular}{@{}c@{}}
$(7+1)$D \\
fermionic-membrane \\
toric code~(Sec.~\ref{sec: Fermionic-membrane toric code in arbitrary dimensions})
\end{tabular}
&
\begin{tabular}{@{}l@{}}
$\tfrac12 a_3 \cup \delta b_4 $ \\
$+ \tfrac12 b_4 \cup b_4$
\end{tabular}
&
\begin{tabular}{@{}l@{}}
$\tfrac12 B_5 \cup_1 B_5$ \\
$= \tfrac12\,\operatorname{Sq}^4  B_5$ \\
$= \tfrac12\, (w_4+w_2^2) \cup B_5$
\end{tabular}
&
\begin{tabular}{@{}c@{}}
fermionic \\
membrane
\end{tabular}
&
\begin{tabular}{@{}c@{}}
    Yes \\
    $\tfrac12 a_3 \cup \delta b_n$\\
    $+ \tfrac12\operatorname{Sq}^4 b_n$
\end{tabular}
\\[2em]
\hline
\begin{tabular}{@{}c@{}}
$(8{+}1)$D \\
fermionic-volume\\
toric code~(Sec.~\ref{sec: Fermionic-volume toric code in arbitrary dimensions})
\end{tabular}
&
\begin{tabular}{@{}l@{}}
$\tfrac12 a_4 \cup \delta b_4$ \\
$+ \tfrac14 b_4 \cup \delta b_4$
\end{tabular}
&
\begin{tabular}{@{}l@{}}
$\tfrac14\bigl(B_5 \cup B_5 + B_5 \cup_1 \delta B_5\bigr)$ \\ 
$= \tfrac12\,\operatorname{Sq}^4 \operatorname{Sq}^1 B_5 $ \\
$= \tfrac12\, w_5 \cup B_5$
\end{tabular}
&
\begin{tabular}{@{}c@{}}
fermionic \\
volume
\end{tabular}
&
\begin{tabular}{@{}c@{}}
    Yes \\
    $\tfrac12 a_4 \cup \delta b_n$ \\
    $+ \tfrac12\operatorname{Sq}^4 \operatorname{Sq}^1 b_n$
\end{tabular}
\\[2em]
\hline
\end{tabular}
\caption{Summary of $\ZZ_2$ topological orders presented in this work, including their topological actions, anomalies of fluxes, flux statistics, and whether the corresponding cohomology operation is stable (and hence extends to higher dimensions).
Here $a_n$, $b_n$, and $B_n$ denote $\ZZ_2$-valued $n$-cochains.
The first row reproduces the $(2{+}1)$D double-semion order.
In $(4{+}1)$D, the fermionic-loop toric code is characterized by the Stiefel-Whitney class $w_3$ (equivalently, $\operatorname{Sq}^2\operatorname{Sq}^1$), which is stable; accordingly, this construction extends to a family of $\ZZ_2$ topological orders in $(d{+}1)$D for all $d\ge4$.
By contrast, the $(6{+}1)$D semionic-membrane toric code is analogous to the double-semion theory and does not extend to higher dimensions.
For $d\ge7$, the surviving $\ZZ_2$ subgroup of the $\ZZ_4$ semionic membrane statistics yields a stable topological phase, the fermionic-membrane toric code.
Overall, $(2n{+}1)$D topological orders twisted by $\tfrac{1}{4}\, b_n\cup \delta b_n$ are stable only when $n$ is even. When $n$ is odd, the twist $\tfrac{1}{4}\, b_n\cup \delta b_n$ instead produces semionic statistics.
}

\label{tab:flux-anomaly-summary}
\end{table}

\begin{table}[t]
\centering
\renewcommand{\arraystretch}{1.8}
\begin{tabular}{@{}lll@{}}
\toprule
Excitation & Spatial dimension $d$ & Statistics detector \\
\midrule
\multirow{2}{*}{$\ZZ_N$ particle}
& $d=1$~\cite{CarolynZhangSPTEntangler, kobayashi2024generalized}
& $\bigl[U_{01},\,U_{02}\bigr]^N$ \\
& $d\ge 2$~\cite{Levin2003Fermions}
& $\bigl[U_{01},\,U_{02}\bigr]\,
   \bigl[U_{02},\,U_{03}\bigr]\,
   \bigl[U_{03},\,U_{01}\bigr]$ \\
\midrule
$\ZZ_{2N}$ loop
& $d\ge 3$~\cite{FHH21, kobayashi2024generalized}
& $\bigl[U_{012},\,U_{034}\bigr]^{2N^2}\,
   \bigl[U_{014},\,U_{023}\bigr]^{2N^2}\,
   \bigl[U_{013},\,U_{024}\bigr]^{2N^2}$ \\
\midrule
\multirow{2}{*}{$\ZZ_N$ membrane}
& $d=5$~(Theorem~\ref{thm: even p dim 2p+1 statistics})
& $\displaystyle {\prod}'_{\sigma_3 \sqcup \tau_3 =\{1,2,3,4,5,6\}} \bigl[U_{0\sigma_3},\,U_{0\tau_3}\bigr]^{o_\pm \cdot N}, \quad o_\pm = (-1)^{\sum \sigma_3}$\\
& $d\ge 6$~(Theorem~\ref{thm: even p dim 2p+2 statistics})
& $\displaystyle {\prod}'_{k\sqcup\sigma_3\sqcup\tau_3=\{1,2,3,4,5,6,7\}} \bigl[U_{0\sigma_3},\,U_{0\tau_3}\bigr]^{o'_\pm}, \quad o'_\pm = (-1)^{k+ \#(\sigma_3<k)+\sum \sigma_3}$
\\
\midrule
$\ZZ_{2N}$ volume
& $d\ge 7$~(Theorem~\ref{thm: order 2 statistics for Z_2N})
& $\displaystyle {\prod}'_{\sigma_4 \sqcup \tau_4=\{1,2,3,4,5,6,7,8\}}
   \bigl[U_{0\sigma_4},\,U_{0\tau_4}\bigr]^{2N^2}$ \\
\bottomrule
\end{tabular}
\caption{Summary of the statistics of particle, loop, membrane, and volume excitations in spatial dimension $d$. Here, $U_{\Delta}$ denotes a Pauli operator supported near a simplex $\Delta$ that creates an excitation along its boundary $\partial\Delta$. For example, $U_{ij}$ is a string operator creating a particle at $\langle j\rangle$ and an antiparticle at $\langle i\rangle$. For each $d$, we use the triangulation $S^d=\partial\Delta^{d+1}$ illustrated in Fig.~\ref{fig: complex}. The label $\ZZ_N$ for the fusion group means that $\bigl(U_{\Delta}\bigr)^N$ acts trivially on the vacuum sector up to a phase. It does not imply $\bigl(U_{\Delta}\bigr)^N=1$ as an operator, nor does it require the underlying physical degrees of freedom to be $\ZZ_N$ qudits. Statistics are expressed as products of commutators $[A,B]:=A^{-1}B^{-1}AB$. The sets $\sigma_i$ and $\tau_i$ are disjoint index sets with $|\sigma_i|=|\tau_i|=i$, and the primed product indicates that the unordered partition $\{\sigma_i,\tau_i\}$ is counted only once. For instance, in the $\ZZ_N$ membrane statistic in $d=5$, the partitions $\{\sigma_3=\{1,2,3\},\,\tau_3=\{4,5,6\}\}$ and $\{\sigma_3=\{4,5,6\},\,\tau_3=\{1,2,3\}\}$ only contribute a single factor $[U_{0123},U_{0456}]^{N}$. These formulas are proved in Sec.~\ref{sec: Statistics in Pauli Stabilizer Models}.}
\label{tab: Pauli statistics}
\end{table}

\begin{figure*}[t]
    \centering
    \subfigure[(1+1)D]{\raisebox{1.5cm}{\includegraphics[scale=0.6]{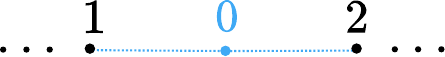}}\label{fig: complex (a)}}
    \hspace{0.05\linewidth}
    \subfigure[(2+1)D]{\includegraphics[scale=0.6]{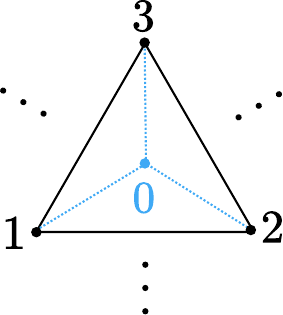}\label{fig: complex (b)}}
    \hspace{0.1\linewidth}
    \subfigure[(3+1)D]{\includegraphics[scale=0.6]{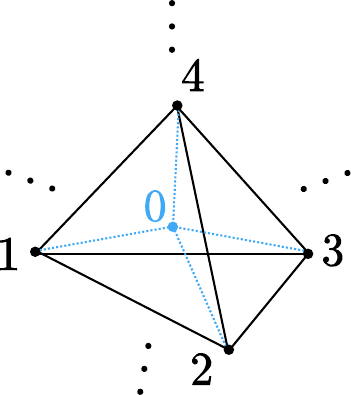}\label{fig: complex (c)}}
    \caption{Simplicial complexes used in different spacetime dimensions (adapted from Ref.~\cite{kobayashi2024generalized}). 
    In $(1{+}1)$D we use a segment subdivided by an interior vertex; in $(2{+}1)$D a triangle with a central vertex; and in $(3{+}1)$D a tetrahedron subdivided by a central vertex.
    Each complex is understood as embedded into a larger spatial manifold (indicated by $\cdots$).
    For convenience in computations, we often compactify the ambient manifold by wrapping the segment into a circle $S^ = \partial \Delta^2$, the triangle into a 2-sphere $S^2 = \partial \Delta^3$, and the tetrahedron into a 3-sphere $S^3 = \partial \Delta^4$.
    We assume that for any simplex $\Delta$ there exists a finite-depth unitary $U_\Delta$ that creates an invertible excitation on its boundary $\partial\Delta$.
    For example, the string operator $U_{01}$ creates a particle at vertex $1$ and an anti-particle at vertex $0$.
    Likewise, on the 2-simplex $\Delta_{012}$, the membrane operator $U_{012}$ creates a loop excitation along the boundary edges $\partial\Delta_{012}$.
    }
    \label{fig: complex}
\end{figure*}

\section{Fermionic-loop toric code in $(4{+}1)$D}\label{sec: (4+1)D fermionic-loop toric code}

Before presenting the general construction of Pauli stabilizer models for $(4{+}1)$D loop-only topological orders, we first provide a concrete example. In this section, we construct a $(4{+}1)$D $\mathbb{Z}_2$ topological order that we refer to as the \emph{fermionic-loop toric code}.

Our construction is directly inspired by the $(2{+}1)$D double-semion stabilizer model of Ref.~\cite{Ellison2022Pauli}, where the double-semion phase is obtained by condensing the $e^2 m^2$ anyon in the $(2{+}1)$D $\mathbb{Z}_4$ toric code, yielding a $\mathbb{Z}_4$-qudit Pauli stabilizer realization of the double-semion order. The fermionic-loop toric code model developed here is a higher-dimensional analogue of this approach.
In Sec.~\ref{sec: Loop-only toric code in four spatial dimensions}, we introduce the $\ZZ_4$ loop-only toric code in four spatial dimensions, which is the direct generalization of the toric code in two spatial dimensions but with loop excitations rather than particle excitations.
In Sec.~\ref{sec: (4+1)D fermionic-loop toric code via condensation}, we construct the desired Pauli stabilizer model by condensing the $e^2 m^2$ loop excitations of a $\mathbb{Z}_4$ loop-only toric code.
We then demonstrate explicitly in Sec.~\ref{sec: Fermionic loop statistics in (4+1)D} that the resulting loop excitations exhibit $w_3$ fermionic statistics, using the 24-step process introduced in Ref.~\cite{kobayashi2024generalized}.\footnote{Equivalently, the 36-step process of Ref.~\cite{FHH21} can also be used to detect fermionic loop statistics.}
Finally, in Sec.~\ref{sec: Field-theoretic description in (4+1)D}, we present a field-theoretic analysis of our constructions, which provides a clear physical interpretation of the models.

\subsection{Review of the $(4{+}1)$D loop-only $\ZZ_4$ toric code}\label{sec: Loop-only toric code in four spatial dimensions}

We begin by describing the Hilbert space of the $\mathbb{Z}_4$ loop-only toric code in $(4{+}1)$D.  
The spatial manifold is taken to be a four-dimensional hypercubic lattice.  
A $\mathbb{Z}_4$ qudit is placed on every two-dimensional face $f$ of the lattice, with computational basis states
\(
\{\ket{0},\ket{1},\ket{2},\ket{3}\} \cong \mathbb{Z}_4.
\)
The generalized Pauli operators acting on face $f$ are the shift and clock operators,
\begin{align}
X_f = \sum_{j\in \mathbb{Z}_4} |j+1\rangle\!\langle j|,
\qquad
Z_f = \sum_{j\in\mathbb{Z}_4} i^{\,j}\, |j\rangle\!\langle j|,
\end{align}
which in matrix form are
\begin{align}
X_f =
\begin{pmatrix}
0 & 0 & 0 & 1 \\
1 & 0 & 0 & 0 \\
0 & 1 & 0 & 0 \\
0 & 0 & 1 & 0
\end{pmatrix},
\qquad
Z_f =
\begin{pmatrix}
1 & 0 & 0 & 0 \\
0 & i & 0 & 0 \\
0 & 0 & -1 & 0 \\
0 & 0 & 0 & -i
\end{pmatrix}.
\end{align}
They satisfy
\begin{equation}
X_f^{4}=Z_f^{4}=1,
\qquad
Z_f X_{f'} =
\begin{cases}
i\, X_{f'} Z_f, & f=f', \\
X_{f'} Z_f, & f\neq f'.
\end{cases}
\end{equation}

We label the four unit lattice vectors by
\begin{eqs}
    e_1 = (1,0,0,0),\quad
    e_2 = (0,1,0,0),\quad
    e_3 = (0,0,1,0),\quad
    e_4 = (0,0,0,1).
\end{eqs}
For convenience, we write 
\(
\langle v;\alpha_1,\ldots,\alpha_p\rangle
\)
for the $p$-dimensional hypercube formed by the $2^p$ vertices
\[
\{\, v+\sum_{j=1}^p c_j \alpha_j \;|\; c_j\in\{0,1\} \,\}.
\]
For example, the square with vertices  
$(0,0,0,0)$, $(1,0,0,0)$, $(0,1,0,0)$, $(1,1,0,0)$  
is written as  
\[
\langle (0,0,0,0); e_1,e_2\rangle.
\]  
Note that this representation is not unique. For example, the same square can also be written as  
\[
\langle (1,0,0,0); -e_1, e_2\rangle,
\]  
which corresponds to choosing a different origin and reversing the orientation of one of the basis vectors.

The Hamiltonian is
\begin{equation}
    H = -\sum_{e\in\mathrm{edges}} A_e \;- \sum_{c\in\mathrm{cubes}} B_c \;+\; h.c.~,
\end{equation}
where the edge term for  
\(
e = \langle v; e_j\rangle
\)
is
\begin{equation}
A_{\langle v; e_j\rangle}
=
\prod_{\substack{1\le k\le 4\\ k\neq j}}
X_{\langle v; e_j,e_k\rangle}\;
X^{\dagger}_{\langle v; e_j,-e_k\rangle},
\end{equation}
and the cube term for  
\(
c=\langle v; e_j,e_k,e_l\rangle, \; j<k<l,
\)
is
\begin{eqs}
    B_{\langle v; e_j,e_k,e_l\rangle}
    &=
    Z_{\langle v; e_j,e_k\rangle}\;
    Z_{\langle v+e_k; e_j,e_l\rangle}\;
    Z_{\langle v; e_k,e_l\rangle}\;
    Z^{\dagger}_{\langle v+e_l; e_j,e_k\rangle}\;
    Z^{\dagger}_{\langle v; e_j,e_l\rangle}\;
    Z^{\dagger}_{\langle v+e_j; e_k,e_l\rangle}.
\end{eqs}
Thus, $A_e$ is the product of the $X_f$ operators over the \change{six} two-dimensional plaquettes $f$ containing the edge $e$, while $B_c$ is the product of the $Z_f$ operators over the six plaquettes $f$ in the boundary of the cube $c$, with signs determined by the orientation.
One could verify directly that all $A_e$ commute with all $B_c$, so the Hamiltonian is a stabilizer code.

Acting with a single $Z_f$ changes the eigenvalues of the four adjacent edge terms $A_e$ on $\partial f$, creating four violated stabilizers arranged along a closed loop. This corresponds to an $e$-loop excitation.  
Likewise, acting with $X_f$ flips the eigenvalues of the four cube terms $B_c$ sharing $f$, creating an $m$-loop excitation on the dual lattice.  
Both $e$- and $m$-loops are extended one-dimensional excitations, characteristic of the loop-only toric code in $(4{+}1)$D.  
No point-like excitations appear in this model.

Using the simplicial cohomology notation introduced in Refs.~\cite{Chen2023Highercup, sun2025cliffordQCA}, the stabilizers on an arbitrary triangulation take the compact form
\begin{equation}
    A_e = \prod_f X_f^{\,\delta\boldsymbol{e}(f)} := X_{\delta\boldsymbol{e}},
    \qquad
    B_t = \prod_f Z_f^{\,\boldsymbol{f}(\partial t)} := Z_{\partial t}.
\end{equation}
Here, $A_e$ multiplies $X_f$ over the coboundary of the edge $e$, while $B_t$ multiplies $Z_f$ over the boundary of the tetrahedron $t$, with the orientation signs already encoded in the definitions of the boundary and coboundary operators.  
This compact representation will be particularly convenient for our computation of the loop statistics in later sections.

\subsection{$(4{+}1)$D fermionic-loop toric code via condensation}\label{sec: (4+1)D fermionic-loop toric code via condensation}

In this subsection, we construct a stabilizer model realizing the fermionic-loop toric code on the four-dimensional hypercubic lattice, with a $\mathbb{Z}_4$ qudit placed on every 2-dimensional face.  
We show that the resulting model implements the twisted 2-form $\mathbb{Z}_2$ gauge theory associated with the nontrivial cohomology class
\begin{equation}
    \frac{1}{4}\, b_2 \cup \delta b_2
    =
    \frac{1}{2}\, b_2 \cup (b_2 \cup_1 b_2)
    =
    \frac{1}{2}\, \mathrm{Sq}^2 \mathrm{Sq}^1 b_2
    \;\in\;
    H^5(K(\mathbb{Z}_2,2), \mathbb{R}/\mathbb{Z})\simeq\ZZ_2~.
\end{equation}
Our construction begins with the untwisted $(4{+}1)$D $\mathbb{Z}_4$ toric code reviewed in the previous subsection.  
We then condense the $e^2 m^2$ loop excitation.  
Operationally, this is achieved by introducing, for every face $f$, a hopping term $C_f$ that transports an $e^2 m^2$ loop across $f$.  
For a face written as $f=\langle v; e_j, e_k\rangle$, the term takes the form
\begin{equation}
    C_{\langle v; e_j,e_k\rangle}
    =
    X^2_{\langle v; e_j,e_k\rangle}\,
    Z^2_{\langle v; -e_l,-e_m\rangle},
\end{equation}
where $\{j,k,l,m\}=\{1,2,3,4\}$.  
Thus, each vertex $v$ contributes six such hopping terms:
\begin{equation}
    C_{\langle v; e_1,e_2\rangle},\;
    C_{\langle v; e_1,e_3\rangle},\;
    C_{\langle v; e_1,e_4\rangle},\;
    C_{\langle v; e_2,e_3\rangle},\;
    C_{\langle v; e_2,e_4\rangle},\;
    C_{\langle v; e_3,e_4\rangle}.
\end{equation}

Condensing the $e^2 m^2$ loop restricts the full stabilizer group of the $\mathbb{Z}_4$ toric code to the subgroup that commutes with all $C_f$.  
This commuting subgroup is generated by operators of the form
\begin{equation}
    A_{\langle v; -e_j\rangle}\,
    B_{\langle v; e_k,e_l,e_m\rangle},
    \qquad
    B_c^2,
\end{equation}
where $\{j,k,l,m\}=\{1,2,3,4\}$ and $c$ is a three-dimensional cube. For each vertex $v$, the first operator gives rise to four distinct stabilizers:
\begin{align}
    A_{\langle v; -e_1\rangle}\, B_{\langle v; e_2,e_3,e_4\rangle}, \quad
    A_{\langle v; -e_2\rangle}\, B_{\langle v; e_1,e_3,e_4\rangle}, \quad
    A_{\langle v; -e_3\rangle}\, B_{\langle v; e_1,e_2,e_4\rangle}, \quad
    A_{\langle v; -e_4\rangle}\, B_{\langle v; e_1,e_2,e_3\rangle}.
\end{align}

The Hamiltonian of the condensed theory becomes
\begin{equation}
    H_{\mathrm{condensed}}
    =-\!\!\sum_{\substack{\langle v; -e_j\rangle\in\mathrm{edges}\\ k,l,m\neq j}}
        A_{\langle v; -e_j\rangle} B_{\langle v; e_k,e_l,e_m\rangle}
      \;-\!
        \sum_{c} B_c^2
      \;-\!
        \sum_{f} C_f
      \;+\; \mathrm{h.c.}
\end{equation}
For convenience, we denote the first type of stabilizer by  
\begin{equation}
    G_e := A_{\langle v; -e_j\rangle}\,
           B_{\langle v; e_k,e_l,e_m\rangle}.
\end{equation}

\paragraph*{Loop excitations.}
The condensed theory supports two types of loop excitations. 

\begin{enumerate}
    \item \textbf{Charge loops.}  
    A charge loop is created by the membrane operator
    \begin{equation}
        V^C_f := Z_f^2,
    \end{equation}
    which changes the eigenvalues of $G_e$ for all edges $e \in \partial f$.  
    Thus, $V^C_f$ creates a charge loop along the boundary $\partial f$.  
    Since $(V^C_f)^2 = 1$, these loops have a manifest $\mathbb{Z}_2$ fusion rule.

    \item \textbf{Flux loops.}  
    Flux loops are created by membrane operators supported on faces
    $f = \langle v; -e_j, -e_k \rangle$:
    \begin{equation}
        V^F_{\langle v; -e_j,-e_k\rangle}
        :=
        X_{\langle v; -e_j,-e_k\rangle}\,
        Z_{\langle v; e_l,e_m\rangle},
    \end{equation}
    with $\{j,k,l,m\}=\{1,2,3,4\}$.  
    These operators commute with all hopping terms $C_f$ but violate certain stabilizers $G_e$ and $B_c^2$ whenever $f$ lies in the boundary of cube $c$.  
    The violated stabilizers form a loop on the dual lattice, so $V^F_f$ creates a flux-loop excitation supported on the coboundary $\delta \bface$.
\end{enumerate}

\paragraph*{Algebraic-topology formulation.}

Using the algebraic-topology notation of Refs.~\cite{CET2021, Chen2023Highercup, sun2025cliffordQCA}, the hopping term becomes
\begin{equation}
    C_f=
    X_f^{\,2}
    \prod_{f'} Z_{f'}^{2\int \boldsymbol{f}' \cup \boldsymbol{f}},
\end{equation}
and the Hamiltonian can be rewritten as
\begin{equation}
    H_{\mathrm{condensed}}
    = -\sum_e G_e
      \;- \sum_t B_t^{2}
      \;- \sum_f C_f
      \;+\; \mathrm{h.c.},
\label{eq: H_condensed fermionic-loop TC in 4+1D}
\end{equation}
where $t$ stands for a tetrahedron, and the first stabilizer can be expressed as
\begin{equation}
    G_e := A_e \prod_t B_t^{\int \boldsymbol{e} \cup \boldsymbol{t}}
    = X_{\delta \be} \prod_t Z_{\partial t}^{\int \boldsymbol{e} \cup \boldsymbol{t}} = X_{\delta \be}  \prod_{f} Z_{f}^{\int \delta \be \cup \bface}
    ~.
\label{eq: Ge in (4+1)D}
\end{equation}
The two membrane operators take the compact form
\begin{equation}
    V^C_f = Z_f^2,
    \qquad
    V^F_f = X_f \prod_{f'} Z_{f'}^{\int \boldsymbol{f} \cup \boldsymbol{f}'}.
\end{equation}
A subtlety arises when analyzing the fusion rule of loops.  
For the charge loop created by $V^C_f$, the fusion group is manifestly $\mathbb{Z}_2$, since $(V^C_f)^2 = 1$ as an operator. 
The flux loop created by $V^F_f$, however, behaves differently: it obeys $(V^F_f)^4 = 1$, not $(V^F_f)^2 \neq 1$, at the operator level, suggesting at first glance a $\mathbb{Z}_4$ fusion structure.  
The actual fusion rule, however, is determined not by the operator algebra alone but by how these operators act on ground states.  
One may verify that for any two-dimensional disk $D$ in the dual lattice, the operator  
\begin{equation}
    \prod_{f \in D^*} (V^F_f)^2
\end{equation}
can be written as a product of stabilizers supported within $D$, up to local operators near the boundary $\partial D$, where $D^*$ denotes the Poincaré dual.  
Consequently, the doubled flux loop created by $\prod_{f\in D^*} (V^F_f)^2$ is topologically trivial—it lies in the same \emph{superselection sector} as the vacuum loop excitation~\cite{Kitaev:2005hzj}.

To make this structure explicit, we decorate $V^F_f$ by a correction near the coboundary $\delta \bface$:
\begin{equation}
    \tilde{V}^F_f:=
    V^F_f \prod_{f'} Z_{f'}^{\int \boldsymbol{f}' \cup_1 \delta \bface}
    =
    X_f \prod_{f'}
    Z_{f'}^{\int \boldsymbol{f} \cup \boldsymbol{f}'
        + \boldsymbol{f}' \cup_1 \delta \bface}
    =
    X_f \prod_{f'}
    Z_{f'}^{\int \boldsymbol{f}' \cup \boldsymbol{f}
        - \delta \boldsymbol{f}' \cup_1 \bface},
\label{eq: modified flux loop in (4+1)D fermionic-loop toric code}
\end{equation}
where we have used the Leibniz rule of higher cup products~\cite{Steenrod1947, Gaiotto:2015zta, Tsui:2019ykk, Chen2020}
\begin{equation}
    \delta(A_2 \cup_1 B_2)
    =
    \delta A_2 \cup_1 B_2
    + A_2 \cup_1 \delta B_2
    - A_2 \cup B_2
    + B_2 \cup A_2~.
\end{equation}
We can check
\begin{equation}
    (\tilde{V}^F_f)^{2}
    =
    X_f^{2}
    \prod_{f'} Z_{f'}^{2\int \boldsymbol{f}' \cup \boldsymbol{f}
      - \delta \boldsymbol{f}' \cup_1 \bface}
    =
    C_f
    \prod_t B_t^{2\int \bt \cup_1 f},
\end{equation}
which is a product of stabilizers.  
Therefore, the flux loop created by $\tilde{V}^F_f$ also obeys a $\mathbb{Z}_2$ fusion rule.
Since the decoration in $\tilde{V}^F_f$ is supported only near the coboundary $\delta f$, $\tilde{V}^F_f$ remains a membrane operator that creates a loop excitation. To make this more explicit, it is convenient to choose different generators of the stabilizers in the Hamiltonian. In particular, we may rewrite the Hamiltonian as
\begin{equation}
    H'_{\mathrm{condensed}}
    = -\sum_e G_e
      \;- \sum_t B_t^{2}
      \;- \sum_f \tilde{C}_f
      \;+\; \mathrm{h.c.},
      \label{eq: H'_condensed fermionic-loop TC in 4+1D}
\end{equation}
with
\begin{equation}
    \tilde{C}_f=
    X_f^{\,2}
    \prod_{f'} Z_{f'}^{2\int \boldsymbol{f}' \cup \boldsymbol{f}+ \bface \cup_1 \delta \bface'}
    = C_f \prod_t B_t^{2 \int \bface \cup_1 \bt}
    ~.
\end{equation}
The operator $\tilde{C}_f$ differs from $C_f$ only by multiplication with appropriate $B_t^2$ terms, and therefore yields an equivalent Hamiltonian with the same ground-state subspace.
It is straightforward to verify that $\tilde{V}^F_{f_1}$ commutes with $\tilde{C}_{f_2}$:
\begin{eqs}
    [\tilde{V}^F_{f_1}, \tilde{C}_{f_2}] &= \left(\tilde{V}^F_{f_1}\right)^{-1} \left(\tilde{C}_{f_2}\right)^{-1} \tilde{V}^F_{f_1} \tilde{C}_{f_2} 
    =(-1)^{\int \bface_1 \cup \bface_2 + \bface_2 \cup_1 \delta \bface_1 + \bface_2 \cup \bface_1 + \delta \bface_2 \cup_1 \bface_1}
    =(-1)^{\int \delta (\bface_1 \cup_1 \bface_2)} = 1~.
    \label{eq: commutation between V_F and C}
\end{eqs}
Therefore, in the Hamiltonian $H'_{\mathrm{condensed}}$, the operator $\tilde{V}^F_f$ violates only the $B_t^2$ stabilizers for tetrahedra $t$ whose boundary contains $f$. This makes explicit that $\tilde{V}^F_f$ creates a well-defined loop excitation localized along the coboundary of $f$.

\subsection{Fermionic loop statistics}\label{sec: Fermionic loop statistics in (4+1)D}

\begin{figure}[tbh]
    \centering
    \vspace{-3em}
    \includegraphics[width=0.85\textwidth]{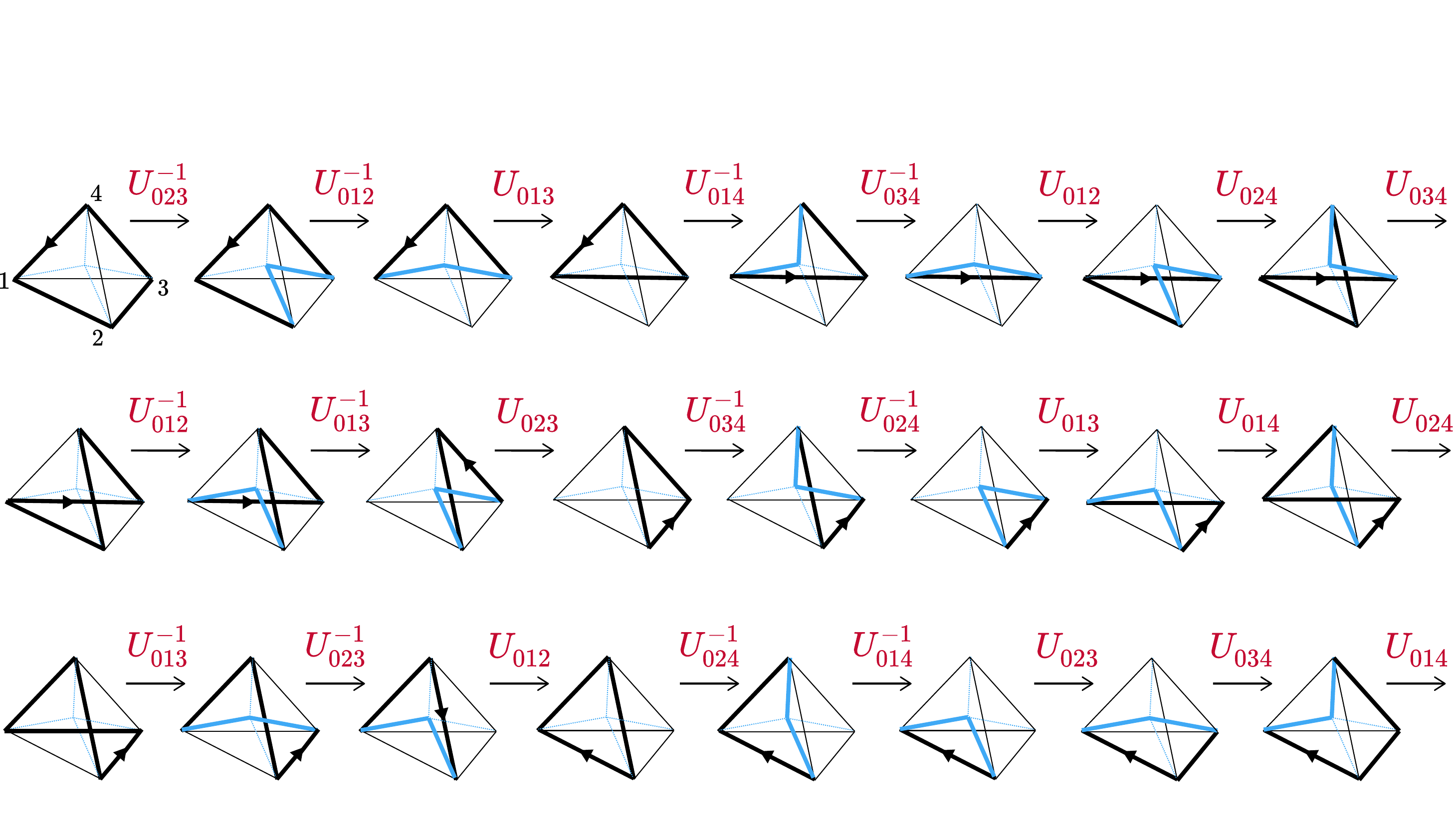}
    \vspace{-2em}
    \caption{The 24-step process for detecting the statistics of loop excitations with $\ZZ_2$ fusion in three spatial dimensions and higher (adapted from Ref.~\cite{kobayashi2024generalized}). The loop excitations are supported on the edges of the centered tetrahedron $\partial\langle 01234\rangle$. For $\ZZ_2$ loops, reversing the orientation does not change the physical configuration, so the initial and final states, related by an orientation flip, represent the same loop.}
    \label{fig: 24 step process}
\end{figure}

In $(3{+}1)$D and higher dimensions, the loop statistic is determined by the loop-flipping process,
\begin{eqs}
    \mu_{24} := ~& U_{014} U_{034} U_{023} U_{014}^{-1} U_{024}^{-1} U_{012} U_{023}^{-1} U_{013}^{-1} \\
    \times & U_{024} U_{014} U_{013} U_{024}^{-1} U_{034}^{-1} U_{023} U_{013}^{-1} U_{012}^{-1} \\
    \times & U_{034} U_{024} U_{012} U_{034}^{-1} U_{014}^{-1} U_{013} U_{012}^{-1} U_{023}^{-1}~,
    \label{eq: 24-step process}
\end{eqs}
where each unitary $U_{ijk}$ creates a loop excitation on $\partial\langle ijk\rangle=\langle ij\rangle+\langle jk\rangle-\langle ik\rangle$. 
This process is shown explicitly in Fig.~\ref{fig: 24 step process}.
{\change
The physical intuition behind Eq.~\eqref{eq: 24-step process} is provided in Appendix~\ref{app: statistics as anomaly}. 
}

When the membrane operators $U_f$ are Pauli operators, their commutator
\begin{equation}
    [U_{f_1}, U_{f_2}] := U_{f_1}^{-1} U_{f_2}^{-1} U_{f_1} U_{f_2}
\end{equation}
is a $U(1)$ phase, and the 24-step process simplifies to
\begin{equation}
    \mu_{24}
    = 
    [U_{012}, U_{034}]^{2}\,
    [U_{013}, U_{024}]^{2}\,
    [U_{014}, U_{023}]^{2}.
    \label{eq: 24-step process in Pauli}
\end{equation}
We now prove the following.
\begin{theorem}
    The flux loop created by $\tilde{V}^F_f$ in Eq.~\eqref{eq: modified flux loop in (4+1)D fermionic-loop toric code} has fermionic loop statistics, $\mu_{24}=-1$.
\end{theorem}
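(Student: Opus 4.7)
The plan is to evaluate $\mu_{24}$ directly at the Pauli level on the closed spatial 4-manifold $\partial\Delta^5\cong S^4$, whose simplicial structure contains $\langle 01234\rangle$ as a top-dimensional simplex. Each $U_{ijk}=\tilde{V}^F_{\langle ijk\rangle}$ is a $\ZZ_4$ Pauli operator, so each commutator $[U_{ijk},U_{lmn}]$ is a fourth root of unity and its square lies in $\{\pm 1\}$. Eq.~\eqref{eq: 24-step process in Pauli} therefore reduces $\mu_{24}$ to a product of three squared commutators over the complementary 2-face pairs $\{012,034\},\{013,024\},\{014,023\}$ of $\langle 01234\rangle$.

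Writing $\tilde{V}^F_f = X_f\prod_{f'}Z_{f'}^{n_f(f')}$ with $n_f(f')=\int(\bface\cup\bface'+\bface'\cup_1\delta\bface)$ read off from Eq.~\eqref{eq: modified flux loop in (4+1)D fermionic-loop toric code}, the $\ZZ_4$ relation $Z_fX_f=iX_fZ_f$ gives $\tilde{V}^F_{f_1}\tilde{V}^F_{f_2}=i^{n_{f_1}(f_2)-n_{f_2}(f_1)}\tilde{V}^F_{f_2}\tilde{V}^F_{f_1}$, and hence $[\tilde{V}^F_{f_1},\tilde{V}^F_{f_2}]^2=(-1)^{n_{f_1}(f_2)+n_{f_2}(f_1)}$. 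Summing over the three pairs yields $\mu_{24}=(-1)^S$ with
\begin{equation*}
S=\sum_{\{f_1,f_2\}}\int\bigl(\bface_1\cup\bface_2+\bface_2\cup\bface_1+\bface_1\cup_1\delta\bface_2+\bface_2\cup_1\delta\bface_1\bigr).
\end{equation*}

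The final step is to evaluate $S$ on $\partial\Delta^5$ via the Steenrod combinatorial formulas $(\bface\cup\bface')(\langle v_0\cdots v_4\rangle)=\bface(\langle v_0v_1v_2\rangle)\bface'(\langle v_2v_3v_4\rangle)$ and $(\bface'\cup_1\delta\bface)(\langle v_0\cdots v_4\rangle)=\bface'(\langle v_0v_3v_4\rangle)(\delta\bface)(\langle v_0v_1v_2v_3\rangle)+\bface'(\langle v_0v_1v_4\rangle)(\delta\bface)(\langle v_1v_2v_3v_4\rangle)$. Since each pair of 2-simplices shares only vertex $0$, the front-back matching $v_2=$ end of $f_1=$ start of $f_2$ required by the cup product fails on every ordered 4-simplex of $\partial\Delta^5$, so all cup-product terms vanish. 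The cup-$1$ terms similarly vanish for the interleaving pairs $\{013,024\}$ and $\{014,023\}$ because the vertex-ordering constraints cannot be simultaneously satisfied. The only surviving contribution is $(\bface_{034}\cup_1\delta\bface_{012})(\langle 01234\rangle)=\bface_{034}(\langle 034\rangle)\cdot(\delta\bface_{012})(\langle 0123\rangle)\equiv 1\pmod 2$, and one verifies by direct enumeration that it vanishes on every other top-dimensional simplex of $\partial\Delta^5$. Therefore $S\equiv 1\pmod 2$ and $\mu_{24}=-1$.

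The main technical obstacle is this last enumeration: one must check cup-$1$ evaluations on all six top-dimensional simplices of $\partial\Delta^5$ for each of the three pairs, and confirm that the single odd contribution from $\{012,034\}$ on $\langle 01234\rangle$ survives. Conceptually, the asymmetry reflects that only the non-interleaving pair $\{012,034\}$, whose complementary vertex sets $\{1,2\}$ and $\{3,4\}$ appear in ascending order within $\langle 01234\rangle$, admits a nontrivial cochain evaluation on an ordered 4-simplex. This is the combinatorial shadow of the $w_3=\Sq^2\Sq^1$ fermionic sign recorded in Table~\ref{tab:flux-anomaly-summary}.
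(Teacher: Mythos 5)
Your Pauli-level reduction and the enumeration on $\partial\Delta^5$ are arithmetically sound: the squared commutator is indeed $(-1)^{n_{f_1}(f_2)+n_{f_2}(f_1)}$, every plain cup term vanishes because the shared vertex $0$ is the \emph{minimal} vertex of both triangles and so can never occupy the middle slot $v_2$ of an ascending $4$-simplex, and among the twelve cup-$1$ evaluations only $(\bface_{034}\cup_1\delta\bface_{012})(\langle 01234\rangle)$ survives, giving $S\equiv 1$ and $\mu_{24}=-1$; I have checked the enumeration and it is correct. This is a genuinely different route from the paper, which promotes $\tilde{V}^F$ to arbitrary $2$-cochains, packages the commutator as $i^{-\int\delta\lambda\cup_2\delta\lambda'}$, and reduces algebraically via Leibniz rules on an arbitrary triangulation of an arbitrary closed $M_4$; you trade that generality for a finite combinatorial check on the minimal triangulation of $S^4$.

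However, there is a genuine gap in the identification $U_{ijk}=\tilde{V}^F_{\langle ijk\rangle}$. The $24$-step invariant of Fig.~\ref{fig: 24 step process} requires each $U_{ijk}$ to create a loop excitation along $\partial\langle ijk\rangle$ of the (mesoscopic) complex, but $\tilde{V}^F_f$ creates its flux loop on the coboundary $\delta\bface$, i.e.\ on the dual lattice encircling the face $f$, not along $\partial f$ (that is what the charge operator $V^C_f=Z_f^2$ does). This is exactly why the paper insists that the triangles of the $24$-step process be read as dual simplices and sets $U_{012}=\tilde{V}^F_{\lambda_{012}}$ with $\lambda_{012}=(012)^*$ a Poincar\'e-dual cochain. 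As written, your product of squared commutators is a number attached to six single-face operators, and nothing in the argument shows that it equals the loop-flipping phase of the flux loop. The omission is repairable on your chosen lattice: by the self-duality of $\partial\Delta^5$ (each simplex corresponds to the simplex on the complementary vertex set), the dual $2$-cell of $\langle 0ij\rangle$ is the triangle $\langle 5kl\rangle$ with $\{k,l\}=\{1,2,3,4\}\setminus\{i,j\}$, the flux loop created by $\tilde{V}^F_{\langle 0ij\rangle}$ is precisely $\partial\langle 5kl\rangle$ in the dual complex, and your three pairs map to the three complementary pairs of the $24$-step process with apex $5$; with this observation your enumeration becomes a complete proof. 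Without it, the crucial link between the computed phase and the flux loop's statistics is missing.
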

\begin{proof}
In principle, this loop statistic could be evaluated directly on a four-dimensional hypercubic lattice. However, visualizing the 24-step process in four dimensions is impractical. We therefore work entirely in an algebraic formulation, which not only avoids this difficulty but also generalizes straightforwardly to higher dimensions; we will reuse the same strategy in Sec.~\ref{sec: Fermionic-loop toric codes in arbitrary dimensions}.

We extend the definition of the operator to an arbitrary $2$-cochain $\lambda\in C^2(M_4,\ZZ)$ (where $M_4$ is the spatial manifold) by
\begin{equation}
    \tilde{V}^F_\lambda
    =
    \prod_f X_f^{\lambda(f)}
    \prod_{f'} Z_{f'}^{\int \boldsymbol{f}'\cup\lambda
        - \delta\boldsymbol{f}'\cup_1 \lambda}
    :=
    X_\lambda
    \prod_{f'} Z_{f'}^{\int \boldsymbol{f}'\cup\lambda
        - \delta\boldsymbol{f}'\cup_1 \lambda}
    ~.
\end{equation}
A direct calculation gives the basic commutation relation
\begin{eqs}
    [\tilde{V}^F_{\lambda},\tilde{V}^F_{\lambda'}]
    =&~
    (\tilde{V}^F_{\lambda})^{-1}
    (\tilde{V}^F_{\lambda'})^{-1}
    \tilde{V}^F_{\lambda}
    \tilde{V}^F_{\lambda'} \\
    =&~
    i^{\int \lambda'\cup\lambda-\lambda\cup\lambda'
        + \delta\lambda\cup_1\lambda'
        - \delta\lambda'\cup_1\lambda} \\
    =&~
    i^{\int -\delta\lambda\cup_1\lambda'
        - \lambda\cup_1\delta\lambda'
        + \delta\lambda\cup_1\lambda'
        - \delta\lambda'\cup_1\lambda } \\
    =&~
    i^{\int -\lambda\cup_1\delta\lambda'
        - \delta\lambda'\cup_1\lambda }
    ~=~
    i^{-\int \delta\lambda\cup_2\delta\lambda'}~.
\label{eq: (4+1)D commutator of V}
\end{eqs}

Flux-loop excitations reside on the dual lattice, so the complexes in Fig.~\ref{fig: 24 step process} should be interpreted as dual simplices.
For the dual triangle $\langle 012\rangle$, we define $\lambda_{012}:=(012)^*$ as its Poincar\'e-dual cochain and write $U_{012}:=\tilde{V}^F_{\lambda_{012}}$.
Using Eq.~\eqref{eq: 24-step process in Pauli} and Eq.~\eqref{eq: (4+1)D commutator of V}, we obtain
\begin{equation}
    \mu_{24}
    =
    (-1)^{\int \delta\lambda_{012}\cup_2\delta\lambda_{034}
    + \delta\lambda_{013}\cup_2\delta\lambda_{024}
    + \delta\lambda_{014}\cup_2\delta\lambda_{023}}~.    
\end{equation}
We next decompose
\[
\delta\lambda_{012}
= \delta (012)^*
= (\partial\,012)^*
= (01)^* + (12)^* - (02)^*
= \nu_{01}+\nu_{12}-\nu_{02}~,
\]
where we denote $(ij)^*$ by $\nu_{ij}$, a $3$-cochain on the direct lattice.
If $i,j,k,l$ are all distinct, then $\nu_{ij}$ and $\nu_{kl}$ have disjoint support and hence $\nu_{ij}\cup_2\nu_{kl}=0$.
It follows that
\begin{eqs}
\mu_{24}
=&~
(-1)^{\int
(\nu_{01}+\nu_{02})\cup_2(\nu_{03}+\nu_{04})
+ (\nu_{01}+\nu_{03})\cup_2(\nu_{02}+\nu_{04})
+ (\nu_{01}+\nu_{04})\cup_2(\nu_{02}+\nu_{03})}
\\
=&~
(-1)^{\int
\nu_{02}\cup_2\nu_{03}
+ \nu_{03}\cup_2\nu_{02}
+ \nu_{02}\cup_2\nu_{04}
+ \nu_{04}\cup_2\nu_{02}
+ \nu_{03}\cup_2\nu_{04}
+ \nu_{04}\cup_2\nu_{03}}~.
\end{eqs}
Using the Leibniz rule for the $\cup_3$ product,
\begin{equation}
    \delta (A_3 \cup_3 B_3)
    =
    \delta A_3 \cup_3 B_3
    - A_3 \cup_3 \delta B_3
    - A_3 \cup_2 B_3
    - B_3 \cup_2 A_3~,
\end{equation}
we rewrite this as
\begin{eqs}
\mu_{24}
=&~
(-1)^{\int
\delta\nu_{02}\cup_3\nu_{03}
+ \nu_{02}\cup_3\delta\nu_{03}
+ \delta\nu_{02}\cup_3\nu_{04}
+ \nu_{02}\cup_3\delta\nu_{04}
+ \delta\nu_{03}\cup_3\nu_{04}
+ \nu_{03}\cup_3\delta\nu_{04}}~.
\end{eqs}
Let $\delta\nu_{ij}= p_i - p_j$ with $p_i:=(i)^*$.
Moreover, $\nu_{ij}\cup_3 p_k=0$ whenever $i,j,k$ are distinct.
Therefore,
\begin{eqs}
\mu_{24}
=&~
(-1)^{\int
p_0\cup_3\nu_{03}
+ \nu_{02}\cup_3 p_0
+ p_0\cup_3\nu_{04}
+ \nu_{02}\cup_3 p_0
+ p_0\cup_3\nu_{04}
+ \nu_{03}\cup_3 p_0}
\\
=&~ (-1)^{\int p_0 \cup_3 \nu_{03} + \nu_{03} \cup_3 p_0 }
= ~ (-1)^{\int p_0 \cup_4 \delta \nu_{03}}
=~
(-1)^{\int p_0\cup_4 p_0 }~.
\end{eqs}
Since $p_0$ is a $4$-cochain, the definition of the $\cup_4$ product implies $p_0 \cup_4 p_0 = p_0 \pmod{2}$.
Hence,
\begin{equation}
    \mu_{24} = (-1)^{\int (0)^*} = -1~.
\end{equation}
The structure of the calculation is transparent: higher cup products encode successive failures of commutativity of lower cup products.
Although $\tilde{V}^F_\lambda$ is defined using only ordinary cup products and $2$-cochains, the evaluation of $\mu_{24}$ forces---via coboundary identities and locality---the appearance of higher cup-$i$ products and higher-degree cochains.
Ultimately, $\mu_{24}$ detects a higher-order noncommutativity of the operators $\tilde{V}^F_\lambda$, yielding the fermionic loop statistics.
\end{proof}

\subsection{Field-theoretic description}\label{sec: Field-theoretic description in (4+1)D}

In this section, we provide a field-theoretic intuition for the above construction.
We start from the $\ZZ_4$ loop-only toric code, described by the topological action
\begin{equation}
    S = \frac{2\pi}{4} \int a_2 \cup \delta b_2~,
\label{eq: S standard Z4 TC in (4+1)D}
\end{equation}
where $a_2$ and $b_2$ are $\ZZ_4$-valued $2$-cochains on the spacetime manifold. 
Physically, $a_2$ and $b_2$ represent the worldsheets of $m$ and $e$ loops, respectively. The equations of motion are
\begin{equation}
    \delta a_2 = 0 \pmod{4}, \qquad \delta b_2 = 0 \pmod{4},
\end{equation}
indicating that both $e$ and $m$ are $\ZZ_4$ bosonic loops.

To condense the $e^2 m^2$ loop, we introduce a dynamical $\ZZ_2$ $3$-cocycle as described by the pair of cochains $(c_3,u)$ and couple it to the worldsheet of $e^2 m^2$:
\begin{equation}
    S_{\mathrm{condensed}} = \frac{2\pi}{4} \left(\int a_2 \cup \delta b_2 
    + 2(a_2+b_2) \cup c_3\right)+\pi\int \delta u\cup c_3~.
\label{eq: S condensed in (4+1)D}
\end{equation}
Physically, condensing the $e^2m^2$ loop means that we insert the operators that create and annihilate these loops everywhere in the path integral. The new dynamical cocycle is the Poincar\'e dual for the support of these operators.
{\change
This is analogous to the standard statement that fluctuating domain walls of a symmetry can be described by a dynamical gauge field in the transverse direction.}

Integrating out $c_3$ treats it as a Lagrange multiplier, imposing the constraint
\begin{equation}
    2 (a_2 + b_2) = 2\delta u \pmod{4}~.
\end{equation}
Thus we can write
\begin{equation}
    a_2 = b_2 + 2 \tilde{a}_2~,
\end{equation}
for some $\ZZ_2$-valued $2$-cochain $\tilde{a}_2$. Substituting back into the action gives
\begin{eqs}
    S_{\mathrm{condensed}}
    &= \frac{2\pi}{4} \int (b_2 + 2 \tilde{a}_2) \cup \delta b_2 = \pi \int \tilde{a}_2 \cup \delta b_2 
       + \frac{2\pi}{4} \int b_2 \cup \delta b_2~.
\label{eq: S' of (4+1)D DW TQFT}
\end{eqs}
Here $b_2$ is no longer constrained to be a $\ZZ_4$ cocycle: the equation of motion reduces to
\begin{equation}
    \delta b_2 = 0 \pmod{2}~.
\end{equation}
We may therefore regard $b_2$ as a $\ZZ_2$-valued $2$-cocycle.
The remaining theory is precisely the Dijkgraaf–Witten TQFT for a 2-form $\ZZ_2$ gauge symmetry, with topological term
\begin{equation}
    \frac{1}{4}  b_2 \cup \delta b_2 
    = \frac{1}{2}  \mathrm{Sq}^2 \mathrm{Sq}^1 b_2 
    \;\in\; H^5(B^2\ZZ_2, \RR/\ZZ)~,
\end{equation}
and furthermore, when integrated on closed oriented manifolds, this equals \cite{Kapustin:2017jrc, Gukov:2020btk}
\begin{equation}
     \frac{2\pi}{4}\int b_2\cup \delta b_2= \pi\int  w_3 \cup b_2~,
\end{equation}
where $w_3$ is the third Stiefel–Whitney class. If we use the last expression and substitute back into $S'$ in Eq.~\eqref{eq: S' of (4+1)D DW TQFT}, the action becomes
\begin{equation}
    S_{\mathrm{condensed}} =  \pi \int \tilde{a}_2 \cup \delta b_2 
        + \pi\int w_3 \cup b_2~,
\end{equation}
whose equation of motion, obtained by integrating out $b_2$, reads
\begin{equation}
    \delta \tilde{a}_2 = w_3 \pmod{2}~.
\end{equation}
This shows that the worldsheet $\tilde{a}_2$ is sourced by $w_3$ and therefore describes a fermionic loop excitation. {\change The relation between the dynamical field $\tilde{a}_2$ and $w_3$ can be regarded as a fractionalization similar to Refs.~\cite{Hsin:2019gvb,Hsin:2021qiy}. The connection between such fractionalization and fermionic loop statistics is explained in Ref.~\cite{CH21}.}

Now, we take a backward viewpoint: we first embed the target $\ZZ_2$ theory into a $\ZZ_4$ gauge theory that admits a Pauli-stabilizer realization, and then project back to $\ZZ_2$ by enforcing an order-$2$ constraint via condensation. This logic generalizes straightforwardly and applies uniformly in higher dimensions.
Our target theory is the nontrivial twisted $\ZZ_2$ 2-form gauge theory in Eq.~\eqref{eq: S' of (4+1)D DW TQFT}, whose twist is represented by the nontrivial cocycle in $H^5(B^2\ZZ_2,\RR/\ZZ)$.
Although the cochain expression $b_2\cup \delta b_2$ is quadratic, the coefficient $\frac{1}{4}$ obstructs a realization by a Pauli stabilizer Hamiltonian on $\ZZ_2$ qubits: it requires at least the $S=\sqrt{Z}$ gate.
A standard way around this is to \emph{trivialize the cocycle after a group extension}, i.e., to lift the $\ZZ_2$ gauge field to a $\ZZ_4$ gauge field.
Concretely, the same cocycle becomes trivial upon promoting $b_2$ to a $\ZZ_4$ 2-cocycle: for $\delta b_2=0$ mod 4,
\begin{equation}
    \left[\frac{1}{4}\, b_2 \cup \delta b_2 \right] = [0]
    \in H^5(B^2 \ZZ_4, \RR/\ZZ)~.
\end{equation}
Moreover, in a $\ZZ_4$ qudit Pauli algebra, fourth roots of unity are available, so the factor $\frac{1}{4}$ is compatible with a Pauli stabilizer realization.

This corresponds to the following $\ZZ_4$ action:
\begin{equation}
    S' = \frac{2\pi}{4} \int \bigl(a_2' \cup \delta b_2 + b_2 \cup \delta b_2\bigr)~,
\end{equation}
which is in the same topological phase as the standard $\ZZ_4$ loop-only toric code in Eq.~\eqref{eq: S standard Z4 TC in (4+1)D}.
Indeed, redefining $a_2 := a_2' + b_2$ brings $S'$ to $\frac{1}{4}\int a_2\cup \delta b_2$.

To recover the desired twisted $\ZZ_2$ theory, we impose that $a_2'$ takes values in the order-$2$ subgroup of $\ZZ_4$, i.e., $a_2' = 2\tilde a_2$ (equivalently $2a_2' = 0 \pmod{4}$).
This constraint can be enforced by introducing a $\ZZ_2$ Lagrange multiplier $c_3$:
\begin{equation}
    S'_{\mathrm{condensed}}
    = \frac{2\pi}{4}\int\left( \bigl(a_2' \cup \delta b_2 + b_2 \cup \delta b_2\bigr)
      + 2 a_2' \cup c_3\right)+\pi\int \delta u\cup  c_3~.
\label{eq: S' condensed in (4+1)D}
\end{equation}
Since $a_2 = a_2' + b_2$, Eq.~\eqref{eq: S' condensed in (4+1)D} provides a straightforward explanation of why the lattice construction amounts to condensing the $e^2m^2$ loop in Eq.~\eqref{eq: S condensed in (4+1)D}.
We emphasize that, although $\tfrac{1}{4}\, b_2 \cup \delta b_2$ is a trivial cocycle, it still influences the resulting theory once we condense the $2a_2'$ loop excitations.

This logic extends directly to any $(d{+}1)$D with $d\ge 4$.
The target theory is a twisted $\ZZ_2$ $(d-2)$-form gauge theory classified by
$H^{d+1}(B^{d-2}\ZZ_2,\RR/\ZZ)$:
\begin{eqs}
    S'_{\mathrm{target}}
    &= \pi\int \tilde a_2 \cup \delta b_{d-2}
      + \pi\int w_3 \cup b_{d-2} \\
    &= \pi\int \tilde a_2 \cup \delta b_{d-2}
      + \frac{2\pi}{4}\int \Bigl(
            b_{d-2}\cup_{d-5} b_{d-2}
          + b_{d-2}\cup_{d-4}\delta b_{d-2}
        \Bigr)~,
\end{eqs}
where the relation between $w_3$ and the stable $\ZZ_2$ cohomology operation (e.g. $Sq^2Sq^1$) is reviewed in Appendix~\ref{app: Calculation of stable Z2 operations}.
As before, we lift $b_{d-2}$ to a $\ZZ_4$ cocycle and consider
\begin{equation}
    S' = \frac{2\pi}{4}\int a_2' \cup \delta b_{d-2}
      + \frac{2\pi}{4}\int \Bigl(
            b_{d-2}\cup_{d-5} b_{d-2}
          + b_{d-2}\cup_{d-4}\delta b_{d-2}
        \Bigr)~.
\label{eq: S' for higher dimensions}
\end{equation}
When $b_{d-2}$ is $\ZZ_4$-closed (i.e. $\delta b_{d-2}=0\pmod{4}$), the second integral in Eq.~\eqref{eq: S' for higher dimensions} is a coboundary.
One convenient way to see this is to use the higher-cup coboundary identity to obtain
\begin{eqs}
    \delta\Bigl(b_{d-2}\cup_{d-4} b_{d-2}
      + b_{d-2}\cup_{d-3}\delta b_{d-2}\Bigr)
    = 2(-1)^d \Bigl(
        b_{d-2}\cup_{d-5} b_{d-2}
      + b_{d-2}\cup_{d-4}\delta b_{d-2}
      \Bigr)
      + \delta b_{d-2}\cup_{d-3}\delta b_{d-2}~, \nonumber
\end{eqs}
so dividing by $8$ and using $\delta b_{d-2}=0\pmod{4}$ gives
\begin{equation}
    \frac{1}{4}\Bigl(
        b_{d-2}\cup_{d-5} b_{d-2}
      + b_{d-2}\cup_{d-4}\delta b_{d-2}
    \Bigr)
    = \frac{(-1)^d}{8}\,
      \delta\Bigl(b_{d-2}\cup_{d-4} b_{d-2}
      + b_{d-2}\cup_{d-3}\delta b_{d-2}\Bigr)
    \pmod{1}~.
\end{equation}
Hence, it does not affect the topological phase, and Eq.~\eqref{eq: S' for higher dimensions} is in the same phase as the standard $\ZZ_4$ toric code $\tfrac{1}{4}\int a_2 \cup \delta b_{d-2}$.

Finally, to return to the desired twisted $\ZZ_2$ theory, we again impose $a_2' = 2\tilde a_2$ by adding a Lagrange multiplier $c_{d-1}$:
\begin{equation}
    S'_{\mathrm{condensed}}
    = \frac{2\pi}{4}\int\left( a_2' \cup \delta b_{d-2}
      +\Bigl(
            b_{d-2}\cup_{d-5} b_{d-2}
          + b_{d-2}\cup_{d-4}\delta b_{d-2}
        \Bigr)
      +  2a_2' \cup c_{d-1}\right)+\pi\int \delta u\cup c_{d-1}~.
\label{eq: S' condensed in (d+1)D}
\end{equation}
In Sec.~\ref{sec: Fermionic-loop toric codes in arbitrary dimensions}, we implement this derivation on a lattice and construct the Pauli stabilizer Hamiltonian of the fermionic-loop toric code in $(d{+}1)$D using $\ZZ_4$ qudits.

\section{Dijkgraaf--Witten 2-form gauge theories in $(4{+}1)$D}
\label{sec: Dijkgraaf-Witten 2-form gauge theories}

In this section, we construct Pauli stabilizer Hamiltonians for twisted $2$-form Dijkgraaf--Witten gauge theories in $(4{+}1)$ spacetime dimensions. 
These theories are classified by cocycles in $H^5(B^2G,\RR/\ZZ)$, where $G$ is the $2$-form gauge group and $B^2G=K(G,2)$ is the corresponding Eilenberg--MacLane space. 
The $(4{+}1)$D fermionic-loop toric code of Sec.~\ref{sec: (4+1)D fermionic-loop toric code} is recovered as the special case $G=\ZZ_2$.

Our construction proceeds in two steps.
First, at the field-theory level, we show how a given cocycle class can be realized by gauging appropriate $2$-form symmetries of standard toric codes with enlarged gauge groups; this analysis identifies the bosonic loop excitations that should be condensed.
Second, we implement the corresponding condensation on the lattice: starting from a stack of loop-only toric codes with qudits on $2$-simplices, we condense the selected bosonic loops to obtain a stabilizer Hamiltonian whose ground state realizes the desired twisted $2$-form gauge theory.

\subsection{Condensing layers of toric code}

As discussed in Refs.~\cite{Johnson-Freyd:2021tbq, Cordova:2023bja, Hsin2025NonAbelian}, if a $(4{+}1)$D topological order supports only loop excitations, then these excitations must obey Abelian fusion rules. 
Consequently, such loop-only topological orders are necessarily Abelian. 
Their fusion rules are therefore described by a finite Abelian group
\begin{equation}
    G=\prod_{j=1}^s \ZZ_{N_j}~.
\end{equation}
Without loss of generality, we impose the divisibility condition $N_j\mid N_{j+1}$ for $1\le j\le s-1$, i.e., we work in the invariant factor decomposition of $G$.

The Dijkgraaf--Witten classification of $2$-form $G$ gauge theories is
\begin{equation}
    H^5(B^2G,\RR/\ZZ)
    = \prod_j \ZZ_{(N_j,2)} \prod_{j<k} \ZZ_{(N_j,N_k)},
\end{equation}
where $(\alpha,\beta)$ denotes the greatest common divisor of $\alpha$ and $\beta$.
Let $b_j$ be a $\ZZ_{N_j}$-valued $2$-form field. A convenient representative cocycle is
\begin{equation}
    \sum_j \frac{l_j}{N_j}\int b_j \cup \left(\frac{\delta b_j}{N_j}\right)
    + \sum_{j<k} \frac{l_{jk}}{N_j}\int b_j \cup \left(\frac{\delta b_k}{N_k}\right),
\label{eq: H^5 of general G}
\end{equation}
with $l_j=0,1,\dots,(N_j,2)-1$ and $l_{jk}=0,1,\dots,(N_j,N_k)-1$.

Here $\delta b_j/N_j$ denotes the Bockstein cochain $\mathrm{Bock}(b_j)\in C^3(M,\ZZ_{N_j})$ associated with the short exact sequence
$0\to \ZZ_{N_j}\to \ZZ_{N_j^2}\to \ZZ_{N_j}\to 0$.
Concretely, choose any lift $\tilde b_j\in C^2(M,\ZZ_{N_j^2})$ satisfying $\tilde b_j\equiv b_j \pmod{N_j}$.
Then $\delta\tilde b_j$ is divisible by $N_j$, and we define
\[
    \frac{\delta b_j}{N_j}
    := \mathrm{Bock}(b_j)
    := \frac{1}{N_j}\,\delta\tilde b_j
    \in C^3(M,\ZZ_{N_j})~.
\]

To realize a twisted $2$-form gauge theory, we introduce dynamical $2$-form gauge fields $a_j,b_j\in C^2(M^5,\ZZ_{N_j})$ and start from the untwisted action
\begin{equation}
    \sum_j \frac{1}{N_j}\int a_j\cup \delta b_j~.
\end{equation}
We then add the cocycle term in Eq.~\eqref{eq: H^5 of general G}. The full action becomes
\begin{align}
    S
    &=
    2\pi \sum_j\left(
        \frac{1}{N_j}\int a_j \cup \delta b_j
        + \frac{l_j}{N_j}\int b_j \cup \frac{\delta b_j}{N_j}
    \right)
    +2\pi\sum_{j<k}\frac{l_{jk}}{N_j}\int b_j \cup \frac{\delta b_k}{N_k}~.\label{action of twist theory}
\end{align}

We now explain how to obtain an arbitrary twisted $(4{+}1)$D loop-only theory by condensing bosonic loop excitations in a stack of toric codes. 
While there are many possible condensation schemes, we present a systematic choice that will also be used in our lattice stabilizer construction.

We start from a stack of $\ZZ_{N_j^2}$ toric codes, with action
\begin{equation}
    S_0
    =
    2\pi\sum_{j}\frac{1}{N_j^2}\int a_j\cup \delta b_j~.
\end{equation}
For each $j$, we condense the bosonic loop excitation $\varphi_j^{N_j}$, where
\begin{equation}
    \varphi_j
    :=
    m_j\, e_j^{\,l_j}\prod_{k>j} e_k^{\,l_{jk}N_k/N_j}~.
\end{equation}
In the field-theoretic description, this condensation can be implemented by introducing auxiliary dynamical fields $p_j$ and adding the term
\begin{equation}
    2\pi\int p_j\cup\left(
        \frac{1}{N_j}a_j
        -\frac{l_j}{N_j}b_j
        -\sum_{k>j}\frac{l_{jk}}{N_k}b_k
    \right).
\end{equation}
Integrating out $p_j$ imposes the constraint
\begin{equation}
    a_j - l_j b_j - \sum_{k>j} l_{jk}\frac{N_j}{N_k} b_k
    = N_j x_j~,
\end{equation}
where $x_j$ is an integer-valued field. Substituting this back into $S_0$, we obtain
\begin{align}
    S
    &=
    2\pi\sum_{j}\frac{1}{N_j^2}\int
    \left(
        l_j b_j
        +\sum_{k>j} l_{jk}\frac{N_j}{N_k} b_k
        + N_j x_j
    \right)\cup \delta b_j \nonumber\\
    &=
    2\pi\sum_{j}\frac{l_j}{N_j^2}\int b_j\cup \delta b_j
    +2\pi\sum_{j<k}\frac{l_{jk}}{N_jN_k}\int b_k\cup \delta b_j
    +2\pi\sum_j\frac{1}{N_j}\int x_j\cup \delta b_j~,
\end{align}
which reproduces precisely the general twisted $(4{+}1)$D loop-only theory specified by the cocycle data in Eq.~\eqref{eq: H^5 of general G}.
 
\subsection{Pauli Stabilizer Formalism}

Having obtained the twisted $(4{+}1)$D theory action from condensing the loop excitations in the field theory, now we repeat the process in the Hamiltonian formalism. We obtain the stabilizer model for 2-form gauge theories by condensing certain excitations from multiple layers of toric code. 

We consider $s$ layers of toric code, with the group of $i$th toric code $\Z_{N_i^2}$. For each toric code, there is a group of $X$ and $Z$ operators on each face $f$, which we denote by $X_{j,f}$ and $Z_{j,f}$. From the stabilizer model of the $j$th toric code, we have the following commuting set of operators:
\begin{align}
    A_{j,e}=\prod_{f} X_{j,f}^{\delta \boldsymbol{e}(f)}:=X_{j,\delta \boldsymbol{e}}, 
    \qquad
    B_{j,t}= \prod_f Z_{j,f}^{\,\boldsymbol{f}(\partial t)} := Z_{j,\partial t}.\label{Aje and Bjt}
\end{align}
The hopping terms for $e_j$ and $m_j$ loop excitations at $f$ are given by
\begin{align}
    e_j:\ X_{j,f},
    \qquad
    m_j:\ \prod_{f'}Z_{j,f'}^{\int \bface'\cup \bface}.
\end{align}
Here $X_{j,f}$ creates a loop excitation at its coboundary $\delta f$, and $Z_{j,f}$ creates a loop excitation at its boundary $\partial f$. The hopping operator of a mixture of different types of loop excitations is given by the product of individual operators. In particular, the hopping operators for 
\begin{align}
    \varphi_j^{N_j}=m_j^{N_j}e_j^{N_jl_j}\prod_{k>j} e_k^{l_{jk}N_k}
\end{align}
are given by
\begin{eqs}
    W_{j,f}=\left(\prod_{f'}Z_{j,f'}^{{\change N_j\int \bface'\cup \bface}}\right)
    X_{j,f}^{l_jN_j}\prod_{k>j}X_{k,f}^{l_{jk}N_k}~.
\end{eqs}
These operators commute with each other under the assumption $N_j\mid N_{j+1}$. However, they only commute with a subgroup of the stabilizers in Eq.~\eqref{Aje and Bjt}. To identify this subgroup, consider general combinations of $A_{j,e}$ and $B_{j,t}$ as follows
\begin{eqs}
    G_{e}^{(n,m)}=\prod_{j} X_{j,\delta e}^{n_j}\prod_{k,f}Z_{k,f}^{m_k\int \delta \be \cup \bface}~,
\end{eqs}
where $n=(n_1, n_2,\cdots,n_s)$ and $m=(m_1,m_2,\cdots, m_s)$ are $s$-tuples that label the type of combination. 

For $G_e^{(n,m)}$ to be included in the condensed stabilizer model, we require it to commute with each $W_{j,f}$. By calculating the commutators $[G_e^{(n,m)}, W_{j,f}]$, we obtain the following condition for $(n,m)$, 
\begin{eqs}
    \sum_j\frac{\delta_{ij}}{N_j}n_j-\sum_{k\geq i}\frac{l_{ik}}{N_k}m_k\in \ZZ,
\end{eqs}
for each $1\leq i\leq s$. We can rewrite this condition into a compact matrix equation
\begin{eqs}
    n=NLN^{-1}m+Np~, \label{mat criterion}
\end{eqs}
where $n,m,p\in \ZZ^{s}$, and the matrices $N$ and $L$ are defined as
\begin{align}
    N=\mathrm{diag}(N_1,N_2,\cdots,N_s),
\end{align}
\begin{align}
    L=
    \begin{pmatrix}
        l_1 & l_{12} & \cdots & l_{1s}\\
        0 & l_2 & \cdots & l_{2s}\\
        0&0&\ddots & \vdots\\
        0&0&\cdots&l_s
    \end{pmatrix}.
\end{align}

The tuples $(n,m)$ that satisfy the equation above form a lattice $\Lambda$ in $\Z^{2s}$. They give all possible combinations of the original stabilizers that commute with the condensed hopping operators. Finally, the Hamiltonian of the condensed theory becomes
\begin{eqs}
    H_{\mathrm{condensed}}=-\sum_{j,f} W_{j,f}-\sum_{\substack{e\in\mathrm{edges} \\ (n,m)\in\Lambda}} G_e^{(n,m)} +\mathrm{h.c.}~
\end{eqs}

\paragraph*{Loop Excitations.}
For each $(n,m)\in \Lambda$, there is a type of loop excitations, created by the following membrane operator
\begin{eqs}
    U_{f}^{(n,m)}=\prod_{j}X_{j,f}^{n_j}\prod_{k,f'}Z_{k,f'}^{m_k\int \bface\cup \bface'-\delta \bface \cup_1 \bface'}~.
\end{eqs}
It commutes with the $W_{j,f}$ stabilizers, and only violates $G_e^{(n,m)}$ whenever $e\in\delta f$. Therefore, $U_f^{(n,m)}$ creates a $(n,m)$-type loop along the coboundary $\delta f$. The fusion group is given by $\Lambda$ modulo the group generated by condensed excitations. 

{\change
\subsubsection{Example from $\ZZ_2\times \ZZ_2$ Gauge Theory}

Now, let us illustrate the above general formalism by applying it to a specific model, namely, the $\ZZ_2\times \ZZ_2$ type-II twisted gauge theory. In terms of field theory, it is defined by the following action,
\begin{eqs}
    S=2\pi\cdot \frac{1}{2}\int b_1\cup \frac{\delta b_2}{2}~,
\end{eqs}
where $b_1$ and $b_2$ are two $\ZZ_2$-valued gauge fields. Following the general procedure, we first extend the gauge group to $\ZZ_4\times \ZZ_4$, and then condense the bosonic loop excitation $m_2^2$ together with
\begin{eqs}
    \varphi^2=m_1^2 e_2^2
\end{eqs}
from the $\ZZ_4\times \ZZ_4$ toric code. The hopping operators for those excitations are
\begin{eqs}
    W_{m_2^2,f}=\prod_{f'}Z_{2,f'}^{2\int \bface'\cup \bface},
\label{eq:Wm2}
\end{eqs}
and
\begin{eqs}
    W_{\varphi^2,f}=\left(\prod_{f'}Z_{1,f'}^{2 \int \bface'\cup \bface}\right)
    X_{2,f}^2~.
\label{eq:Wphi2}
\end{eqs}
By condensing these excitations, we include the operators of Eqs.~\eqref{eq:Wm2} and \eqref{eq:Wphi2} into our stabilizer model. The rest of the stabilizers are given by the original toric code stabilizers that commute with the hopping operators. Applying the criterion Eq.~\eqref{mat criterion} with
\begin{align}
    N=
    \begin{pmatrix}
        2 & 0 \\
        0 & 2 
    \end{pmatrix}
    \quad \mathrm{and}\quad
    L=
    \begin{pmatrix}
        0 & 1 \\
        0 & 0 
    \end{pmatrix},
\end{align}
we obtain them as follows
\begin{eqs}
    G_e^{I}=\prod_{f'}Z_{1,f'}^{\int \delta \be \cup \bface'}, \quad G_e^{II}=\prod_{f'}Z_{2,f'}^{2\int \delta \be \cup \bface'}, \quad G_e^{III}= X_{1,\delta e}\prod_{f'}Z_{2,f'}^{\int \delta \be \cup \bface'}~,
\label{eq:Ge 1 to 3}
\end{eqs}
corresponding respectively to the closed membrane operators for the $m_1$, $m_2^2$, and $e_1m_2$ loop excitations.
Finally, the stabilizer group for our condensed theory is generated by Eqs.~\eqref{eq:Wm2}, \eqref{eq:Wphi2} and \eqref{eq:Ge 1 to 3}.

In fact, the $\ZZ_2\times \ZZ_2$ twisted gauge theory above is in the same phase as the $\ZZ_4$ toric code. We prove this claim by constructing an explicit local unitary transformation between these two models. Without loss of generality, we restrict ourselves to the hypercube, where there is a unique face $f'$ such that
\begin{eqs}
    \int \bface'\cup \bface=1~,
\end{eqs}
which we denote by $f'=*f$. Now we can rewrite Eqs.~\eqref{eq:Wm2} and \eqref{eq:Wphi2} as
\begin{eqs}
    W_{m_2^2,f}=Z_{2,*f}^2~, \qquad W_{\varphi^2,f}=Z_{1,*f}^2 X_{2,f}^2~.
\end{eqs}
We then regroup the lattice by combining the qudit on $*f$ in the first layer with the qudit on $f$ in the second layer into a single site. After this regrouping, both $W_{m_2^2,f}$ and $W_{\varphi^2,f}$ become on-site operators. On each site, the simultaneous $+1$ eigenspace of these two operators is
\begin{eqs}
    \mathrm{span}\Big\{
    \ket{0}\otimes\ket{+},
    \ket{1}\otimes\ket{-},
    \ket{2}\otimes\ket{+},
    \ket{3}\otimes\ket{-}
    \Big\}~,
    \label{basis of condensed site}
\end{eqs}
where the first qudit is located at $*f$ in the first layer and the second qudit is located at $f$ in the second layer. Here
\begin{equation}
    \ket{+}=\frac{1}{\sqrt{2}}(\ket{0}+\ket{2}),
    \qquad
    \ket{-}=\frac{1}{\sqrt{2}}(\ket{0}-\ket{2}) .
\end{equation}
We can treat the terms of Eq.~\eqref{basis of condensed site} as the canonical basis of a $\ZZ_4$-qudit, with Pauli operators
\begin{eqs}
    \mathcal{X}_f=X_{1,*f}Z_{2,f}, \qquad \mathcal{Z}_f=Z_{1,*f}~.
\end{eqs}
Note that when we restrict to the subspace Eq.~\eqref{basis of condensed site}, we automatically have $G_e^{II}=1$, while the rest of Eq.~\eqref{eq:Ge 1 to 3} become
\begin{eqs}
    G_e^I=\prod_{f}\mathcal{Z}_f^{\delta e(f)},\qquad G_e^{III}=\mathcal{X}_{\delta e}~,
\end{eqs}
which are exactly the stabilizers of the standard $(4{+}1)$D $\ZZ_4$ toric code. 
}

\subsection{Pauli Completeness of Condensed Models}
In order to show that the above Pauli stabilizer models are maximally commutative, we should prove that every local Pauli operator that commutes with all stabilizers is contained in the stabilizer group. This is called the completeness condition. We prove completeness only on the hypercubic lattice, where the cup product of cochains is non-degenerate. First, let us review the construction of Pauli operators and express it in a more general setting. 

In our original multi-layer toric code theory, the different $e$ and $m$ loop excitations generate an abelian fusion group
\begin{align}
    \mathcal{G}=\Z_{N_1^2}\times \Z_{N_1^2}\times \cdots\times \Z_{N_s^2}\times \Z_{N_s^2},
\end{align}
on which we have a symplectic structure $\omega: \mathcal{G}\times \mathcal{G}\to \RR/\Z$, representing the mutual braiding of $e$ and $m$ excitations. We denote the subgroup of condensed excitations by $K$. We set $K$ as an isotopic subgroup of $\mathcal{G}$, which denotes the absence of braiding between condensed excitations. Let $K^{\perp}$ be the orthogonal complement of $K$, then we have $K\subseteq K^{\perp}$. The stabilizers of the condensed theory can be rewritten as below:
\begin{eqs}
    \tilde{W}_{f}^{(n,m)}=\prod_{j}X_{j,f}^{n_j}\prod_{k,f'}Z_{k,f'}^{m_k\int \bface'\cup \bface-\delta \bface'\cup_1 \bface}
    \qquad
    (n,m)\in K~
\end{eqs}
and
\begin{eqs}
    G_{e}^{(n,m)}=\prod_{j} X_{j,\delta e}^{n_j}\prod_{k,f}Z_{k,f}^{m_k\int \delta \be \cup \bface}
    \qquad
    (n,m)\in K^{\perp}~.
\end{eqs}
Here, our introduction of the cup-$1$ terms in $W_f^{(n,m)}$ does not change the stabilizer group, but simplifies the argument below. 

Now, suppose that we have a Pauli-type operator of the following form
\begin{eqs}
    P=\prod_{j,f}X_{j,f}^{\beta_j(f)} \prod_{j,f}Z_{j,f}^{f(B_j)}~,
\end{eqs}
where $\beta_j\in C^2(M,\ZZ_{N_j^2})$, and $B_j\in C_2(M,\ZZ_{N_j^2})$. We use $\beta$ to denote the $n$-tuple of $\beta_j$ and $B$ to denote the $n$-tuple of $B_j$. Also, we introduce the dot product
\begin{eqs}
    (v_1,v_2,\cdots,v_s)\cdot (w_1,w_2,\cdots,w_s):=\sum_{j}\frac{1}{N_j^2}v_jw_j~,
\end{eqs}
where the summation is in $\RR/\ZZ$. If $P$ commutes with all $W_f^{(n,m)}$ and $G_e^{(n,m)}$, then we obtain an equivalent set of conditions
\begin{eqs}
    f(B)\cdot n = \left(\int \bface \cup \beta \right)\cdot m
    \qquad
    (n,m)\in K~, \label{fB eq f cup beta}
\end{eqs}
and
\begin{eqs}
    e(\partial B)\cdot n=\left(\int \be \cup \delta \beta \right) \cdot m
    \qquad
    (n,m)\in K^{\perp}~.\label{e parB eq e cup dbeta}
\end{eqs}
Note that both $\beta$ and $B$ take values in the group $\prod_j\ZZ_{N_j^2}$. We can combine them together to obtain a $2$-chain $A\in C_2(M,\mathcal{G})$ that satisfies
\begin{eqs}
    f(A)=\left(\int \bface \cup \beta, f(B)\right)~,
\end{eqs}
where $f$ is an arbitrary face. Now the conditions Eqs.~\eqref{fB eq f cup beta} and \eqref{e parB eq e cup dbeta} impose the following restriction to the coefficient of $A$ and $\partial A$, 
\begin{eqs}
    A\in C_2(M,K^{\perp}), 
    \qquad
    \partial A\in C_1(M,\bar{K})~,
\end{eqs}
in which $\bar{K}=K^{\perp\perp}$ is the orthogonal complement of $K^{\perp}$. Equivalently, it says that $A$ is closed if we take the quotient map from $K^{\perp}$ to $K^{\perp}/\bar{K}$. Given that $P$ is a local operator, we further derive that $A$ is the boundary of some $K^{\perp}/\bar{K}$-valued chain. Therefore, $A$ has the following form
\begin{eqs}
    A=A_W+\partial A_G~,
\end{eqs}
where $A_W\in C_2(M,\bar{K})$ and $A_G\in C_3(M,K^{\perp})$. Finally, in the case where $K=\bar{K}$, we can decompose $P$ into the following product of stabilizers
\begin{eqs}
    P=\prod_{f}W_f^{\alpha_W(f)}\prod_{e}G_e^{\alpha_G(e)}~,
\end{eqs}
where $\alpha_W$ and $\alpha_G$ are dual forms of $A_W$ and $A_G$ defined by the following property,
\begin{eqs}
    \int \bface \cup \alpha_W=f(A_W),
    \qquad
    \int \bt \cup \alpha_G=t(A_G)~
\end{eqs}
for each face $f$ and tetrahedra $t$. Here $\alpha_W$ and $\alpha_G$ exist because we are working on the hypercubic lattice. It remains to prove that $K=\bar{K}$, but this follows from the fact that
\begin{eqs}
    |K|\cdot|K^{\perp}|=|\mathcal{G}|~
\end{eqs}
for any subgroup $K\in \mathcal{G}$. Thus, we have proved the Pauli completeness of our condensed models.

{\change
\subsection{Field-theoretic derivation of the 't Hooft anomaly}

In this subsection, we give a field-theoretic derivation of the 't Hooft anomaly
associated with the flux-loop excitations of the $(4{+}1)$D 2-form gauge
theories discussed above. The resulting anomaly functional encodes the self-
and mutual-statistics of the flux loops, in agreement with the lattice
condensation construction.

We start from the twisted 2-form gauge theory in Eq.~\eqref{action of twist theory}.
The fields $a_j$ couple to the worldsheets of flux loops. To probe these flux
loops, we introduce background $3$-form fields $B_j$ and couple them to $a_j$:
\begin{equation}
    S_B
    =
    S
    -
    2\pi \sum_j \frac{1}{N_j}\int a_j\cup B_j~.
\end{equation}
The background $B_j$ is Poincar\'e dual to the spacetime location of the
$j$-th type of flux-loop excitation. In the absence of $B_j$, the original
theory is invariant under the $2$-form gauge transformation
$b_j\mapsto b_j+\delta\epsilon_j$. In the presence of nontrivial $B_j$, however,
this invariance is spoiled unless we add appropriate counterterms.

To restore gauge invariance, it is convenient to temporarily lift the finite
gauge group $G=\prod_j \ZZ_{N_j}$
to the free Abelian group $G'=\prod_j \ZZ$.
Let
\begin{equation}
    \pi:\;G'=\prod_j \ZZ\longrightarrow G=\prod_j\ZZ_{N_j}
\end{equation}
be the natural quotient map, and denote its kernel by $K$. Thus we have the
short exact sequence
\begin{equation}
    0\longrightarrow K\longrightarrow G'\xrightarrow{\pi}G\longrightarrow 0~.
\end{equation}
After this lift, both $b_j$ and $B_j$ are regarded as integer-valued cochains.
Integrating out the fields $a_j$ imposes the constraint
\begin{eqs}
    N_j\mid \delta b_j-B_j~,
\label{eq:condition of N_j}
\end{eqs}
or equivalently, $\delta b_j-B_j=N_j q_j$
for some integer-valued $3$-cochain $q_j$. Thus $B_j$ measures the failure of
$b_j$ to be closed modulo $N_j$:
\begin{eqs}
    \delta b_j=B_j\mod N_j~.
\end{eqs}

The lift from $G$ to $G'$ introduces two redundancies. The corrected action
should be invariant under both of them. The first is the ordinary local gauge
transformation
\begin{equation}
    (\mathrm{G}1):\qquad
    b_j\longmapsto b_j+\delta\epsilon_j,
    \qquad
    \epsilon_j\in C^1(M,\ZZ)~.
\end{equation}
Here $\epsilon_j$ is an arbitrary integer-valued $1$-cochain. Physically, this
is the usual gauge redundancy of the $2$-form gauge field: shifting $b_j$ by a
coboundary changes only the local representative of the gauge field, since
$\delta^2\epsilon_j=0$.

The second redundancy is the freedom to change the integer lift of the
$\ZZ_{N_j}$-valued field:
\begin{equation}
    (\mathrm{G}2):\qquad
    b_j\longmapsto b_j+N_j c_j,
    \qquad
    c_j\in C^2(M,\ZZ)~.
\end{equation}
Here $c_j$ is an arbitrary integer-valued $2$-cochain. This transformation
expresses the fact that two integer lifts differing by $N_j$ times an arbitrary
cochain represent the same physical $\ZZ_{N_j}$ gauge field. The condition
\eqref{eq:condition of N_j} is preserved under both (G1) and (G2).

Let us first examine the failure of the original twist terms to be invariant
under (G2). Up to total coboundaries and integer-valued terms, the variation is
\begin{align}
    \Delta_{\mathrm{G}2}S_{\mathrm{tw}}
    &=
    2\pi\sum_j \frac{l_j}{N_j^2}
    \int N_j\left(c_j\cup B_j-B_j\cup c_j\right) 
    +
    2\pi\sum_{j<k}\frac{l_{jk}}{N_jN_k}
    \int\left(
        N_j c_j\cup B_k
        -
        N_k B_j\cup c_k
    \right)~.
\end{align}
This variation can be canceled by adding local counterterms. The resulting
corrected action is
\begin{align}
\begin{split}
    S'
    &=
    2\pi\sum_j \frac{1}{N_j}
    \int a_j\cup(\delta b_j-B_j) 
    +
    2\pi\sum_j\frac{l_j}{N_j^2}
    \int\left(
        b_j\cup\delta b_j
        +
        B_j\cup b_j
        -
        b_j\cup B_j
    \right) \\
    &\quad
    +
    2\pi\sum_{j<k}\frac{l_{jk}}{N_jN_k}
    \int\left(
        b_j\cup\delta b_k
        +
        B_j\cup b_k
        -
        b_j\cup B_k
    \right)~.
\label{eq: S' with counter term}
\end{split}
\end{align}
By construction, $S'$ is invariant under the change of integer lift (G2). A
direct check also shows that $S'$ is invariant under the ordinary gauge
transformation (G1), provided that the background fields are closed:
\begin{eqs}
    \delta B_j=0
    \qquad
    \text{for all }j~.
\end{eqs}

Thus the counterterms remove the unphysical dependence on the chosen integer
lift of $b_j$. The corrected theory is a gauge-invariant theory of the physical
$G=\prod_j\ZZ_{N_j}$-valued $2$-form gauge field, rather than a theory of the
auxiliary integer-valued representatives. Nevertheless, the corrected boundary
action still has a nontrivial dependence on the background fields $B_j$. This
remaining background dependence is the 't Hooft anomaly of the flux-loop
excitations.

Equivalently, this anomaly can be canceled by placing the $(4{+}1)$D theory on
the boundary of a $(5{+}1)$D invertible bulk theory. The corresponding bulk
term is a relative cocycle
\begin{eqs}
    T[B]\in H^6(B^3G',B^3K;\RR/\ZZ),
    \qquad
    K=\ker\pi~.
\end{eqs}
The word ``relative'' means that backgrounds valued entirely in $K$ are
physically trivial after projection to $G$. In components, a $K$-valued background has the form $B_j=N_j\widetilde B_j$,
which represents the zero background in $\ZZ_{N_j}$.

We now compute the bulk cocycle explicitly. Let $\mathcal L'$ denote the
integrand of $S'/(2\pi)$. Using $\delta B_j=0$, we have
\begin{align}
    \delta\left(
        b_j\cup\delta b_j
        +
        B_j\cup b_j
        -
        b_j\cup B_j
    \right)
    &=
    \delta b_j\cup\delta b_j
    -
    B_j\cup\delta b_j
    -
    \delta b_j\cup B_j \notag\\
    &=
    (\delta b_j-B_j)\cup(\delta b_j-B_j)
    -
    B_j\cup B_j~.
\end{align}
The first term is divisible by $N_j^2$ because of
Eq.~\eqref{eq:condition of N_j}, and hence contributes an integer to the action.
Similarly,
\begin{align}
    \delta\left(
        b_j\cup\delta b_k
        +
        B_j\cup b_k
        -
        b_j\cup B_k
    \right)
    &=
    \delta b_j\cup\delta b_k
    -
    B_j\cup\delta b_k
    -
    \delta b_j\cup B_k \notag\\
    &=
    (\delta b_j-B_j)\cup(\delta b_k-B_k)
    -
    B_j\cup B_k~,
\end{align}
and the first term is divisible by $N_jN_k$. Therefore, after flipping the sign
and dividing by $2\pi$, the anomaly is
\begin{eqs}
    T[B]
    =
    \sum_j \frac{l_j}{N_j^2}B_j\cup B_j
    +
    \sum_{j<k}\frac{l_{jk}}{N_jN_k}B_j\cup B_k~.
\label{eq: anomaly of twist}
\end{eqs}
If $B_j=N_j\widetilde B_j$ for every $j$, then each term in
Eq.~\eqref{eq: anomaly of twist} is integer-valued and hence vanishes in
$\RR/\ZZ$. Thus $T[B]$ is indeed relative with respect to $K=\ker\pi$.

Although we used the infinite group $G'=\prod_j\ZZ$ to define integer lifts, this choice is only for convenience. One may instead choose a sufficiently large finite group. For example, in the invariant-factor convention
$N_j\mid N_k$ for $j<k$, one may take
\begin{eqs}
    G''=\prod_i \ZZ_{N_iN_s}~.
\end{eqs}
Eq.~\eqref{eq: anomaly of twist}
is then interpreted using integer lifts of the $\ZZ_{N_iN_s}$-valued cocycles
$B_i$. This expression is independent of the chosen lifts because changing
$B_j$ by $N_jN_s$ times an integer cochain changes each term in
$T[B]$ by an integer-valued cochain. Therefore Eq.~\eqref{eq: anomaly of twist}
is a well-defined $\RR/\ZZ$-valued cocycle for this finite choice of $G''$.

Let us now explain the allowed values of the coefficients $l_j$ and $l_{jk}$.
These coefficients are not arbitrary integers. They are defined modulo shifts
that change $T[B]$ only by anomaly-trivial terms, namely terms removable by
local counterterms in the relative boundary theory.

First consider the self term. A shift $l_j\mapsto l_j+2$ changes the anomaly by
\begin{eqs}
    \Delta T[B]
    =
    \frac{2}{N_j^2}B_j\cup B_j~.
\end{eqs}
Since $B_j$ is a degree-$3$ cocycle, the cup-$1$ identity gives
\begin{eqs}
    \delta(B_j\cup_1 B_j)
    =
    -2B_j\cup B_j~.
\end{eqs}
Therefore
\begin{eqs}
    \frac{2}{N_j^2}B_j\cup B_j
    =
    -\delta\left(
        \frac{1}{N_j^2}B_j\cup_1 B_j
    \right)~.
\end{eqs}
Thus $l_j\mapsto l_j+2$ changes the anomaly only by a coboundary.
There is also a relative equivalence $l_j\mapsto l_j+N_j$. Indeed, the
constraint \eqref{eq:condition of N_j} implies that, on the boundary,
\begin{eqs}
    B_j=\delta b_j-N_jq_j
\end{eqs}
for some integer-valued $3$-cochain $q_j$. Hence
\begin{align}
    \frac{1}{N_j}B_j\cup B_j
    &=
    \frac{1}{N_j}(\delta b_j-N_jq_j)\cup B_j 
    \equiv
    \frac{1}{N_j}\delta b_j\cup B_j 
    =
    \delta\left(
        \frac{1}{N_j}b_j\cup B_j
    \right)
    \pmod{1}~.
\end{align}
Thus, the shift $l_j\mapsto l_j+N_j$ is also trivial in the relative anomaly
problem. Combining the two identifications $l_j\sim l_j+2$ and
$l_j\sim l_j+N_j$, we find
\begin{eqs}
    l_j\in \ZZ_{\gcd(N_j,2)}~.
\end{eqs}
Equivalently, if $N_j$ is odd then the self-statistics coefficient is trivial,
whereas if $N_j$ is even then $l_j=0,1$ labels a $\ZZ_2$ choice.

Next, consider the mutual term. A shift $l_{jk}\mapsto l_{jk}+N_j$ changes the
anomaly by
\begin{eqs}
    \Delta T[B]
    =
    \frac{1}{N_k}B_j\cup B_k~.
\end{eqs}
This expression is not necessarily trivial as an absolute cocycle for arbitrary
closed integer-valued fields $B_j$ and $B_k$. It becomes trivial in the relative
boundary problem because $B_k$ is accompanied by the boundary field $b_k$
satisfying
\begin{eqs}
    B_k=\delta b_k-N_kq_k~.
\end{eqs}
Using this relation, we obtain
\begin{align}
    \frac{1}{N_k}B_j\cup B_k
    &=
    \frac{1}{N_k}B_j\cup(\delta b_k-N_kq_k) 
    \equiv
    \frac{1}{N_k}B_j\cup\delta b_k 
    =
    -\delta\left(
        \frac{1}{N_k}B_j\cup b_k
    \right)
    \pmod{1}~,
\end{align}
where we used $\delta B_j=0$ and $\deg B_j=3$. Therefore, the shift
$l_{jk}\mapsto l_{jk}+N_j$ is removed by a local boundary counterterm.
Similarly, using $B_j=\delta b_j-N_jq_j$, we have
\[
    \frac{1}{N_j}B_j\cup B_k
    \equiv
    \delta\left(\frac{1}{N_j}b_j\cup B_k\right)
    \quad \pmod{1},
\]
so the shift $l_{jk}\mapsto l_{jk}+N_k$ is also anomaly-trivial.

The counterterms used above are compatible with the change of integer lift. For
example, under $b_k\mapsto b_k+N_kc_k$,
\begin{eqs}
    \frac{1}{N_k}B_j\cup b_k
    \longmapsto
    \frac{1}{N_k}B_j\cup b_k+B_j\cup c_k~,
\end{eqs}
and the extra term is integer-valued, hence trivial in $\RR/\ZZ$. Therefore, these are legitimate counterterms in the relative theory.

Consequently, $l_{jk}$ is defined modulo the subgroup of $\ZZ$ generated by
$N_j$ and $N_k$. This gives
\begin{eqs}
    l_{jk}\in \ZZ_{\gcd(N_j,N_k)}~.
\end{eqs}

In summary, the independent anomaly coefficients are
\begin{eqs}
    l_j\in \ZZ_{\gcd(N_j,2)},
    \qquad
    l_{jk}\in \ZZ_{\gcd(N_j,N_k)}~.
\end{eqs}
Equivalently,
\begin{eqs}
    l_j=0,1,\ldots,\gcd(N_j,2)-1,
    \qquad
    l_{jk}=0,1,\ldots,\gcd(N_j,N_k)-1~.
\end{eqs}
Here, ``trivial'' should be understood in the relative sense: the corresponding
shift of $T[B]$ can be canceled using the boundary fields $b_j$ satisfying
$B_j=\delta b_j\mod N_j$. It is not the statement that terms such as
$\frac{1}{N_k}B_j\cup B_k$ vanish as absolute cocycles for arbitrary closed
integer-valued $3$-cocycles $B_j$ and $B_k$.

Finally, the anomaly \eqref{eq: anomaly of twist} directly encodes the
statistical properties of the flux loops. The terms
\begin{eqs}
    \frac{l_j}{N_j^2}B_j\cup B_j
\end{eqs}
are self-statistics anomalies of the $j$-th flux loop, detected by the
loop-flipping process. The terms
\begin{eqs}
    \frac{l_{jk}}{N_jN_k}B_j\cup B_k
\end{eqs}
are mutual-statistics anomalies between the $j$-th and $k$-th types of flux loops, detected by mutual braiding of two loops in four spatial dimensions.
}

\section{Fermionic-loop toric codes in arbitrary dimensions}\label{sec: Fermionic-loop toric codes in arbitrary dimensions}

In higher spatial dimensions, one can construct topological phases that exhibit nontrivial \emph{fermionic loop statistics}. For dimensions $d \ge 5$, this $\mathbb{Z}_2$ fermionic loop statistics is the only possible nontrivial topological invariant of loop excitations~\cite{kobayashi2024generalized, xue2025statisticsabeliantopologicalexcitations}.
This is analogous to the fact that particles in $d\ge 3$ spatial dimensions can only be bosons or fermions: anyonic or mutual braiding statistics no longer exist once the codimension is sufficiently large.

The nontrivial element of this $\mathbb{Z}_2$ statistical phase corresponds to an ’t~Hooft anomaly, which can be described equivalently by the Stiefel–Whitney class $w_3$ or by the stable cohomology operation $\mathrm{Sq}^2 \mathrm{Sq}^1$. In this section, we generalize the construction in Sec.~\ref{sec: (4+1)D fermionic-loop toric code} to obtain Pauli stabilizer realizations of the corresponding topological phases and verify their fermionic loop statistics using the 24-step process. Concretely, we construct stabilizer models that realize the following family of TQFTs: 
\begin{equation}\label{eq: action fermionic loop toric code}
    S = \pi \int_{M_{d+1}} a_2 \cup \delta b_{d-2}
    \;+\;
\pi \int_{M_{d+1}} \mathrm{Sq}^2 \mathrm{Sq}^1 b_{d-2},
\end{equation}
where $a_2 \in C^2(M_{d+1}, \mathbb{Z}_2)$ and $b_{d-2} \in C^{d-2}(M_{d+1}, \mathbb{Z}_2)$ are cochains on a $(d{+}1)$-dimensional spacetime manifold $M_{d+1}$. We refer to this family as the \textbf{fermionic-loop toric code} in $d$ spatial dimensions.

\subsection{$(5{+}1)$D fermionic-loop toric code}

Consider $\ZZ_4$ qudits placed on each tetrahedron. The standard $\ZZ_4$ toric code is defined by
\begin{equation}
    H_{\ZZ_4-\mathrm{TC}} = -\sum_{f\in\mathrm{faces}} A_f \;- \sum_{c_4 \in 4-\mathrm{cells}} B_{c_4} \;+\; h.c.~,
    \label{eq: standard (5+1)D Z4 toric code}
\end{equation}
where
\begin{equation}
    A_f := X_{\delta \bface} = \prod_{t} X_t^{\delta \bface(t)}, \quad B_{c_4}:= Z_{\partial c_4} = \prod_{t} Z_t^{\bt (\partial c_4)}~.
    \label{eq: A and B in standard (5+1)D Z4 toric code}
\end{equation}

In words, $A_f$ is a product of $X_t$ over all tetrahedra $t$ for which $f$ appears in the boundary of $t$, while $B_{c_4}$ is a product of $Z_t$ over all tetrahedra on the boundary of the 4-cell $c_4$. Orientation is taken into account through the definitions of the boundary and coboundary operators. In the standard $(5{+}1)$D toric code, a single $Z_t$ violates the $A_f$ terms on the faces in $\partial t$, creating an $e$-membrane excitation. Conversely, a single $X_t$ violates the $B_{c_4}$ terms on the coboundary $\delta \bt$, producing an $m$-loop excitation on the dual lattice. We can think of the $A_f$ term as a small contractible $m$-loop that detects $e$-membrane, and the $B_{c_4}$ term as a small contractible $e$-membrane that detects the $m$-loops.

Therefore, unlike in Sec.~\ref{sec: (4+1)D fermionic-loop toric code}, we cannot directly condense the $e^2 m^2$ excitation here, since $e$ and $m$ have different dimensions. 

To proceed, we adopt an alternative formulation of the $(5{+}1)$D $\ZZ_4$ toric code. We begin with a symmetry-protected topological (SPT) phase protected by a 2-form $\ZZ_4$ symmetry. We choose the SPT phase characterized by the trivial cocycle $\frac{1}{4} B_3 \cup B_3 \in H^6(K(\ZZ_4,3), \RR/\ZZ)$. Gauging this 2-form symmetry produces a model with the same topological order as the standard $\ZZ_4$ toric code, but whose Hamiltonian takes a slightly modified form:
\begin{equation}
    H'_{\ZZ_4-\mathrm{TC}} = -\sum_{f\in\mathrm{faces}} G_f \;- \sum_{c_4 \in 4-\mathrm{cells}} B_{c_4} \;+\; h.c.~,
\label{eq: modified (5+1)D Z4 toric code}
\end{equation}
where
\begin{equation}
    G_f := X_{\delta \bface} \prod_t Z_t^{\int \delta \bface \cup_1 \bt }~,
\end{equation}
which directly generalizes the operator $G_e$ in Eq.~\eqref{eq: Ge in (4+1)D} for the $(4{+}1)$D case. Since $\frac{1}{4} B_3 \cup B_3$ is a trivial cocycle, one can verify that Eqs.~\eqref{eq: standard (5+1)D Z4 toric code} and \eqref{eq: modified (5+1)D Z4 toric code} differ only by conjugation with a non-Clifford finite-depth quantum circuit (FDQC)~\cite{fidkowski2024qca, sun2025cliffordQCA}.

In the Hamiltonian $H'_{\ZZ_4-\mathrm{TC}}$, the operator
\begin{equation}
    X_t \prod_{t'} Z_{t'}^{\int \bt' \cup_1 \bt}
\end{equation}
commutes with all $G_f$ and violates $B_{c_4}$ on the coboundary $\delta \bt$, creating a loop excitation on the dual lattice.
We denote this excitation as an $m'$ loop. We now condense the ${m'}^2$ loop by adding the term
\begin{equation}
    C_t := X_t^2 \prod_{t'} Z_{t'}^{2\int \bt' \cup_1 \bt}
\end{equation}
to the Hamiltonian and retaining only the stabilizers that commute with $C_t$. The resulting condensed Hamiltonian is
\begin{equation}
    H_{\mathrm{condensed}}
    =
    -\sum_{f\in\mathrm{faces}} G_f
    \;-\sum_{c_4 \in 4\text{-}\mathrm{cells}} B_{c_4}^2
    \;-\sum_{t \in \mathrm{tetrahedra}} C_t
    \;+\; \mathrm{h.c.}~,
    \label{eq: H_condensed fermionic-loop TC in 5+1D}
\end{equation}
which is the natural $(5{+}1)$D analogue of Eq.~\eqref{eq: H_condensed fermionic-loop TC in 4+1D}. This Hamiltonian realizes the fermionic-loop toric code in $(5{+}1)$D.

To facilitate the computations below, we absorb the $B_{c_4}^2$ factors into $C_t$ and define
\begin{equation}
    \tilde{C}_t
    :=
    C_t \prod_{c_4} B_{c_4}^{2 \int \bt \cup_2 \bc_4}
    =
    X_t^2 \prod_{t'} Z_{t'}^{2\int \bt' \cup_1 \bt + \bt \cup_2 \delta \bt'}~,
\end{equation}
so that the Hamiltonian can be rewritten as
\begin{equation}
    H'_{\mathrm{condensed}}
    =
    -\sum_{f\in\mathrm{faces}} G_f
    \;-\sum_{c_4 \in 4\text{-}\mathrm{cells}} B_{c_4}^2
    \;-\sum_{t \in \mathrm{tetrahedra}} \tilde{C}_t
    \;+\; \mathrm{h.c.}~,
    \label{eq: H'_condensed fermionic-loop TC in 5+1D}
\end{equation}
which is directly analogous to Eq.~\eqref{eq: H'_condensed fermionic-loop TC in 4+1D}.

\paragraph*{Loop and membrane excitations.}
This model supports two types of extended excitations. 
First, there is a \emph{charge membrane} excitation created by
\begin{equation}
    V^C_t := Z_t^2~,
\end{equation}
which violates the $G_f$ stabilizers along the boundary $\partial t$. 
Second, there is a \emph{flux loop} excitation created by
\begin{equation}
    \tilde{V}^F_t := 
    X_t \prod_{t'}
    Z_{t'}^{\int \boldsymbol{t}' \cup_1 \boldsymbol{t}
        + \delta \boldsymbol{t}' \cup_2 \bt}~,
    \label{eq: tilde V^F_t for fermionic loop}
\end{equation}
which commutes with all $\tilde{C}_t$ and $G_f$ terms. 
The commutation with $\tilde{C}_t$ follows from the same argument as in Eq.~\eqref{eq: commutation between V_F and C}, with the degrees of cochains and higher cup products shifted by one, and the commutation with $G_f$ is immediate. 
However, $\tilde{V}^F_t$ violates $B_{c_4}^2$ whenever $t\subset \partial c_4$. 
The violated stabilizers form a closed loop on the dual lattice. Equivalently, $\tilde{V}^F_t$ creates a flux loop supported on the coboundary $\delta \bt$.

A direct computation shows that its square is a product of stabilizers:
\begin{equation}
    \bigl(\tilde{V}^F_t\bigr)^2
    =
    X_t^2 \prod_{t'}
    Z_{t'}^{2\int \boldsymbol{t}' \cup_1 \boldsymbol{t}
        + \delta \boldsymbol{t}' \cup_2 \bt}
    =
    \tilde{C}_t \prod_{c_4} B_{c_4}^{2\int \bc_4 \cup_2 \bt + \bt \cup_2 \bc_4}~.
\end{equation}
Therefore, the flux loop obeys a $\ZZ_2$ fusion rule.
Analogous to the $(4{+}1)$D analysis in Sec.~\ref{sec: Fermionic loop statistics in (4+1)D}, we now verify that this flux loop has fermionic loop statistics.
\begin{theorem}
The flux loop created by $\tilde{V}^F_t$ in Eq.~\eqref{eq: tilde V^F_t for fermionic loop} has fermionic loop statistics:
\begin{equation}
    \mu_{24}=-1~,
\end{equation}
where $\mu_{24}$ is defined in Eq.~\eqref{eq: 24-step process in Pauli}.
\end{theorem}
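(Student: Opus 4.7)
The plan is to replicate the $(4{+}1)$D argument of Sec.~\ref{sec: Fermionic loop statistics in (4+1)D}, now with every cochain degree and cup-$i$ index shifted up by one. First, I would extend $\tilde{V}^F_t$ to an arbitrary integer $3$-cochain $\lambda \in C^3(M_5,\ZZ)$ by setting
\begin{equation}
    \tilde{V}^F_\lambda := X_\lambda \prod_{t'} Z_{t'}^{\int \bt' \cup_1 \lambda + \delta \bt' \cup_2 \lambda}~,
\end{equation}
and evaluate the pairwise commutator using the $\ZZ_4$ Pauli algebra. Following the telescoping strategy of Eq.~\eqref{eq: (4+1)D commutator of V}, in which the $\delta\bt'\cup_2\lambda$ correction absorbs noncommutative pieces via the higher Leibniz identity for $\cup_2$ on a closed manifold, I expect to obtain
\begin{equation}
    [\tilde{V}^F_\lambda,\tilde{V}^F_{\lambda'}] = i^{\,\epsilon\int \delta\lambda\,\cup_3\,\delta\lambda'}, \qquad \epsilon=\pm 1~.
\end{equation}
The overall sign $\epsilon$ is irrelevant below because each commutator enters Eq.~\eqref{eq: 24-step process in Pauli} squared.

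Next, I would identify each $U_{ijk}$ with $\tilde{V}^F_{\lambda_{ijk}}$, where $\lambda_{ijk}:=(ijk)^*$ is the Poincar\'e dual of the dual $2$-simplex $\langle ijk\rangle$, now a primal $3$-cochain. Substituting into Eq.~\eqref{eq: 24-step process in Pauli} yields
\begin{equation}
    \mu_{24} = (-1)^{\int \delta\lambda_{012}\,\cup_3\,\delta\lambda_{034} + \delta\lambda_{013}\,\cup_3\,\delta\lambda_{024} + \delta\lambda_{014}\,\cup_3\,\delta\lambda_{023}}~.
\end{equation}
Decomposing $\delta\lambda_{ijk}=\nu_{jk}-\nu_{ik}+\nu_{ij}$ with $\nu_{ij}:=(ij)^*$ a primal $4$-cochain, and invoking the disjoint-support vanishing $\nu_{ij}\cup_3\nu_{kl}=0$ whenever $\{i,j,k,l\}$ are distinct, the exponent collapses (mod $2$) to the six symmetrized pairs $\nu_{0a}\cup_3\nu_{0b}+\nu_{0b}\cup_3\nu_{0a}$ for $\{a,b\}\subset\{2,3,4\}$.

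I would then apply the Leibniz rule for $\cup_4$ to each such symmetrized pair, converting the exponent into boundary terms involving $\delta\nu_{ij}=p_i-p_j$, where $p_i:=(i)^*$ is a primal $5$-cochain. A second round of disjoint-support cancellations ($\nu_{ij}\cup_4 p_k=0$ for distinct $i,j,k$), followed by a final application of the Leibniz rule for $\cup_5$, telescopes the entire exponent to $(-1)^{\int p_0\,\cup_5\, p_0}$. Since $p_0$ is a top-degree ($5$-dimensional) cochain, $p_0\cup_5 p_0\equiv p_0\pmod 2$, yielding $\mu_{24}=(-1)^{\int(0)^*}=-1$.

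The main technical obstacle will be tracking signs carefully in the higher Leibniz identities at the shifted degrees, and verifying that the $\delta\bt'\cup_2\bt$ correction baked into $\tilde{V}^F_t$ produces cleanly the $\cup_3$-product commutator without leaving residual $\cup_2$ contributions. Once that sign algebra is settled, the cascade through $\cup_4$ and $\cup_5$ telescopes exactly as in the $(4{+}1)$D case, and the final identity $p_0\cup_5 p_0\equiv p_0\pmod 2$ is a direct degree-shifted analogue of the step that proved fermionic loop statistics there.
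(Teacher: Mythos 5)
Your proposal is correct and takes essentially the same route as the paper: the paper proves this statement by invoking its general $(d{+}1)$D theorem, whose proof is exactly the degree-shifted cascade you describe — extend $\tilde V^F$ to arbitrary $3$-cochains, obtain the commutator $i^{\pm\int\delta\lambda\cup_3\delta\lambda'}$, reduce via disjoint-support cancellations and the Leibniz rules for $\cup_4$ and $\cup_5$, and conclude with $p_0\cup_5 p_0\equiv p_0\pmod 2$, giving $\mu_{24}=-1$. The only cosmetic difference is that you carry out the $d=5$ computation directly rather than citing the general theorem.
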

\begin{proof}
This statement is the special case $d=5$ of Theorem~\ref{thm: fermionic flux loop statistics in (d+1)D}, which we will prove in Sec.~\ref{sec: (d+1)D fermionic-loop toric code}.
\end{proof}

\subsection{$(d{+}1)$D fermionic-loop toric code}\label{sec: (d+1)D fermionic-loop toric code}

In this section, we give a direct construction of the fermionic-loop toric code in arbitrary dimensions. Inspecting the $(4{+}1)$D and $(5{+}1)$D Hamiltonians in Eqs.~\eqref{eq: H'_condensed fermionic-loop TC in 4+1D} and~\eqref{eq: H'_condensed fermionic-loop TC in 5+1D}, it is natural to extend them to $(d{+}1)$D as
\begin{equation}
    H_{\mathrm{condensed}} = 
    -\sum_{c_{d-3} \in (d-3)\text{-cells}} G_{c_{d-3}} 
    \;-\; \sum_{c_{d-1} \in (d-1)\text{-cells}} B_{c_{d-1}}^2 
    \;-\; \sum_{c_{d-2} \in (d-2)\text{-cells}} \tilde{C}_{c_{d-2}}
    \;+\; \mathrm{h.c.}~,
    \label{eq: H_condensed fermionic-loop TC in d+1D}
\end{equation}
where a $\ZZ_4$ qudit lives on each $(d-2)$-cell, and
\begin{eqs}
    G_{c_{d-3}} &:= X_{\delta \bc_{d-3}} \prod_{c_{d-2}} Z_{c_{d-2}}^{\int \delta \bc_{d-3} \cup_{d-4} \bc_{d-2}}~, \\
    B_{c_{d-1}} &:= Z_{\partial c_{d-1}}~, \\
    \tilde{C}_{c_{d-2}} &:= X_{c_{d-2}}^2 \prod_{c_{d-2}'} Z_{c_{d-2}'}^{2\int \bc_{d-2}' \cup_{d-4} \bc_{d-2} + \bc_{d-2} \cup_{d-3} \delta \bc'_{d-2}}~.
\end{eqs}

\paragraph*{Loop and $(d{-}3)$-brane excitations.}

This model contains two types of extended excitations. First, there is a boson \emph{charge $(d{-}3)$-brane} excitation created by
\begin{equation}
    V^C_{c_{d-2}}:= Z_{c_{d-2}}^2~,
\end{equation}
which violates the terms $G_{c_{d-3}}$ on the boundary $\partial c_{d-2}$. Second, there is a fermionic \emph{flux loop} excitation created by
\begin{equation}
    \tilde{V}^F_{c_{d-2}}:= X_{c_{d-2}} \prod_{c'_{d-2}}
    Z_{c'_{d-2}}^{\int \bc'_{d-2} \cup_{d-4} \bc_{d-2}
        - (-1)^d \delta \bc'_{d-2} \cup_{d-3} \bc_{d-2}}~,
    \label{eq: tilde V^F_c_d-2 for fermionic loop}
\end{equation}
which commutes with all $\tilde{C}_{c_{d-2}}$ and $G_{c_{d-3}}$ but violates $B_{c_{d-1}}^2$ whenever $c_{d-2}$ lies in the boundary of a $(d-1)$-cell $c_{d-1}$. The violated stabilizers form a closed loop on the dual lattice, so $\tilde{V}^F_{c_{d-2}}$ creates a flux loop supported on the coboundary $\delta \bc_{d-2}$.

A direct computation shows that its square becomes a product of stabilizers:
\begin{equation}
    \left(\tilde{V}^F_{c_{d-2}} \right)^2
    =
    X_{c_{d-2}}^2 \prod_{c'_{d-2}}
    Z_{c'_{d-2}}^{2\int \bc'_{d-2} \cup_{d-4} \bc_{d-2}
        + \delta \bc'_{d-2} \cup_{d-3} \bc_{d-2}}
    =
    \tilde{C}_{c_{d-2}} \prod_{c_{d-1}} B_{c_{d-1}}^{2\int \bc_{d-1} \cup_{d-3} \bc_{d-2} + \bc_{d-2} \cup_{d-3} \bc_{d-1}}~.
\end{equation}
Therefore, the flux loop obeys a $\ZZ_2$ fusion rule.
We now verify that this flux loop has fermionic loop statistics.
\begin{theorem}
    The flux loop created by $\tilde{V}^F_{c_{d-2}}$ in Eq.~\eqref{eq: tilde V^F_c_d-2 for fermionic loop} has fermionic loop statistics:
    \begin{equation}
        \mu_{24}=-1~,
    \end{equation}
    where $\mu_{24}$ is defined in Eq.~\eqref{eq: 24-step process in Pauli}.
    \label{thm: fermionic flux loop statistics in (d+1)D}
\end{theorem}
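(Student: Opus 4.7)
The plan is to adapt the algebraic derivation of Section~\ref{sec: Fermionic loop statistics in (4+1)D} to arbitrary $d$, replacing each cup-$i$ product appearing there by its codimension-shifted analogue in $(d{+}1)$D. First, I would extend the membrane operator to arbitrary $(d-2)$-cochains $\lambda$ via
\begin{equation}
    \tilde{V}^F_\lambda
    := X_\lambda \prod_{c'} Z_{c'}^{\int \bc' \cup_{d-4} \lambda - (-1)^d \delta \bc' \cup_{d-3} \lambda}~,
\end{equation}
which recovers Eq.~\eqref{eq: tilde V^F_c_d-2 for fermionic loop} when $\lambda$ is the indicator cochain of a single $(d-2)$-cell. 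A direct evaluation of $[\tilde{V}^F_\lambda,\tilde{V}^F_{\lambda'}]$ produces a scalar phase; one application of the Leibniz rule for $\cup_{d-3}$ should then reduce the exponent to
\begin{equation}
    [\tilde{V}^F_\lambda,\tilde{V}^F_{\lambda'}]
    = i^{\pm\int \delta\lambda\cup_{d-2}\delta\lambda'}~,
\end{equation}
in direct analogy with Eq.~\eqref{eq: (4+1)D commutator of V}. This base commutator is the computational heart of the proof; the $(-1)^d$ sign in the definition of $\tilde{V}^F$ is precisely what absorbs the dimension-dependent signs produced by Leibniz so that the exponent comes out in this clean form.

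Substituting into the Pauli 24-step formula in Eq.~\eqref{eq: 24-step process in Pauli}, with $\lambda_{ijk}:=(ijk)^*$ the Poincar\'e dual of the dual 2-simplex $\langle ijk\rangle$, gives
\begin{equation}
    \mu_{24}
    = (-1)^{\int \delta\lambda_{012}\cup_{d-2}\delta\lambda_{034}
    + \delta\lambda_{013}\cup_{d-2}\delta\lambda_{024}
    + \delta\lambda_{014}\cup_{d-2}\delta\lambda_{023}}~.
\end{equation}
I would then expand $\delta\lambda_{ijk} = \nu_{ij}+\nu_{jk}-\nu_{ik}$ with $\nu_{ij}:=(ij)^*$ a $(d-1)$-cochain on the direct lattice, and discard all terms $\nu_{ij}\cup_{d-2}\nu_{kl}$ with $\{i,j\}\cap\{k,l\}=\emptyset$, whose cup-product supports are disjoint. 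The symmetry of the three-term sum kills every contribution containing $\nu_{01}$, leaving the six terms $\nu_{0a}\cup_{d-2}\nu_{0b}+\nu_{0b}\cup_{d-2}\nu_{0a}$ for $\{a,b\}\subset\{2,3,4\}$. One application of the $\cup_{d-1}$ Leibniz rule converts each symmetric pair (mod $2$) into $\int \delta\nu_{0a}\cup_{d-1}\nu_{0b}+\nu_{0a}\cup_{d-1}\delta\nu_{0b}$; substituting $\delta\nu_{ij}=p_i-p_j$ with $p_i:=(i)^*$ and again using disjointness of support reduces the total (mod $2$) to the single term $p_0\cup_{d-1}\nu_{03}+\nu_{03}\cup_{d-1}p_0$, exactly as in the $(4{+}1)$D case. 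A final $\cup_d$ Leibniz step, combined with $\delta p_0=0$ on the closed dual manifold and $\delta\nu_{03}=p_0-p_3$, collapses this to $\int p_0\cup_d p_0$, which by the definition of $\cup_d$ on top-degree cochains equals $\int p_0=1\pmod{2}$, yielding $\mu_{24}=-1$.

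The main obstacle is not the structural argument, which is essentially dimension-independent and mirrors the $(4{+}1)$D calculation step for step, but the careful tracking of signs in the iterated higher-cup Leibniz identities as both the cup-index and the cochain degrees shift with $d$. In particular, verifying that the $(-1)^d$ prefactor in the definition of $\tilde{V}^F_{c_{d-2}}$ correctly cancels the dimension-dependent signs produced by the first Leibniz step---so that the base commutator really takes the clean form $i^{\pm\int \delta\lambda\cup_{d-2}\delta\lambda'}$---requires a bookkeeping exercise that is most naturally done by reducing mod $2$ wherever possible and keeping explicit signs only inside the exponent. Once this is handled, every subsequent reduction, including the final identification $\int p_0\cup_d p_0=1$, proceeds identically in every dimension and encodes the stable $\mathrm{Sq}^2\mathrm{Sq}^1=w_3$ anomaly that characterizes fermionic loop statistics.
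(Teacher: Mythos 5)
Your proposal follows essentially the same route as the paper's proof: extending $\tilde{V}^F$ to arbitrary $(d-2)$-cochains, reducing the commutator to $i^{(-1)^{d+1}\int\delta\lambda\cup_{d-2}\delta\lambda'}$ via the higher-cup Leibniz identities, then decomposing into the Poincar\'e duals $\nu_{ij}$, using disjointness of supports, and iterating Leibniz reductions down to $\int p_0\cup_d p_0 = 1 \pmod 2$. The only minor difference is bookkeeping (the paper uses two Leibniz applications, at levels $d-3$ and $d-2$, to reach the clean commutator), which does not affect correctness.
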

\begin{proof}
We first promote the definition of $\tilde{V}^F_{c_{d-2}}$ to any $(d-2)$-cochain $\lambda \in C^{d-2}(M_d,\ZZ)$ (where $M_d$ is the spatial manifold) by
\begin{equation}
    \tilde{V}^F_\lambda :=
    X_\lambda \prod_{c_{d-2}'}
    Z_{c_{d-2}'}^{\int c_{d-2}' \cup_{d-4} \lambda
        - (-1)^d \delta c_{d-2}' \cup_{d-3} \lambda}~.
\end{equation}
We begin by computing its commutator:
\begin{eqs}
    [\tilde{V}^F_{\lambda},\tilde{V}^F_{\lambda'}]
    &=
    i^{-\int
        \lambda \cup_{d-4} \lambda'
        - \lambda' \cup_{d-4} \lambda
        -(-1)^d \delta\lambda \cup_{d-3} \lambda'
        +(-1)^d \delta\lambda' \cup_{d-3} \lambda
    } \\
    &= i^{-\int
    (-1)^d  \delta\lambda \cup_{d-3} \lambda'
    + \lambda \cup_{d-3} \delta\lambda'
    -(-1)^d \delta\lambda \cup_{d-3} \lambda'
    +(-1)^d \delta\lambda' \cup_{d-3} \lambda} \\
    &= i^{-\int
        \lambda \cup_{d-3} \delta\lambda'
        + (-1)^d \delta\lambda' \cup_{d-3} \lambda}
    ~=~ i^{(-1)^{d+1}\int
    \delta \lambda \cup_{d-2} \delta \lambda'}~,
\end{eqs}
where, in the last equality, we have used the higher-cup identity
\begin{eqs}
    \delta (A_{d-2} \cup_{d-2} B_{d-1})
    =&~ \delta A_{d-2} \cup_{d-2} B_{d-1} + (-1)^d A_{d-2} \cup_{d-2} \delta B_{d-1} \\
    &- (-1)^d A_{d-2} \cup_{d-3} B_{d-1} - B_{d-1} \cup_{d-3} A_{d-2}~,
\end{eqs}
together with the fact that total coboundaries integrate to zero on a closed manifold.

We now compute the loop statistics using the 24-step process in Fig.~\ref{fig: 24 step process}. For example, for the dual triangle $\langle 012\rangle$, we define the $(d-2)$-cochain $\lambda_{012}=(012)^*$ as its Poincar\'e dual, and write $U_{012}=\tilde{V}^F_{\lambda_{012}}$. Then
\begin{equation}\label{eq: mu24}
    \mu_{24}
    =
    (-1)^{\int \delta\lambda_{012}\cup_{d-2}\delta\lambda_{034}
    + \delta\lambda_{013}\cup_{d-2}\delta\lambda_{024}
    + \delta\lambda_{014}\cup_{d-2}\delta\lambda_{023}}~.    
\end{equation}
As before, we further decompose each piece by
\[
\delta\lambda_{012}
= \delta( 012)^*
= (\partial\, 012)^*
= (01)^* + (12)^* - (02)^*
= \nu_{01}+\nu_{12}-\nu_{02}~,
\]
where we denote $(ij)^*$ by $\nu_{ij}$, which is a $(d-1)$-cochain on the direct lattice.
If $i,j,k,l$ are all distinct, then $\nu_{ij}$ and $\nu_{kl}$ have disjoint support, and hence
$\nu_{ij}\cup_{d-2}\nu_{kl}=0$.
Thus
\begin{eqs}
\mu
=&~
(-1)^{\int
(\nu_{01}+\nu_{02})\cup_{d-2}(\nu_{03}+\nu_{04})
+ (\nu_{01}+\nu_{03})\cup_{d-2}(\nu_{02}+\nu_{04})
+ (\nu_{01}+\nu_{04})\cup_{d-2}(\nu_{02}+\nu_{03})}
\\
=&~
(-1)^{\int
\nu_{02}\cup_{d-2}\nu_{03}
+ \nu_{03}\cup_{d-2}\nu_{02}
+ \nu_{02}\cup_{d-2}\nu_{04}
+ \nu_{04}\cup_{d-2}\nu_{02}
+ \nu_{03}\cup_{d-2}\nu_{04}
+ \nu_{04}\cup_{d-2}\nu_{03}}
\\
=&~
(-1)^{\int
\delta\nu_{02}\cup_{d-1}\nu_{03}
+ \nu_{02}\cup_{d-1}\delta\nu_{03}
+ \delta\nu_{02}\cup_{d-1}\nu_{04}
+ \nu_{02}\cup_{d-1}\delta\nu_{04}
+ \delta\nu_{03}\cup_{d-1}\nu_{04}
+ \nu_{03}\cup_{d-1}\delta\nu_{04}}~,
\end{eqs}
where in the last step we used the Leibniz rule for the $\cup_{d-1}$ product:
\begin{equation}
    \delta (A_{d-1} \cup_{d-1} B_{d-1}) = \delta A_{d-1} \cup_{d-1} B_{d-1} 
    -(-1)^d \Big( A_{d-1} \cup_{d-1} \delta B_{d-1} + A_{d-1} \cup_{d-2} B_{d-1} + B_{d-1} \cup_{d-2} A_{d-1} \Big)~.
\end{equation}
Let $\delta\nu_{ij}= p_i - p_j$ with $d$-cochain $p_i=(i)^*$, and use $\nu_{ij}\cup_{d-1} p_k=0$ for distinct $i,j,k$. Then
\begin{eqs}
\mu
=&~
(-1)^{\int
p_0\cup_{d-1}\nu_{03}
+ \nu_{02}\cup_{d-1} p_0
+ p_0\cup_{d-1}\nu_{04}
+ \nu_{02}\cup_{d-1} p_0
+ p_0\cup_{d-1}\nu_{04}
+ \nu_{03}\cup_{d-1} p_0}
\\
=&~ (-1)^{\int p_0 \cup_{d-1} \nu_{03} + \nu_{03} \cup_{d-1} p_0 }
= ~ (-1)^{\int p_0 \cup_d \delta \nu_{03}}
=~
(-1)^{\int p_0\cup_d p_0 }~.
\end{eqs}
Because $p_0$ is a $d$-cochain, we have $p_0 \cup_d p_0 = p_0 \pmod{2}$. Hence
\begin{equation}
    \mu_{24} = (-1)^{\int (0)^*} = -1~.
\end{equation}
This shows that the flux loop excitation indeed carries fermionic loop statistics in $(d{+}1)$D fermionic-loop toric code.
The computation is exactly parallel to Sec.~\ref{sec: Fermionic loop statistics in (4+1)D}, with all cochains and higher cup products shifted up by certain degrees.
\end{proof}

{\change

\subsection{A $\ZZ_8$-qudit realization}
\label{subsec:z8-qudit-realization}

In this subsection we give another Pauli stabilizer realization of the fermionic-loop toric code in $(d{+}1)$D spacetime dimensions.  The construction starts from the ordinary $\ZZ_8$ toric code, but condenses a decorated square of the $m$-loop rather than the bare $m^2$ loop.  This distinction is important: condensing the bare $m^2$ loop gives the ordinary $\ZZ_2$ toric code, while the decorated condensation below gives the twisted $\ZZ_2$ theory.

Put a $\ZZ_8$ qudit on each $(d-2)$-cell of the spatial manifold $M_d$.  We write
\begin{equation}
    \omega_8=e^{2\pi i/8},
    \qquad
    X^8=Z^8=1,
    \qquad
    ZX=\omega_8 XZ .
\end{equation}
For $b\in C^{d-2}(M_d,\ZZ_8)$, let $\ket b$ denote the computational basis vector.  The standard $\ZZ_8$ toric code is generated by
\begin{equation}
    X_{\delta c_{d-3}},
    \qquad
    Z_{\partial c_{d-1}} .
\end{equation}
Now define a decorated membrane operator, for $\lambda\in C^{d-2}(M_d,\ZZ_8)$, by
\begin{equation}
\label{eq:z8-U-lambda-simple}
    U_\lambda\ket b
    =
    \omega_8^{\int_{M_d}\delta b\cup_{d-2}\delta\lambda}
    \ket{b+\lambda} .
\end{equation}
This is the ordinary $m$-loop membrane operator multiplied by a local $Z$-type decoration supported at the boundary $\delta\lambda$.  A direct computation gives
\begin{equation}
\label{eq:z8-comm-simple}
    [U_{\lambda'},U_\lambda]
    =
    \omega_8^{\int_{M_d}\delta\lambda\cup_{d-2}\delta\lambda'
    -\delta\lambda'\cup_{d-2}\delta\lambda} .
\end{equation}
Using the higher-cup-product identity in Appendix~\ref{app:cochain-conventions}, one obtains
\begin{equation}
    [U_{2\lambda'},U_{2\lambda}]=1 .
\end{equation}
Thus the decorated square $m'^2$ can be condensed.  The condensed stabilizer Hamiltonian is generated by
\begin{equation}
\label{eq:z8-stabilizers-simple}
    X_{\delta c_{d-3}},
    \qquad
    Z_{4\partial c_{d-1}},
    \qquad
    \widehat U_{2\lambda}
    :=
    (-1)^{\int_{M_d}\lambda\cup_{d-4}\lambda+
    \lambda\cup_{d-3}\delta\lambda}
    U_{2\lambda} .
\end{equation}
Here $Z_{4\partial c_{d-1}}=(Z_{\partial c_{d-1}})^4$.  It imposes
\begin{equation}
    \delta b=0\pmod 2,
\end{equation}
rather than the stronger condition $\delta b=0\pmod 8$.  The phase in the definition of $\widehat U_{2\lambda}$ is only a choice of stabilizer eigenvalue convention; it is fixed so that the ground state below is a $+1$ eigenstate of all condensation terms.

Let $N_{d+1}$ be a filling of $M_d$, with $\partial N_{d+1}=M_d$.  Define
\begin{equation}
\label{eq:z8-twisted-ground-state-simple}
    \ket{\Omega_N}
    :=
    \sum_{\substack{b\in C^{d-2}(N_{d+1},\ZZ_8)\\
    \delta b=0\;({\rm mod}\;2)}}
    \omega_8^{\int_{N_{d+1}}\delta b\cup_{d-3}\delta b}
    \ket{b|_{M_d}} .
\end{equation}
The condition $\delta b=0\pmod 2$ immediately implies
\begin{equation}
    Z_{4\partial c_{d-1}}\ket{\Omega_N}=\ket{\Omega_N} .
\end{equation}
Also,
\begin{equation}
    X_{\delta c_{d-3}}\ket{\Omega_N}=\ket{\Omega_N},
\end{equation}
because shifting $b$ by a coboundary does not change $\delta b$ or the phase in Eq.~\eqref{eq:z8-twisted-ground-state-simple}.

Finally, we apply $U_{2\lambda}$ on the state $\ket{\Omega_N}$.  Choose an
arbitrary extension of \(\lambda\) from \(M_d=\partial N_{d+1}\) to
\(N_{d+1}\), and denote the extension again by \(\lambda\).  Acting with
\(U_{2\lambda}\) on the state in Eq.~\eqref{eq:z8-twisted-ground-state-simple} gives
\begin{align}
U_{2\lambda}|\Omega_N\rangle
&=
\sum_{\substack{
b\in C^{d-2}(N_{d+1},\mathbb Z_8)\\
\delta b=0\;(\mathrm{mod}\;2)
}}
\omega_8^{\int_{N_{d+1}}\delta b\cup_{d-3}\delta b}\,
\omega_8^{2\int_{M_d}\delta b\cup_{d-2}\delta\lambda}\,
\bigl|b|_{M_d}+2\lambda\bigr\rangle .
\end{align}
modulo \(8\).
By Stokes' theorem and the higher-cup-product identity,
\begin{align*}
2\int_{M_d}\delta b\cup_{d-2}\delta\lambda
&=
2\int_{N_{d+1}}
\delta\bigl(\delta b\cup_{d-2}\delta\lambda\bigr) \\
&\equiv
2\int_{N_{d+1}}
\left(
\delta b\cup_{d-3}\delta\lambda
+
\delta\lambda\cup_{d-3}\delta b
\right)
\qquad (\mathrm{mod}\;8).
\end{align*}
The possible signs in the higher-cup identity are dropped in this line, because \(\delta b\) is even and the whole expression is multiplied by
\(2\).
Therefore,
\begin{eqs}
\int_{N_{d+1}}\delta b\cup_{d-3}\delta b
+
2\int_{M_d}\delta b\cup_{d-2}\delta\lambda 
&\equiv
\int_{N_{d+1}}
\left[
\delta b\cup_{d-3}\delta b
+
2\,\delta b\cup_{d-3}\delta\lambda
+
2\,\delta\lambda\cup_{d-3}\delta b
\right] \\
&\equiv
\int_{N_{d+1}}
\delta(b+2\lambda)\cup_{d-3}\delta(b+2\lambda)
-
4\int_{N_{d+1}}\delta\lambda\cup_{d-3}\delta\lambda
\qquad (\mathrm{mod}\;8).
\nonumber
\end{eqs}
Since \(\omega_8^{-4x}=(-1)^x\), this gives
\begin{align}
U_{2\lambda}|\Omega_N\rangle
&=
(-1)^{\int_{N_{d+1}}\delta\lambda\cup_{d-3}\delta\lambda}
\sum_{\substack{
b\in C^{d-2}(N_{d+1},\mathbb Z_8)\\
\delta b=0\;(\mathrm{mod}\;2)
}}
\omega_8^{
\int_{N_{d+1}}
\delta(b+2\lambda)\cup_{d-3}\delta(b+2\lambda)
}
\bigl|(b+2\lambda)|_{M_d}\bigr\rangle .
\end{align}
Now set \(b'=b+2\lambda\).  Since
\(\delta b'=\delta b+2\delta\lambda\), the condition \(\delta b=0\;(\mathrm{mod}\;2)\) is equivalent to \(\delta b'=0\;(\mathrm{mod}\;2)\).
The sum is unchanged by this change of variables, and hence
\[
U_{2\lambda}|\Omega_N\rangle
=
(-1)^{\int_{N_{d+1}}\delta\lambda\cup_{d-3}\delta\lambda}
|\Omega_N\rangle .
\]
It remains to rewrite the bulk sign as a boundary term.  Reducing modulo \(2\),
the higher-cup-product identity gives
\begin{eqs}
\delta\bigl(\lambda\cup_{d-3}\delta\lambda\bigr)
&=
\delta\lambda\cup_{d-3}\delta\lambda
+
\lambda\cup_{d-4}\delta\lambda
+
\delta\lambda\cup_{d-4}\lambda, \\
\delta\bigl(\lambda\cup_{d-4}\lambda\bigr)
&=
\delta\lambda\cup_{d-4}\lambda
+
\lambda\cup_{d-4}\delta\lambda .
\end{eqs}
Adding these two equations, the mixed terms cancel modulo \(2\), so
\[
\delta\lambda\cup_{d-3}\delta\lambda
=
\delta\left(
\lambda\cup_{d-3}\delta\lambda
+
\lambda\cup_{d-4}\lambda
\right)
\qquad (\mathrm{mod}\;2).
\]
Therefore, using \(\partial N_{d+1}=M_d\),
\[
\int_{N_{d+1}}\delta\lambda\cup_{d-3}\delta\lambda
=
\int_{M_d}
\left(
\lambda\cup_{d-4}\lambda
+
\lambda\cup_{d-3}\delta\lambda
\right)
\qquad (\mathrm{mod}\;2).
\]
Combining the above, we obtain
\begin{equation}
\begin{aligned}
U_{2\lambda}|\Omega_N\rangle
&=
(-1)^{\int_{N_{d+1}}\delta\lambda\cup_{d-3}\delta\lambda}
|\Omega_N\rangle 
=
(-1)^{
\int_{M_d}
\lambda\cup_{d-4}\lambda
+
\lambda\cup_{d-3}\delta\lambda
}
|\Omega_N\rangle .
\end{aligned}
\label{eq:z8-U2-ground-simple}
\end{equation}
Hence,
\begin{equation}
    \widehat U_{2\lambda}\ket{\Omega_N}=\ket{\Omega_N} .
\end{equation}
Therefore, Eq.~\eqref{eq:z8-twisted-ground-state-simple} is a ground state of the stabilizer Hamiltonian
\begin{equation}
\label{eq:z8-hamiltonian-simple}
    H_{\ZZ_8}
    =-
    \sum_{c_{d-3}}X_{\delta c_{d-3}}
    -\sum_{c_{d-1}}Z_{4\partial c_{d-1}}
    -\sum_\lambda \widehat U_{2\lambda}
    +h.c. .
\end{equation}
For a local Hamiltonian, it is sufficient to take $\lambda$ over a local generating set of $(d-2)$-cochains.

We now explain why the ground state is twisted.  On the support of the sum, let
\begin{equation}
    \bar b:=b\pmod 2\in Z^{d-2}(N_{d+1},\ZZ_2),
    \qquad
    q:=\frac{1}{2}\delta b\pmod 2\in Z^{d-1}(N_{d+1},\ZZ_2).
\end{equation}
The cochain $q$ is the Bockstein of $\bar b$, hence
\begin{equation}
    q=\Sq^1\bar b .
\end{equation}
Therefore the phase in Eq.~\eqref{eq:z8-twisted-ground-state-simple} reduces to
\begin{equation}
\label{eq:z8-phase-reduces-to-sq2sq1}
    \frac{1}{8}\delta b\cup_{d-3}\delta b
    =
    \frac{1}{2}q\cup_{d-3}q
    =
    \frac{1}{2}\Sq^2\Sq^1\bar b
    \qquad (\mathrm{mod}\;1).
\end{equation}
Thus the state in Eq.~\eqref{eq:z8-twisted-ground-state-simple} is the ground-state wave function of the nontrivial twisted $\ZZ_2$ $(d-2)$-form gauge theory appearing in Eq.~\eqref{eq: action fermionic loop toric code}.  By contrast, condensing the bare $X_{2\lambda}$ in the ordinary $\ZZ_8$ toric code would give the same constraint $\delta b=0\pmod 2$ but without the phase in Eq.~\eqref{eq:z8-phase-reduces-to-sq2sq1}; that is the ordinary bosonic $\ZZ_2$ toric code.  Hence the decoration of the condensed loop changes the resulting ground state from the untwisted theory to the twisted fermionic-loop theory.

For completeness, we also record the resulting loop statistics.  Since $U_{2\lambda}$ is condensed, the excitation created by $U_\lambda$ has $\ZZ_2$ fusion in the ground-state sector.  Squaring Eq.~\eqref{eq:z8-comm-simple} gives
\begin{equation}
    [U_{\lambda'},U_\lambda]^2
    =
    (-1)^{\int_{M_d}\delta\lambda\cup_{d-2}\delta\lambda'} .
\end{equation}
This is the same commutator entering the $24$-step calculation in Eq.~\eqref{eq: mu24}; the computation in Sec.~\ref{sec: (d+1)D fermionic-loop toric code} therefore gives $\mu_{24}=-1$.}

\section{Generalized statistics in Pauli stabilizer models}
\label{sec: Statistics in Pauli Stabilizer Models}

In previous sections, we detected fermionic-loop statistics using Eq.~\eqref{eq: 24-step process in Pauli}, which follows from the full $24$-step process in Eq.~\eqref{eq: 24-step process} once we assume a Pauli setting, i.e., that any two relevant operators commute up to a $U(1)$ phase. A general microscopic theory of statistics for invertible excitations in arbitrary dimensions was developed in Refs.~\cite{kobayashi2024generalized,xue2025statisticsabeliantopologicalexcitations}, but the corresponding statistical processes typically become increasingly involved in higher dimensions. In Pauli stabilizer models, these processes simplify substantially. This motivates a Pauli-adapted formulation of generalized statistics, parallel to the general non-Pauli framework. The definitions below are modeled on the axioms in Ref.~\cite{xue2025statisticsabeliantopologicalexcitations}, but are tailored to Pauli stabilizer Hamiltonians.

\subsection{Definitions and general framework}

We specify a type of excitation by a set of configuration states $\{\ket{a}\mid a\in A\}$ together with a family of excitation operators $\{U(s)\mid s\in S\}$ satisfying suitable axioms. The choice of the label sets $A$ and $S$ encodes the dimensionality of the excitation and its fusion group, while the statistical information is extracted from commutator processes built from the operators $U(s)$ and their action on the states.

Throughout, we take the spatial manifold to be a $d$-sphere $S^d$ equipped with a fixed triangulation $X$, most commonly $X=\partial\Delta^{d+1}$. For a $p$-dimensional excitation with fusion group $G\simeq \ZZ$ or $\ZZ_N$, we adopt the canonical choice
\begin{equation}
    S := X_{p+1},
    \qquad
    A := B_p(X,G),
\end{equation}
where $X_{p+1}$ is the set of $(p{+}1)$-simplices of $X$ and $B_p(X,G)$ is the group of $p$-boundaries with coefficients in $G$. The intended picture is that an operator $U(s)$ creates a localized excitation whose configuration is determined by the boundary $\partial s\in A$, and that $U(s)$ is supported near the simplex $s$.

These considerations lead to the following definition.

\begin{definition}\label{defexcitationModel}
	An \textit{excitation pattern} \(m\)  consists of the following data $m=(A,S,\partial,\supp)$:
	\begin{enumerate}
		\item An Abelian group \(A\), referred to as the configuration group.
		\item A finite set \(S\), referred to as formal excitation operators, and a map \(\partial: S \longrightarrow A\) such that $\{\partial s|s\in S\}$ generates \(A\).
		\item A topological space $M$ and a closed subspace \(\operatorname{supp}(s) \subset M\) for every $s\in S$.
	\end{enumerate}

\end{definition}

The previous construction can be easily generalized to any Abelian fusion group $G$. Let $G\simeq \ZZ_{N_1}\oplus\cdots \ZZ_{N_k}$ (when $N_k=0$, we say $\ZZ_{N_k}=\ZZ$); we will define the corresponding excitation pattern $m_p(X,G)$ as follows:
	\begin{enumerate}
		\item \(A = B_p(X, G)\).
		\item \(S = G_0\times X_{p+1}\), where \(X_{p+1}\) is the set of all \((p+1)\)-simplices of $X$, and \(G_0\) is a subset of $G$, containing a generator for every $\ZZ_{N_i}$ component. Note that $S$ is a basis of $C_{p+1}(X,G)$.
		\item For $g\in G_0$ and $\alpha\in X_{p+1}$, \(\operatorname{supp}(g\alpha)\) gives the simplex $\alpha$, while \(\partial (g\alpha)\) gives its homological boundary.
	\end{enumerate}
In this work, we focus on excitation patterns of the form $m_p(\partial \Delta^{d+1},G)$.
    
This definition of excitation pattern is the same as that in Ref.~\cite{xue2025statisticsabeliantopologicalexcitations}: an excitation pattern is only a collection of geometric data without any physical kinematics, and Pauli operators are not meaningful at this level. At the kinematics level, the data of an excitation pattern should be realized as a Hilbert space $\mathcal{H}$, a collection of states $\{|a\rangle|a\in A\}$ and operators $\{U(s)|s\in S\}$, where Pauli realizations are distinguished from non-Pauli realizations. In the non-Pauli case, the classification of statistics is described by an Abelian group $T^*(m)$, and then it is proved that $T^*(m_p(\partial\Delta^{d+1},G))\simeq H^{d+2}(B^{d-p}G,\RR/\ZZ)$ \cite{xue2025statisticsabeliantopologicalexcitations}. We aim to construct an Abelian group $T_{\mathrm{P}}^*(m)$ as the classification of statistics of Pauli realizations and also its Pontryagin dual $T_{\mathrm{P}}(m)=\hom(T^*_{\mathrm{P}}(m),\RR/\ZZ)$ describing how we detect statistics by computing certain invariants. In previous sections, we have constructed lattice models representing various elements in $T^*_{\mathrm{P}}(m)$, and Eq.~\eqref{eq: 24-step process in Pauli} represents the generator in $T_{\mathrm{P}}(m)$ for $m=m_1(\partial\Delta^{4+1},\ZZ_2)$.

We require that in a Pauli realization, $\{U(s)|s\in S\}$ are all Pauli operators, so they should commute up to a phase factor. This condition makes Pauli models especially easy to solve, and it also transforms Eq.~\eqref{eq: 24-step process} into Eq.~\eqref{eq: 24-step process in Pauli}.
\begin{itemize}
    \item \textbf{Pauli assumption:} for any $s,s'\in S$, we have 
    \begin{equation}\label{axiomPauli}
        [U(s'),U(s)]=e^{i\varphi(s',s)},
    \end{equation}
    where $[U(s'),U(s)]=U'(s)^{-1}U(s)^{-1}U'(s)U(s)$ and $\varphi(s',s)\in \RR/2\pi\ZZ$ is a phase factor.
\end{itemize}

It is obvious that 
\begin{equation}\label{eqAlternating}
    \left\{
    	\begin{aligned}
		\varphi(s,s)&=0~,\\
        \varphi(s,s')&=-\varphi(s',s)~.
	\end{aligned}
	\right.
\end{equation}
Also, we have $[U(s_1)U(s_2),U(s)]=e^{i\varphi(s_1,s)+i\varphi(s_2,s)}$. We shall generalize the notation $\varphi$ to a product of operators, such as
$[U(s_1s_2),U(s_3s_4)]=e^{i\varphi(s_1s_2,s_3s_4)}$; then, we have
\begin{equation}
    \varphi(s_1s_2,s)=\varphi(s_1,s)+\varphi(s_2,s)~,
\end{equation}
and this linearity is similar for the second variable. In general, we extend $\varphi$ to a bilinear and alternating function $\ZZ[S]\times \ZZ[S]\rightarrow \RR/2\pi\ZZ$, where $\ZZ[S]$ is the free Abelian group generated by $S$, whose elements are like $s_1^{n_1}\cdots s_k^{n_k}$, where $s_i\in S$ and $n_i\in \ZZ$. We extend the map $\partial: S\rightarrow A$ to $\ZZ[S]\rightarrow A$ by 
\begin{equation}
    \partial(s_1^{n_1}\cdots s_k^{n_k})=n_1\partial s_1+\cdots +n_k\partial s_k.
\end{equation}
In a many-body Hilbert space, operators with disjoint supports must commute.
Note that we do not fully characterize the tensor-product structure, but only interpret it as the following axiom.
\begin{itemize}
    \item \textbf{Locality axiom:} If $s,s'\in S$ satisfy $\supp(s)\cap\supp(s')=\emptyset$, then $U(s)$ and $U(s')$ commute. In other words,
    \begin{equation}\label{axiomLocality}
        \supp(s)\cap \supp(s')=\emptyset\implies \varphi(s',s)=0.
    \end{equation}
\end{itemize}

The configuration axiom says that for any $s\in S$ and $a\in A$, we should have 
\begin{equation}\label{axiomConfigurationOld}
    U(s)|a\rangle\propto |a+\partial s\rangle.
\end{equation}
For Pauli stabilizer models, this can be expressed in a more convenient form. If $V_1=U(s_n\cdots s_1)$\footnote{This notation means $U(s_n)\cdots U(s_1)$. Different ordering of $s_1,\cdots,s_n$ may differ by a phase factor, which does not influence the commutator.} and  $V_2=U(s_m'\cdots s_1')$ are two operators satisfying $ \partial (s_n\cdots s_1)=\partial(s_m'\cdots s_1')=0$, then any $|a\rangle$ is the common eigenstate of $V_1$ and $V_2$, so $[V_1, V_2]=1$. Thus, we have
\begin{itemize}
    \item \textbf{Configuration axiom:} 
    \begin{equation}\label{axiomConfiguration}
        h_1,h_2\in \ZZ[S],\partial h_1=\partial h_2=0\implies\varphi(h_1,h_2)=0.
    \end{equation}
\end{itemize}

We will denote the kernel of $\partial:\ZZ[S]\rightarrow A$ by $K$, referred to as the condensed subgroup. Physically, for any $k\in K$, $U(k)|a\rangle\propto|a\rangle$; this means that $U(k)$ does not create any excitation but only introduces a phase factor (which may depend on $a$).

Choosing a subset $K_0\subset K$ generating $K$, Eq.~\eqref{axiomConfiguration} is equivalent to that \begin{equation}
        \varphi(k_1,k_2)=0,\forall k_1,k_2\in K_0.
\end{equation}
For the excitation pattern $m_p(X,\ZZ)$, $\partial$ is the boundary map $\partial:C_{p+1}(X,\ZZ)\rightarrow B_{p}(X,\ZZ)$, so $K=Z_{p+1}(X,\ZZ)$. Our language aligns with the intuition of string-net models: closed strings are condensed, while open strings create excitations at their boundaries. For $m_p(X,\ZZ_N)$, the kernel of $\partial:C_{p+1}(X,\ZZ)\rightarrow B_{p}(X,\ZZ_N)$ is $Z_{p+1}(X,\ZZ)+N\cdot C_{p+1}(X,\ZZ)$. Intuitively, besides closed strings, we also need to condense all anyons that are zero in the fusion group.

Eq.~\eqref{axiomPauli}, \eqref{axiomLocality}, and \eqref{axiomConfiguration} characterize Pauli realizations. Note that although the configuration axiom is deduced from Eq.~\eqref{axiomConfigurationOld}, all of the three axioms do not have direct references to configuration states $|a\rangle$. Actually, we can reconstruct them as the common eigenstates of all $U(k)$ for $k$ condensed. This does not determine $|a\rangle$, especially when these eigenstates have degeneracy, but in our theory of statistics, we only care about the information extracted from the function $\varphi$ in Eq.~\eqref{axiomPauli}. Thus, we define Pauli realization of an excitation pattern as follows.
\begin{definition}
    Let $m=(A,S,\partial,\supp)$ be an excitation pattern. A Pauli realization of $m$ is a Hilbert space $\mathcal{H}$ and a collection of unitary operators $\{U(s)|s\in S\}$ satisfying Eq.~\eqref{axiomPauli}, \eqref{axiomLocality}, and \eqref{axiomConfiguration}. We say two Pauli realizations are equal if and only if their corresponding functions $\varphi:\ZZ[S]\times \ZZ[S]\rightarrow A$ are equal. 
\end{definition}
We denote all Pauli realizations of $m$ by $R(m)$. Equivalently, $R(m)$ consists of functions $\varphi:S\times S\rightarrow \RR/2\pi\ZZ$ that is a solution of Eq.~\eqref{axiomPauli}, \eqref{axiomLocality}, and \eqref{axiomConfiguration}. By definition $R(m)$ is a sub-Abelian group of $(\RR/2\pi\ZZ)[S\times S]$.

$R(m)$ is, in general, very large, and not all its information is essential. Later, we will quotient it into several equivalence classes, each corresponding to a specific type of statistics. Then, to measure these statistics, we construct some particular linear maps $e: R(m)\rightarrow \RR/2\pi\ZZ$ which have different values only in different equivalence classes of $R(m)$. We will call these maps \textit{statistical expressions}. For linear maps that do not have this property, we will generally call them \textit{expressions}.

\begin{definition}
    For any excitation pattern $m=(A,S,\partial,\supp)$, its Pauli expression group $E(m)=\ZZ[S\times S]$ is the free Abelian group generated by $\mathsf{c}(s',s)$ for $s',s\in S$, whose elements are called expressions. 
\end{definition}
$\mathsf{c}(s',s)$ ($\mathsf{c}$ stands for commutator) is not a phase factor but only a basis element in $E(m)$, while it can be viewed as a linear map:
\begin{equation}
    \begin{aligned}
        \mathsf{c}(s',s): R(m)&\rightarrow \RR/2\pi\ZZ;\\
        \varphi&\mapsto \varphi(s',s).
    \end{aligned}
\end{equation}
From this view, the concept of expression and realization are dual in some sense. Actually, $E(m)\simeq \ZZ[S\times S]$ is the Pontryagin dual of $(\RR/2\pi\ZZ)[S\times S]$. Because  $R(m)$ is a subgroup: $R(m)\subset (\RR/2\pi\ZZ)[S\times S]$, its Pontryagin dual is a quotient group $E(m)/E_\id(m)$, where $E_\id(m)$ is the annihilator of $R(m)$:
\begin{equation}
    E_\id(m)=\{e\in E(m)|e(\varphi)=0,\forall \varphi\in R(m)\}.
\end{equation}
We will use the notation $e\sim0$ to denote that $e\in E_\id(m)$. Thus, the definition of $E_\id$ means:
\begin{equation}\label{eqAlternating}
    \left\{
    	\begin{aligned}
		\mathsf{c}(s,s)&\sim0;\\
        \mathsf{c}(s,s')&\sim-\mathsf{c}(s',s);\\
        \mathsf{c}(s',s)&\sim 0 \text{ if }\supp(s)\cap \supp(s')=\emptyset;\\
        \mathsf{c}(k',k)&\sim 0 \text{ if }\partial k=\partial k'=0.
	\end{aligned}
	\right.
\end{equation}
Here, we have bilinearly extended $\mathsf{c}$ to $\ZZ[S]$ for both variables.

Next, we define statistical expressions, which form a subgroup
$E_{\inv}(m)\supset E_{\id}(m)$. In the non-Pauli setting, several closely related criteria for selecting $E_{\inv}$ have been proposed in the literature. However, some of them are not suitable for Pauli realizations. In particular, the criterion based on invariance under local perturbations (see, e.g., Ref.~\cite{FHH21}, Definition~III.2 of Ref.~\cite{kobayashi2024generalized}, and Theorem~III.3 of Ref.~\cite{xue2025statisticsabeliantopologicalexcitations}) does not work in the Pauli case. Likewise, the criterion that $E_{\inv}(m)/E_{\id}(m)$ is the torsion subgroup of $E(m)/E_{\id}(m)$ fails here: for fusion group $G=\ZZ_n$ one always has $n^2\,\mathsf{c}(s',s)\in E_{\id}(m)$. In our Pauli framework, the definition that remains effective is the \emph{localization} criterion (Definition~III.7 of Ref.~\cite{xue2025statisticsabeliantopologicalexcitations}), which we adopt below.

For any excitation pattern $m=(A,S,\partial,\supp)$ on a topological space $M$ and any point $x\in M$, we define a new excitation pattern
\[
    m|_x=(A,S,\partial,\supp')
\]
by modifying only the support map:
\[
    \supp'(s):=\supp(s)\cap\{x\}.
\]
Equivalently, $\supp'(s)$ is nonempty precisely when $x\in\supp(s)$, and $\supp'(s_1)\cap\supp'(s_2)=\emptyset$ unless both $\supp(s_1)$ and $\supp(s_2)$ contain $x$.
We call $m|_x$ the \emph{localization} of $m$ at $x$. Intuitively, a realization of $m|_x$ is a realization of $m$ in which every operator $U(s)$ is additionally supported at $x$; in particular, if $x\notin\supp(s)$ then $U(s)$ must be a pure phase.

Formally, $R(m|_x)\subset R(m)$, and $\varphi\in R(m|_x)$ must satisfy the additional constraint
\begin{equation}
    \varphi(s',s)=0
    \qquad
    \text{if } x\notin\supp(s') \text{ or } x\notin\supp(s)~.
\end{equation}
Dually, $E_{\id}(m|_x)\supset E_{\id}(m)$: it contains the additional generators $\mathsf{c}(s',s)$ whenever $x\notin\supp(s')$ or $x\notin\supp(s)$.

Since statistics is inherently nonlocal, any realization of $m|_x$ should be regarded as having trivial statistics. This motivates the following definition.
\begin{definition}
The \textbf{Pauli statistics} of $m$ is
\begin{equation}
    T_{\mathrm P}^*(m)
    :=
    R(m)\Big/\sum_{x\in M} R(m|_x),
\end{equation}
where $\sum_{x\in M} R(m|_x)$ denotes the subgroup generated by the subgroups $R(m|_x)\subset R(m)$. Its dual group is
\begin{equation}
    T_{\mathrm P}(m):=E_{\inv}(m)/E_{\id}(m),
\end{equation}
where
\begin{equation}
    E_{\inv}(m)
    \simeq
    \bigcap_{x\in M} E_{\id}(m|_x)
\end{equation}
is called the group of \emph{statistical expressions}.
\end{definition}

In Appendix~\ref{app: Statistics in non-Pauli models}, we review the definitions of statistics $T(m)$ and $T^*(m)$ in the general (non-Pauli) setting. We then construct a canonical map
\[
    T_{\mathrm P}^*(m)\longrightarrow T^*(m)
\]
and its dual map $T(m)\to T_{\mathrm P}(m)$. The map $T_{\mathrm P}^*(m)\to T^*(m)$ is not surjective in general, which indicates that some statistical invariants cannot be realized within Pauli stabilizer models. If this map were not injective, it would suggest the existence of Pauli realizations that can be trivialized only by non-Pauli transformations. At present, we do not know whether $T_{\mathrm P}^*(m)\to T^*(m)$ is injective: we have neither a proof nor a counterexample.

For small values of $p$, $d$, and $G$, we compute $T_{\mathrm P}\!\bigl(m_p(\partial\Delta^{d+1},G)\bigr)$ directly from the definition using a computer, following an approach similar to Refs.~\cite{kobayashi2024generalized,xue2025statisticsabeliantopologicalexcitations}. Below, we summarize several representative outcomes, together with conjectural patterns for general $G$. In all cases, we find
\[
    T_{\mathrm P}^*\!\bigl(m_p(\partial\Delta^{d+1},G)\bigr)\subset H^{d+2}(B^{d-p}G,\RR/\ZZ).
\]
\begin{itemize}
    \item \textbf{Particles} ($p=0$). We always find $T\simeq T_{\mathrm P}$; that is, all particle statistics can be realized by Pauli stabilizer models.
    \item \textbf{Loops} ($p=1$). We find $T_{\mathrm P}=0$ for $d=2$ and $T_{\mathrm P}\simeq T$ for $d\ge 3$. Thus, Pauli stabilizer models realize all loop statistics in spatial dimension $d\ge3$, but not in $d=2$.
    \item \textbf{Membranes} ($p=2$). We find $T_{\mathrm P}=0$ for $d\le 4$. For $d\ge 5$, Pauli stabilizer models do not realize the Pontryagin statistics of Ref.~\cite{feng2025anyonic}, while the remaining part is realizable. In particular,
    \[
        T_{\mathrm P}\!\bigl(m_2(\partial\Delta^{6+1},\ZZ_2)\bigr)\simeq \ZZ_4,
    \]
    corresponding to a ``semionic-membrane'' statistic in $d=6$ spatial dimensions, analogous to semions in $d=2$.
\end{itemize}

\subsection{Examples of statistical expressions}

As an example,  let $m=m_1(\partial\Delta^{3+1},\ZZ_2)$. We have already shown that the expression detecting fermionic loops
\begin{equation}\label{eqStatisticalExpressionFermionicLoop}
    e=2\mathsf{c}(s_{012},s_{034})+2\mathsf{c}(s_{013},s_{024})+2\mathsf{c}(s_{014},s_{023})\in E_{\operatorname{inv}}(m)
\end{equation}
 is nontrivial, i.e., $e\notin E_\id(m)$, since it gives nonzero values for fermionic-loop toric-code model. Next, we show that $e\in E_\inv(m)$. We have $e\in E_{\operatorname{id}}(m|_x)$ for $x=1,2,3,4$ trivially, so it remains to check $e\in E_\id(m|_0)$. 

 For notation simplicity, we denote a $p$-simplex by writing down its vertices directly; we also write $\ZZ[S]$ additively as chains. Then, we have

 \begin{equation}\label{eqStatisticalExpressionFermionicLoop}
    \begin{aligned}
        e&=2\mathsf{c}(012,\,034)+2\mathsf{c}(013,\,024)+2\mathsf{c}(014,\,023)\\&=\mathsf{c}(2\cdot012,\,034)+\mathsf{c}(2\cdot013,\,024)+\mathsf{c}(2\cdot014,\,023)~,
    \end{aligned}
\end{equation}
where ``$2\cdot0ij$'' means multiplying a chain $0ij$ by two. Because $2\cdot 012$ and $\partial(0134)$ are both in $K$, we have $\mathsf{c}(2\cdot 012,\,\partial(0134))\sim 0$. Thus, we have

\begin{equation}
    e\sim \mathsf{c}(2\cdot012,\,034+\partial(0134))+\mathsf{c}(2\cdot013,\,024+\partial(0124))+\mathsf{c}(2\cdot014,\,023+\partial(0123))~.
\end{equation}
Expanding it and using $4c(s,s')\sim 0$, we have (in the original notation)
\begin{equation}
    e\sim2\mathsf{c}(s_{012},s_{134})+2\mathsf{c}(s_{013},s_{124})+2\mathsf{c}(s_{014},s_{123})~.
\end{equation}
This expression looks similar to $e$, with $0$ and $1$ exchanged. Then, it is obvious that  $e\in E_\id(m|_0)$, so $e\in E_\inv(m)$.

Next, we introduce some notations. For a finite subset $\tau \in \ZZ$, we denote the sum of all elements of $\tau$ by $\sum\tau$: 
\begin{equation}
    \sum\tau:=\sum_{i\in\tau}i.
\end{equation}
For any $k\in \ZZ$, we use $\#(\tau<k)$ to denote the number of elements less than $k$. For example, when $\tau=\{2,3,5,6\}$, we have $\sum\tau=16$ and $\#(\tau<5)=2$.

We sometimes write $\tau$ as $\tau_p$ to stress that $\tau$ contains $p$ elements, in which case $\tau$ represents a $(p-1)$-simplex. For $\tau,\sigma\subset \ZZ$ and $\tau\cap\sigma=\emptyset$, we use the notation $\tau\sqcup\sigma$ to represent their disjoint union. 

Now we generalize Eq.~\eqref{eqStatisticalExpressionFermionicLoop} to $p>1$ cases. For $p$-dimensional excitations, the spatial dimension is $d=2p+1$, and the set of vertices is $\{0,\cdots,2p+2\}$. Similar to Eq.~\eqref{eqStatisticalExpressionFermionicLoop}, for every term $\mathsf{c}(s',s)$, we have $s'\cap s=\{0\}$ and $1\in s$. Remaining vertices  are divided as $\{2,\cdots,2p+2\}=\tau_p\sqcup\sigma_{p+1}$, and we will write the corresponding term as $\mathsf{c}(\overline{01\tau},\overline{0\sigma})$.

\begin{theorem}
    Let $m=m_{p}(\partial\Delta^{2p+2},\ZZ_{2N})$. Then,
    \begin{equation}
        e=2N^2\sum_{\tau_{p}\sqcup\sigma_{p+1}=\{2,\cdots,2p+2\}}\mathsf{c}(\overline{01\tau},\overline{0\sigma})~.
    \end{equation}
    represents an order-$2$ element in $T_{\mathrm P}(m)$.
    \label{thm: order 2 statistics for Z_2N}
\end{theorem}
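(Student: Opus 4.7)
The plan is to establish three statements: (a) $e \in E_{\inv}(m)$, so that $e$ defines a class in $T_{\mathrm P}(m)$; (b) $2e \in E_{\id}(m)$, so this class has order dividing two; (c) $e \notin E_{\id}(m)$, so the order is exactly two. Throughout, the condensed subgroup $K \subset C_{p+1}(X,\ZZ)$ with $X = \partial\Delta^{2p+2}$ contains both $Z_{p+1}(X,\ZZ)$ and $2N\cdot C_{p+1}(X,\ZZ)$, and we repeatedly invoke the kernel axiom $\mathsf{c}(k',k)\sim 0$ whenever $k,k'\in K$.

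Claim (b) is immediate: $2e = \sum \mathsf{c}(2N\cdot\overline{01\tau},\,2N\cdot\overline{0\sigma})\sim 0$, since both arguments lie in $K$. The same reasoning yields the corollary $4N^2\,\mathsf{c}(s,s') = \mathsf{c}(2Ns,2Ns')\sim 0$ for all $s,s'\in S$, which we use repeatedly.

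For Claim (a), we verify $e \in E_{\id}(m|_x)$ at each vertex $x$ of $X$. If $x\in\{1,2,\ldots,2p+2\}$, every summand already lies in $E_{\id}(m|_x)$ by locality: vertex $1$ never appears in $\overline{0\sigma}$ because $\sigma\subset\{2,\ldots,2p+2\}$, and for $x\ge 2$ the vertex $x$ belongs to exactly one of $\tau,\sigma$, hence to exactly one of the two simplices. The nontrivial case is $x=0$, which we handle via the boundary-substitution trick from the $p=1$ argument in the excerpt. Since $2N\cdot\overline{01\tau}\in K$ and $\partial\overline{01\sigma}\in Z_{p+1}(X,\ZZ)\subset K$, the kernel axiom gives $2N^2\,\mathsf{c}(\overline{01\tau},\partial\overline{01\sigma})\sim 0$; hence we may substitute $\overline{0\sigma}\to\overline{0\sigma}+\partial\overline{01\sigma}$. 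Expanding the simplicial boundary,
\[
\overline{0\sigma}+\partial\overline{01\sigma} = \overline{1\sigma} + \sum_{j=1}^{p+1}(-1)^{j+1}\overline{01(\sigma\setminus\{\sigma_j\})},
\]
where $\sigma_1<\cdots<\sigma_{p+1}$ are the elements of $\sigma$ in increasing order. The $\overline{1\sigma}$ pieces lie in $E_{\id}(m|_0)$ because $\overline{1\sigma}$ omits $0$, leaving the residual
\[
R := 2N^2\sum_{\tau_p\sqcup\sigma_{p+1}=\{2,\ldots,2p+2\}}\sum_{j=1}^{p+1}(-1)^{j+1}\,\mathsf{c}\bigl(\overline{01\tau},\overline{01(\sigma\setminus\{\sigma_j\})}\bigr).
\]
Reindex $R$ by ordered pairs of disjoint size-$p$ subsets $(\tau,\sigma')$ of $\{2,\ldots,2p+2\}$, with the missing vertex $i=\{2,\ldots,2p+2\}\setminus(\tau\cup\sigma')$ uniquely determined; the pair $(\tau,\sigma')$ contributes coefficient $2N^2(-1)^{\#(\sigma'<i)}$ on $\mathsf{c}(\overline{01\tau},\overline{01\sigma'})$, while the swapped pair $(\sigma',\tau)$ contributes $2N^2(-1)^{\#(\tau<i)}$ on $\mathsf{c}(\overline{01\sigma'},\overline{01\tau})\sim -\mathsf{c}(\overline{01\tau},\overline{01\sigma'})$. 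The combined coefficient is thus $2N^2\bigl[(-1)^{\#(\sigma'<i)}-(-1)^{\#(\tau<i)}\bigr]$, which is either $0$ or $\pm 4N^2$; in the latter case it is trivial modulo $E_{\id}(m)$ by the corollary above. Hence $R\sim 0$, completing Claim (a).

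Claim (c) is established by exhibiting an explicit Pauli realization in which $e$ evaluates nontrivially. The fermionic-brane toric codes constructed in Sec.~\ref{sec: (4+1)D fermionic-loop toric code} and Sec.~\ref{sec: Fermionic-loop toric codes in arbitrary dimensions}, restricted to the subcomplex $\partial\Delta^{2p+2}$, provide such realizations: their flux-brane membrane operators, whose pairwise commutators are encoded by higher cup products, yield the value $-1$ on $e$. This generalizes the explicit $\mu_{24}=-1$ calculation for $p=1$ via the same strategy of Poincar\'e-dualization, application of the higher-cup Leibniz identities, and evaluation on a top cochain, as carried out in Theorem~\ref{thm: fermionic flux loop statistics in (d+1)D}. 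The main obstacle in the proof is the sign bookkeeping in the residual $R$ of Claim (a): one must verify that orientations coming from $\partial\overline{01\sigma}$ across the swapped partitions $(\tau,\sigma'\cup\{i\})$ and $(\sigma',\tau\cup\{i\})$ combine in every case either to $0$ or to a multiple of $4N^2$, so that no surviving coefficient obstructs localization at $0$.
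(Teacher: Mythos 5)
Your Claims (a) and (b) are correct and follow essentially the same route as the paper: the substitution $\overline{0\sigma}\mapsto\overline{0\sigma}\pm\partial\overline{01\sigma}$ justified by the kernel axiom, discarding the $\overline{1\sigma}$ pieces by locality at $0$, and pairwise cancellation of the residual using $4N^2\,\mathsf{c}(s,s')=\mathsf{c}(2Ns,2Ns')\sim 0$; your sign bookkeeping there is in fact cleaner than the paper's and is not where the difficulty lies.

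The genuine gap is in Claim (c). The paper proves $e\notin E_{\id}(m)$ by writing down an \emph{abstract} realization $\varphi\in R(m)$ by hand: it assigns the phase $\tfrac{1}{4N^2}$ to the pairs $\bigl(\{i\}\sqcup\{0,\dots,p\},\,\{j\}\sqcup\{p+2,\dots,2p+2\}\bigr)$ (antisymmetrized, zero otherwise), checks the configuration axiom directly, and finds $e(\varphi)=\tfrac12$ for every $p$ and every $N$. Your appeal to the fermionic-loop/membrane/volume toric codes does not achieve this in general. Those models have $\ZZ_2$ fusion and their excitation-operator commutators are fourth roots of unity, so on the induced realization of $m_p(\partial\Delta^{2p+2},\ZZ_{2N})$ one gets $\sum\varphi(\overline{01\tau},\overline{0\sigma})=\tfrac14$ and hence
\begin{equation}
    e(\varphi)=2N^2\cdot\tfrac14=\tfrac{N^2}{2}\pmod 1,
\end{equation}
which vanishes whenever $N$ is even; your witness therefore fails to show $e\notin E_{\id}(m)$ precisely for even $N$ (it works only for odd $N$, e.g.\ the $N=1$ cases the paper cites). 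In addition, the paper's explicit lattice constructions cover only $p\le 3$ and live in ambient dimensions generally larger than $2p+1$, so "restricting to $\partial\Delta^{2p+2}$" is not available for general $p$, whereas the theorem is stated for all $p$. To close the gap you need either a realization with commutator phases of order $4N^2$ (the paper's abstract $\varphi$ is the simplest such choice) or an actual $\ZZ_{2N}$-qudit model with the requisite statistics, neither of which is supplied by the models you invoke.
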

\begin{proof}
    To prove $e\in E_\inv(m)$, it is enough to prove $e\in E_\id(m|_0)$.

    During the proof, we will write $\ZZ[S]\simeq C_{p+1}(\partial\Delta^{2p+2},\ZZ)$ additively. We have
    \begin{equation}
        \begin{aligned}
            e&= \sum_{\tau_{p}\sqcup\sigma_{p+1}=\{2,\cdots,2p+2\}}\mathsf{c}(2N\overline{01\tau},N\overline{0\sigma})\\&\sim \sum_{\tau_{p}\sqcup\sigma_{p+1}=\{2,\cdots,2p+2\}}\mathsf{c}(2N\overline{01\tau},N\overline{0\sigma}-N\partial \overline{01\sigma})\\&=\sum_{\tau_{p}\sqcup\sigma_{p+1}=\{2,\cdots,2p+2\}}\mathsf{c}(2N\overline{01\tau},N\overline{0\sigma}+\sum_{i}(-1)^iN\overline{01\partial_i\sigma})~.
        \end{aligned}    
    \end{equation}
    Writing $\sigma=\{a_0<a_1<\cdots<a_p\}$, $\partial_i\sigma$ means $\{a_0,\cdots,a_{i-1},a_{i+1},\cdots,a_p\}$.
    
    In this summation, $\mathsf{c}(s_{01\tau}^{2N},s_{1\sigma}^{N})$ is already in $E_\id(m|_0)$; in the summation of $\tau_p\subset \{2,\cdots,2p+2\}$ and $i\in \sigma$, every combination of $(\tau,\sigma,i)$ cancels with $(\sigma-\{i\},\tau\sqcup\{i\},i)$. Thus, $e\in E_\id(m|_0)$.

    It is obvious that $2e\sim 0$, and next we only need to prove $e\notin E_\id(m)$. To prove that, we need to construct $\varphi\in R(m)$ such that $e(\varphi)\ne 0$. For any $\sigma_{p+2},\sigma'_{p+2}\subset \{0,\cdots, 2p+2\}$, we define
    \begin{equation}
        \varphi(s_{\{i\}\sqcup\{0,\cdots,p\}},s_{\{j\}\sqcup\{p+2,\cdots,2p+2\}})=\frac{1}{4N^2}~,
    \end{equation}
    making it anti-symmetric, and let $\varphi=0$ in other cases. We have $e(\varphi)=1/2$, and next, we check that $\varphi$ satisfies Eq.~\eqref{axiomConfiguration}. First, we have $4N^2\varphi(s',s)=0\in \RR/\ZZ$; second, for any $p+1\le i<i'\le 2p+2$, we have
    \begin{equation}
        \varphi(\partial s_{\{i,i'\}\sqcup\{0,\cdots,p\}},s_{\{j\}\sqcup\{p+2,\cdots,2p+2\}})=0~.
    \end{equation}
    Actually, only $\{i\}\sqcup\{0,\cdots,p\}$ and $\{i'\}\sqcup\{0,\cdots,p\}$ contribute, but they cancel each other.
\end{proof}

\begin{theorem}
    Let $m=m_{p}(\partial\Delta^{2p+2},\ZZ_{N})$ and $p$ is even. Then,
    \begin{equation}
        e=N\sum_{\tau_{p}\sqcup\sigma_{p+1}=\{2,\cdots,2p+2\}}(-1)^{\sum_{x\in\tau}x}\mathsf{c}(\overline{01\tau},\overline{0\sigma})\in E_\inv(m)~.
    \end{equation}
    \label{thm: even p dim 2p+1 statistics}
\end{theorem}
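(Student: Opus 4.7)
The plan is to verify $e\in E_\inv(m)=\bigcap_x E_\id(m|_x)$ vertex by vertex, following the strategy of the preceding $\ZZ_{2N}$ theorem. For $x\in\{1,2,\dots,2p+2\}$ this will be immediate from support considerations: every summand $\mathsf{c}(\overline{01\tau},\overline{0\sigma})$ has arguments supported on $\{0,1\}\cup\tau$ and $\{0\}\cup\sigma$, and since $\tau\sqcup\sigma=\{2,\dots,2p+2\}$, the vertex $1$ appears only in the first argument while any $x\in\{2,\dots,2p+2\}$ lies in exactly one of $\tau,\sigma$. Hence $x$ fails to appear in both supports, and each commutator lies in $E_\id(m|_x)$ term by term.

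The real content of the theorem is therefore the case $x=0$. The key input is that $N\overline{01\tau}\in K$ (as a multiple of $N$ in $C_{p+1}$) and $\partial\overline{01\sigma}\in Z_{p+1}\subset K$ (as a boundary), which gives $N\mathsf{c}(\overline{01\tau},\partial\overline{01\sigma})\sim 0$. Subtracting this from each summand and expanding $\partial\overline{01\sigma}=\overline{1\sigma}-\overline{0\sigma}+\sum_i(-1)^i\overline{01\partial_i\sigma}$, I rearrange to
\begin{equation*}
N\mathsf{c}(\overline{01\tau},\overline{0\sigma})\sim N\mathsf{c}(\overline{01\tau},\overline{1\sigma})+N\sum_i(-1)^i\mathsf{c}(\overline{01\tau},\overline{01\partial_i\sigma}).
\end{equation*}
After weighting by $(-1)^{\sum\tau}$ and summing, the first piece is manifestly in $E_\id(m|_0)$ because $\overline{1\sigma}$ avoids the vertex $0$.

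It remains to show that the residual
\begin{equation*}
Y:=N\sum_{\tau,\sigma,i}(-1)^{\sum\tau+i}\mathsf{c}(\overline{01\tau},\overline{01\partial_i\sigma})
\end{equation*}
actually vanishes in $E(m)/E_\id(m)$, which is precisely where the parity hypothesis enters. I will reparametrize each term by an ordered triple $(A,B,c)$ consisting of disjoint $p$-subsets $A,B\subset\{2,\dots,2p+2\}$ together with the singleton complement $c=\{2,\dots,2p+2\}\setminus(A\cup B)$: the commutator $\mathsf{c}(\overline{01A},\overline{01B})$ arises from two triples, namely $(\tau,\sigma,i)=(A,B\cup\{c\},i_c^B)$ and $(B,A\cup\{c\},i_c^A)$, where $i_c^X:=\#\{x\in X:x<c\}$. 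Applying the antisymmetry relation $\mathsf{c}(\overline{01B},\overline{01A})\sim -\mathsf{c}(\overline{01A},\overline{01B})$ to unify the orientation, the net coefficient of $\mathsf{c}(\overline{01A},\overline{01B})$ becomes
\begin{equation*}
(-1)^{\sum A+i_c^B}-(-1)^{\sum B+i_c^A}.
\end{equation*}
Using $i_c^A+i_c^B=c-2$ together with $\sum A+\sum B=S-c$, where $S:=\sum_{k=2}^{2p+2}k=(p+1)(2p+3)-1$, a short calculation shows that the two exponents differ by $S\pmod 2$. When $p$ is even both $p+1$ and $2p+3$ are odd, so $S$ is even, the exponents agree, and $Y\sim 0$.

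The main obstacle will be the sign bookkeeping: correctly tracking the orientation sign introduced by the antisymmetry identification together with the $(-1)^i$ from the boundary expansion and the $(-1)^{\sum\tau}$ built into $e$, so that the single parity check on $S$ isolates exactly the even-$p$ hypothesis. The complementary failure for odd $p$ (where $S$ is odd and the two exponents have opposite parity) is consistent with the absence of a stable $\ZZ_N$ anyonic $p$-brane statistic in $d=2p+2$ recorded in Table~\ref{tab:flux-anomaly-summary}, which provides a useful sanity check on the computation.
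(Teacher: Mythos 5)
Your proposal is correct and takes essentially the same route as the paper's proof: the localizations at $x\neq 0$ are trivial by support, the key step at $x=0$ uses $N\,\mathsf{c}(\overline{01\tau},\partial\overline{01\sigma})\sim 0$ to trade $\overline{0\sigma}$ for $\overline{1\sigma}$ plus the $\overline{01\partial_i\sigma}$ terms, and the residual double sum is cancelled by the same pairwise matching, with your parity count $S-2\equiv\sum_{x=2}^{2p+2}x\pmod 2$ reproducing exactly the paper's criterion that this sum is even precisely when $p$ is even. Your $(A,B,c)$ bookkeeping is just a relabeling of the paper's pairing $(\tau,\sigma,i)\leftrightarrow(\tau',\sigma',i')$ with $c=k$.
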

\begin{proof}
    Similar to the previous theorem, we have
    \begin{equation}
        e\sim \sum_{\tau_{p}\sqcup\sigma_{p+1}=\{2,\cdots,2p+2\}}(-1)^{\sum_{x\in\tau}x}\mathsf{c}(N\overline{01\tau},\overline{0\sigma}+\sum_{i=0}^p(-1)^i\overline{01\partial_i\sigma})~,
    \end{equation}
    and we will prove that
    \begin{equation}
        \sum_{\tau_{p}\sqcup\sigma_{p+1}=\{2,\cdots,2p+2\}}\sum_{i=0}^{p}(-1)^{i+\sum_{x\in\tau}x}\mathsf{c}(\overline{01\tau},\overline{01\partial_i\sigma})\sim 0~.
    \end{equation}
    Actually, any combination of $(\tau,\sigma,i)$ is paired with $(\tau',\sigma',i')$, such that $\tau=\partial_{i'}\sigma'$ and $\tau'=\partial_i\sigma$ (this implies $\tau\cap\tau'=\emptyset$), and the corresponding terms in the sum are
    \begin{equation}
        (-1)^{i+\sum_{x\in\tau}x}\mathsf{c}(\overline{01\tau},\overline{01\tau'})+(-1)^{i'+\sum_{x\in\tau'}x}\mathsf{c}(\overline{01\tau'},\overline{01\tau})~,
    \end{equation}
    and it is equivalent to zero if and only if 
    \begin{equation}
        (-1)^{i+\sum_{x\in\tau}x}=(-1)^{i'+\sum_{x\in\tau'}x}~,
    \end{equation}
    i.e.,
    \begin{equation}
        i+i'+\sum_{x\in\tau\sqcup\tau'}x =0 \pmod{2}~.
    \end{equation}
    From the relations among $\tau$ and $\tau',i$, we find that
    \begin{equation}
        \tau\sqcup \tau'=\{2,\cdots,2p+2\}-\{k\}~.
    \end{equation}
    Actually, $k$ is the $i$-th element of $\sigma=\tau'\sqcup\{k\}$ and also the $i'$-th element of $\sigma'=\tau\sqcup\{k\}$. This implies that $i+i'=k-2$. Thus,

    \begin{equation}
        i+i'+\sum_{x\in\tau\sqcup\tau'}x  =\sum_{x=2}^{2p+2}x \pmod{2}~,
    \end{equation}
    and it is zero if and only if $p$ is even.
\end{proof}

\begin{theorem}
    Let $m=m_{p}(\partial\Delta^{2p+3},\ZZ_{N})$ and $p$ is even. Then,
    \begin{equation}\label{eq: Statistics2p+2}
        e=  {\sum}'_{\{k\}\sqcup\tau_{p+1}\sqcup\sigma_{p+1}=\{1,\cdots,2p+3\}}(-1)^{\#(\tau<k)+k+\sum\tau}\mathsf{c}(0\tau,0\sigma)\in E_\inv(m)~,
    \end{equation}
    where two terms obtained by exchanging $\sigma$ and $\tau$ are only counted once.
    \label{thm: even p dim 2p+2 statistics}
\end{theorem}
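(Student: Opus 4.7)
Just as in Theorems~\ref{thm: order 2 statistics for Z_2N} and~\ref{thm: even p dim 2p+1 statistics}, it suffices to prove $e\in E_\id(m|_x)$ for every vertex $x$ of $\partial\Delta^{2p+3}$, since $E_\inv(m)=\bigcap_x E_\id(m|_x)$. The case $x\neq 0$ is free: every term $\mathsf{c}(\overline{0\tau},\overline{0\sigma})$ appearing in $e$ has $\supp(\overline{0\tau})\cap\supp(\overline{0\sigma})=\{0\}\cup(\tau\cap\sigma)=\{0\}$, because $\tau$ and $\sigma$ are disjoint; hence $x$ lies outside this intersection and the term automatically belongs to $E_\id(m|_x)$ by the localization relation. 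The remaining substantive task is to prove $e\in E_\id(m|_0)$.

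For this case, my plan is to exploit, for each admissible triple $(k,\tau,\sigma)$, the cycle-cycle relation
\[
\mathsf{c}\bigl(\partial\overline{0k\tau},\,\partial\overline{0k\sigma}\bigr)\sim 0\quad\text{in }E_\id(m),
\]
which holds because both arguments are $\partial$-images and hence $\ZZ_N$-cycles. I will expand each factor as a signed sum over its $(p{+}1)$-faces: the face that deletes $0$ gives $\overline{k\tau}$, the face that deletes $k$ gives $\pm\overline{0\tau}$, and the remaining faces give $\pm\overline{0k\tau^{(i)}}$ for $\tau^{(i)}:=\tau\setminus\{b_i\}$, where the signs depend on the position of the deleted vertex in the sorted list of $\overline{0k\tau}$, and therefore on $\#(\tau<k)$. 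After bilinear expansion I will reduce each resulting cross commutator in $E_\id(m|_0)$ using two standard simplifications: (i) terms involving $\overline{k\tau}$ or $\overline{k\sigma}$ vanish by localization at $0$; (ii) terms with disjoint supports vanish in $E_\id(m)$. The surviving content of each cycle-cycle identity is a signed combination of $\mathsf{c}(\overline{0\tau},\overline{0\sigma})$ together with \emph{cross terms} of the forms $\mathsf{c}(\overline{0\tau},\overline{0k\sigma^{(j)}})$, $\mathsf{c}(\overline{0k\tau^{(i)}},\overline{0\sigma})$, and $\mathsf{c}(\overline{0k\tau^{(i)}},\overline{0k\sigma^{(j)}})$.

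Next I sum these identities over all admissible $(k,\tau,\sigma)$ weighted by $(-1)^{\#(\tau<k)+k+\sum\tau}$, the same weights appearing in $e$. With these weights, the $\mathsf{c}(\overline{0\tau},\overline{0\sigma})$ contributions assemble (up to an overall sign) into $e$, and the remaining cross terms should cancel pairwise: a cross term $\mathsf{c}(\overline{0k\tau^{(i)}},\overline{0k\sigma^{(j)}})$ arising from $(k,\tau,\sigma)$ by removing $b_i\in\tau$ is matched with the corresponding cross term from the triple obtained by swapping $k\leftrightarrow b_i$, namely $(b_i,\,(\tau\setminus\{b_i\})\cup\{k\},\,\sigma)$; the mixed cross terms $\mathsf{c}(\overline{0\tau},\overline{0k\sigma^{(j)}})$ and $\mathsf{c}(\overline{0k\tau^{(i)}},\overline{0\sigma})$ are paired in the analogous way, keeping in mind the primed sum convention that identifies $\{\sigma,\tau\}$ with its swap. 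Combining these pairings should yield $e\sim 0$ in $E_\id(m|_0)$.

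The main obstacle is the sign bookkeeping. Each face of $\partial\overline{0k\tau}$ carries a position-dependent sign, the antisymmetry of $\mathsf{c}$ contributes further signs, and the defining weight $(-1)^{\#(\tau<k)+k+\sum\tau}$ must conspire with all of these so that the two members of every pair carry opposite overall signs. The parity hypothesis \emph{$p$ even} enters exactly as in Theorem~\ref{thm: even p dim 2p+1 statistics}: a modular parity identity analogous to $i+i'+\sum_{x\in\tau\sqcup\tau'}x \equiv k-2 \pmod 2$ in that proof forces an overall factor of $(-1)^{p}$, which disappears precisely when $p$ is even. A useful sanity check is the case $p=0$, in which $e$ is, up to sign, the standard $(2{+}1)$D anyon statistics $\theta$, and the entire argument collapses to the single cycle-cycle identity $\mathsf{c}(\partial\overline{012},\partial\overline{013})\sim 0$; after discarding terms trivial at $0$, one is left with exactly $\theta\sim 0$ in $E_\id(m|_0)$.
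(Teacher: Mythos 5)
Your overall skeleton (reduce to showing $e\in E_\id(m|_0)$, seed the argument with cycle--cycle relations $\mathsf{c}(\partial\overline{0k\tau},\partial\overline{0k\sigma})\sim 0$, then discard the faces missing the vertex $0$ and the disjoint-support terms) is sound and close in spirit to the paper's proof, but the cancellation scheme you specify does not work, and this is where the entire content of the proof lies. Concretely: (i) your pairing for the double cross terms is wrong. From the partner triple $(b_i,(\tau\setminus\{b_i\})\cup\{k\},\sigma)$ the first simplex $\overline{0\,b_i\,((\tau\setminus\{b_i\})\cup\{k\})}$ is the \emph{same} simplex $\overline{0k\tau}$, but every double cross term it produces has \emph{both} arguments containing $b_i$ and its second argument never contains $k$ (since $k\notin\sigma$); hence none of its terms sits on the same pair of $(p{+}1)$-simplices as $\mathsf{c}(\overline{0k\tau^{(i)}},\overline{0k\sigma^{(j)}})$, and no pairwise cancellation between these two identities can occur. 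If a pairing exists at all, it must keep $k$ fixed and exchange an element of $\tau$ with an element of $\sigma$. (ii) More seriously, the mixed cross terms must \emph{not} cancel: a term such as $\mathsf{c}(\overline{0k\tau^{(i)}},\overline{0\sigma})$ is itself an ``$e$-shaped'' term (two disjoint $(p{+}1)$-sets whose union misses exactly $b_i$), and these contributions are precisely what is needed to generate the part of $e$ whose distinguished vertex differs from $k$. If your claimed cancellations all held, your weighted sum of identities would collapse to the diagonal contributions alone, which carry coefficient $(-1)^{\#(\tau<k)+\sum\tau+1}$ rather than $(-1)^{\#(\tau<k)+k+\sum\tau}$; you would then have shown triviality at $0$ of a different expression, differing from $e$ by the $k$-dependent sign $(-1)^{k}$. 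Your own $p=0$ sanity check already exhibits this: the single identity reduces to $\pm\theta$ only because its single-cross faces survive and supply the $\mathsf{c}(01,02)$ and $\mathsf{c}(03,01)$ terms.

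Two further points. First, you never verify that $e$ is well defined under the primed-sum convention; the paper's first step shows that exchanging $\sigma\leftrightarrow\tau$ leaves each term unchanged precisely because $k+\sum\sigma+\sum\tau=\sum\{1,\dots,2p+3\}$ is even when $p$ is even, and this is one of the two places the parity hypothesis genuinely enters (the other being the pairwise cancellation below). Second, the paper avoids your intertwined bookkeeping altogether by using a single seed expression with a fixed auxiliary vertex, $f=\sum'_{\tau_{p+1}\sqcup\sigma_{p+1}=\{2,\dots,2p+3\}}(-1)^{\sum\tau}\mathsf{c}(\partial(01\tau),\partial(01\sigma))\in E_\id(m)$, and sorting its surviving terms by whether the two supports meet in $\{0\}$, $\{1\}$, or $\{0,1\}$: the $\{0\}$-part reproduces $-e$ term by term with no cancellations needed (the $k=1$ terms come from deleting vertex $1$ on both factors, the $k\neq 1$ terms from the faces $01\partial\tau$ and $01\partial\sigma$), the $\{0,1\}$-part cancels pairwise using $p$ even, and the $\{1\}$-part is manifestly in $E_\id(m|_0)$. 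Repairing your proposal would require redoing the sign bookkeeping essentially along these lines; as written, the argument has a genuine gap.
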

\begin{proof}
    We note that exchanging $\sigma$ and $\tau$ does not change the equivalence class of $e$. Actually, the difference of the two terms is
    \begin{equation}
\begin{aligned}
        &(-1)^{\#(\tau<k)+k+\sum\tau}\mathsf{c}(0\tau,0\sigma)-(-1)^{\#(\sigma<k)+k+\sum\sigma}\mathsf{c}(0\sigma,0\tau)\\&\sim (-1)^{\#(\tau<k)+k+\sum\tau}\bigg(1+(-1)^{\#(\sigma<k)+\sum\sigma+\#(\tau<k)+\sum\tau}\bigg)\mathsf{c}(0\tau,0\sigma)\\&=(-1)^{\#(\tau<k)+k+\sum\tau}\bigg(1+(-1)^{(k-1)+\sum\sigma+\sum\tau}\bigg)\mathsf{c}(0\tau,0\sigma)\\&=0~.
\end{aligned}
    \end{equation}
    Here, we have used
    \begin{equation}
        \#(\tau<k)+\#(\sigma<k)=k-1~,
    \end{equation}
    and that
    \begin{equation}
        k+\sum\sigma+\sum\tau=\sum\{1,\cdots,2p+3\}
    \end{equation}
     is even when $p$ is even.

    To show $e\in E_\inv(m)$, the only nontrivial problem is to show $e\in E_\id(m|_0)$.
    We consider an expression
    \begin{equation}
    f=\sum_{\tau_{p+1}\sqcup\sigma_{p+1}=\{2,\cdots,2p+3\}}(-1)^{\sum\tau}\mathsf{c}(\partial (01\tau),\partial(01\sigma))\in E_\id(m)~,
    \end{equation}
    where two terms obtained by exchanging $\sigma$ and $\tau$ are only counted once. 
    We decompose $f$ into basic terms $\mathsf{c}(a,b)$, where $a,b$ are two simplices. These terms are in three classes: $a\cap b=\{0\}$, $a\cap b=\{1\}$, and $a\cap b=\{0,1\}$. According to this classification, we write $f=e_0+e_1+e_{01}\sim 0$. We are going to show $e=-e_0$ and $e_{01}\sim 0$. Obviously we have $e_1\in E_\id(m|_0)$, so $e\sim e_1\in E_\id(m|_0)$.
    
    We have
    \begin{equation}
        e_0= \sum_{\tau_{p+1}\sqcup\sigma_{p+1}=\{2,\cdots,2p+3\}}(-1)^{\sum\tau}(\mathsf{c}( 01\partial \tau,0\sigma)+\mathsf{c}( 0\tau,01\partial\sigma)+\mathsf{c}( 0\tau,0\sigma))~.
    \end{equation}
    The last term is equal to $k=1$ part of $-e$. The sum of the first term and the second term is
    \begin{equation}
        \sum_{\tau_{p+1}\sqcup\sigma_{p+1}=\{2,\cdots,2p+3\}}(-1)^{\sum\tau}(\mathsf{c}( 01\partial \tau,0\sigma)~,
    \end{equation}
    where exchanging $\sigma$ and $\tau$ are counted as different terms. Expanding $\partial\tau$, we have
    \begin{equation}
    \begin{aligned}
            &\sum_{\tau_{p+1}\sqcup\sigma_{p+1}=\{2,\cdots,2p+3\}}(-1)^{\sum\tau}(-1)^i\mathsf{c}( 01\partial_i \tau,0\sigma)\\&=\sum_{\{k\}\sqcup\tau_{p}'\sqcup\sigma_{p+1}=\{2,\cdots,2p+3\}}(-1)^{\#(\tau'<k)+k+\sum\tau}\mathsf{c}( 01\tau',0\sigma)~.
    \end{aligned}
    \end{equation}
    This equals the $k\ne1$ part of $-e$.

    Next, we prove that
    \begin{equation}
    e_{01}=\sum_{\tau_{p+1}\sqcup\sigma_{p+1}=\{2,\cdots,2p+3\}}\sum_{i,j=0}^p(-1)^{i+j+\sum\tau}\mathsf{c}( 01\partial_i\tau,01\partial_j\sigma)\sim 0~.
    \end{equation}
    Note that $p$ is even implies that $\sum_{x=2}^{2p+3}x$ is odd, so we  have
    \begin{equation}
        (-1)^{\sum\tau}\mathsf{c}( 01\partial_i\tau,01\partial_j\sigma)\sim(-1)^{\sum\sigma}\mathsf{c}( 01\partial_j\sigma,01\partial_i\tau)~.
    \end{equation}
    Next for any $\alpha_p,\beta_p\subset \{2,\cdots,2p+3\}$ and $\alpha\cap\beta=\emptyset$, we want to find all terms involving $\mathsf{c}(01\alpha,01\beta)$ or $\mathsf{c}(01\beta,01\alpha)$. There are $2\le k<l\le 2p+3$ such that $\alpha\sqcup\beta=\{2,\cdots,2p+3\}-\{k,l\}$. there are two solutions of $\sigma,\tau,i,j$, which correspond to $\sigma_1=\alpha\sqcup\{k\},\tau_1=\beta\sqcup\{l\}$ and $\sigma_2=\alpha\sqcup\{l\},\tau_2=\beta\sqcup\{k\}$. It is easy to verify that $j_1+i_2=k-2$ and $i_1+j_2=l-2$, and thus these two terms cancel.
\end{proof}

\section{Anyon membrane statistics in $(6{+}1)$D}\label{sec: Semionic-membrane toric code in (6+1)D}

In this section, we construct the \emph{semionic-membrane toric code} in 6 spatial dimensions, obtained by condensing the $e^2 m^2$ membrane excitation in the standard $\mathbb{Z}_4$ toric code.  
The story closely parallels the semion in two spatial dimensions: a $\mathbb{Z}_2$ particle there acquires $\mathbb{Z}_4$ statistics, with the semion as the generator. Semionic statistics are special to two dimensions; in spatial dimensions $d \ge 3$, only fermions survive.  
Similarly, the $\mathbb{Z}_2$ membrane in 6 dimensions carries $\mathbb{Z}_4$ statistics, and the generator of this $\mathbb{Z}_4$ group is what we refer to as the “semionic membrane.” This semionic membrane exists only in 6 dimensions. In dimensions $d \ge 7$, the $\mathbb{Z}_4$ structure collapses to its $\mathbb{Z}_2$ subgroup, giving rise to what we call the fermionic membrane statistics. We will demonstrate that this fermionic membrane statistics is governed by the stable cohomological operation $\operatorname{Sq}^4$ in Sec.~\ref{sec: Fermionic-membrane toric code in arbitrary dimensions}.  

Importantly, unlike particles and loops—which exhibit only $\mathbb{Z}_2$ statistics in high dimensions—membrane excitations possess a stable $\mathbb{Z}_2 \times \mathbb{Z}_3$ statistics structure. The fermionic membrane statistics studied in Sec.~\ref{sec: Fermionic-membrane toric code in arbitrary dimensions} correspond to the $\mathbb{Z}_2$ component. The remaining $\mathbb{Z}_3$ component is known as the \emph{Pontryagin} statistics; its detecting unitary process was presented in Ref.~\cite{feng2025anyonic}. However, the $\mathbb{Z}_3$ part is intrinsically non-Pauli, and any unitary operator capable of detecting it must be at least Clifford.  
Thus, this paper provides the complementary half of the picture in Ref.~\cite{feng2025anyonic}, supplying the missing $\mathbb{Z}_2$ component of the full $\mathbb{Z}_2 \times \mathbb{Z}_3$ membrane statistics.

\subsection{$(6{+}1)$D semionic-membrane toric code via condensation}

In this section, we construct the Pauli stabilizer model for the semionic-membrane toric code using $\ZZ_4$ qudits. This model realizes a $\ZZ_2$ topological order whose field-theoretic description is captured by the following TQFT action:
\begin{equation}
    S = \pi \int a_3 \cup \delta b_3 + \frac{2\pi}{4} \int b_3 \cup \delta b_3 ,
\end{equation}
which represents a Dijkgraaf–Witten–type TQFT classified by the cohomology class $\tfrac{1}{4} b_3 \cup \delta b_3 \in H^7(B^3 \ZZ_2, \mathbb{R}/\mathbb{Z})$ for 3-form $\ZZ_2$ gauge theories.

We start with the $(6{+}1)$D membrane-only $\ZZ_4$ toric code, which has a $\ZZ_4$ qudit on each tetrahedron $t$ (3-cell) of a cellular decomposition of the 6-dimensional spatial manifold. Its stabilizer Hamiltonian is
\begin{equation}
    H_{\ZZ_4\text{-}\mathrm{TC}} 
    = -\sum_{f} A_{f} \;-\; \sum_{c_4} B_{c_4} \;+\; \mathrm{h.c.}~,
\label{eq: standard (6+1)D Z4 toric code}
\end{equation}
where the first sum runs over all faces $f$ (2-cells) and the second over all 4-cells $c_4$. The stabilizers are defined by
\begin{equation}
    A_f := X_{\delta \bface}
    = \prod_{t} X_t^{\delta \bface(t)},
    \qquad
    B_{c_4} := Z_{\partial c_4}
    = \prod_{t} Z_t^{\bt(\partial c_4)}~.
    \label{eq: A and B in standard (6+1)D Z4 toric code}
\end{equation}
Although Eqs.~\eqref{eq: standard (6+1)D Z4 toric code} and~\eqref{eq: A and B in standard (6+1)D Z4 toric code} take the same algebraic form as the $(5{+}1)$D toric code expressions in Eqs.~\eqref{eq: standard (5+1)D Z4 toric code} and~\eqref{eq: A and B in standard (5+1)D Z4 toric code}, the resulting excitations are different in 6 spatial dimensions. $A_f$ is the product of $X_t$ over all tetrahedra $t$ for which the face $f$ appears in the boundary of $t$, while $B_{c_4}$ is the product of $Z_t$ over all tetrahedra on the boundary of the 4-cell $c_4$. In this $(6{+}1)$D membrane-only toric code, a single $Z_t$ violates the $A_f$ terms on the faces in $\partial t$, creating an $e$-membrane excitation. Conversely, a single $X_t$ violates the $B_{c_4}$ terms on the coboundary $\delta t$, producing an $m$-membrane excitation on the dual lattice, since the 4-cochain $\delta \bt$ is Poincaré dual to a 2-dimensional membrane.

Now we condense the $e^2 m^2$ membranes by adding the membrane-creation term on each tetrahedron $t$,
\begin{equation}
    C_t
    =
    X_t^{2}
    \prod_{t'} Z_{t'}^{2\int \boldsymbol{t}' \cup \boldsymbol{t}}~,
\end{equation}
to the Hamiltonian. The resulting condensed Hamiltonian can be written as
\begin{equation}
    H_{\mathrm{condensed}}^{(6+1)\mathrm{D}}
    = -\sum_{f} G_f
      \;-\; \sum_{c_4} B_{c_4}^{2}
      \;-\; \sum_{t} C_t
      \;+\; \mathrm{h.c.},
\label{eq: H_condensed in 6+1D}
\end{equation}
where $f$ runs over faces, $t$ over tetrahedra, and $c_4$ over 4-cells.  
The first stabilizer is
\begin{equation}
    G_f := A_f \prod_{c_4} B_{c_4}^{\int \boldsymbol{f} \cup \boldsymbol{c}_4}
    = X_{\delta \bface} \prod_{c_4} Z_{\partial c_4}^{\int \boldsymbol{f} \cup \boldsymbol{c}_4}
    = X_{\delta \bface} \prod_{t} Z_{t}^{\int \delta \bface \cup \boldsymbol{t}}~.
\label{eq: Gf in 6+1D}
\end{equation}

The two membrane operators take the compact form
\begin{equation}
    V^C_t = Z_t^{2},
    \qquad
    V^F_t = X_t \prod_{t'} Z_{t'}^{\int \boldsymbol{t} \cup \boldsymbol{t}'}~.
\end{equation}
A subtlety arises when analyzing the fusion rules of the two membrane excitations.  
For the charge membrane created by $V^C_t$, the fusion group is manifestly $\mathbb{Z}_2$, since $(V^C_t)^2 = 1$ as an operator.  
In contrast, the flux membrane created by $V^F_t$ obeys $(V^F_t)^4 = 1$ but $(V^F_t)^2 \neq 1$ at the operator level, so it naively appears to have a $\mathbb{Z}_4$ fusion rule.  
However, as in the $(4{+}1)$D case, one can show that the doubled flux membrane $\prod_{t \in \pd(D)} (V^F_t)^2$ for any three-dimensional disk $D$ in the dual lattice can be expressed as a product of stabilizers in $D$, up to local operators near $\partial D$.  
Thus, the doubled flux membrane is topologically trivial and lies in the same \emph{superselection sector} as the vacuum membrane excitation~\cite{Kitaev:2005hzj}.

To make this structure explicit, we decorate $V^F_t$ by a correction localized near the coboundary $\delta \boldsymbol{t}$:
\begin{equation}
    \tilde{V}^F_t :=
    V^F_t \prod_{t'} Z_{t'}^{\int \boldsymbol{t}' \cup_1 \delta \boldsymbol{t}}
    =
    X_t \prod_{t'}
    Z_{t'}^{\int \boldsymbol{t} \cup \boldsymbol{t}'
        + \boldsymbol{t}' \cup_1 \delta \boldsymbol{t}}
    =
    X_t \prod_{t'}
    Z_{t'}^{\int -\boldsymbol{t}' \cup \boldsymbol{t}
        + \delta \boldsymbol{t}' \cup_1 \boldsymbol{t}}~,
\label{eq: modified flux membrane in 6+1D semionic-membrane toric code}
\end{equation}
where, in the last equality, we used the Leibniz rule for higher cup products,
\begin{equation}
    \delta(A_3 \cup_1 B_3)
    =
    \delta A_3 \cup_1 B_3
    - A_3 \cup_1 \delta B_3
    - A_3 \cup B_3
    - B_3 \cup A_3~.
\end{equation}
A direct computation shows that
\begin{equation}
    (\tilde{V}^F_t)^{2}
    =
    X_t^{2}
    \prod_{t'} Z_{t'}^{2\int \boldsymbol{t}' \cup \boldsymbol{t}
      + \delta \boldsymbol{t}' \cup_1 \boldsymbol{t}}
    =
    C_t
    \prod_{c_4} B_{c_4}^{2\int \boldsymbol{c}_4 \cup_1 \boldsymbol{t}}~,
\end{equation}
which is a product of stabilizers.  
Therefore, the flux membrane created by $\tilde{V}^F_t$ obeys a $\mathbb{Z}_2$ fusion rule.

We now extend the definition of the operator to an arbitrary $3$-cochain
$\lambda \in C^3(M_6,\ZZ)$, where $M_6$ denotes the spatial manifold:
\begin{equation}
    \tilde{V}_\lambda
    =
    \prod_t X_t^{\lambda(t)}
    \prod_{t'}
    Z_{t'}^{\int -\boldsymbol{t}' \cup \lambda
        + \delta \boldsymbol{t}' \cup_1 \lambda}
    := X_\lambda
    \prod_{t'}
    Z_{t'}^{\int -\boldsymbol{t}' \cup \lambda
        + \delta \boldsymbol{t}' \cup_1 \lambda}~.
\end{equation}
Its commutator is given by
\begin{eqs}\label{eq: SemionicMembraneCommutator}
    [\tilde{V}_{\lambda}, \tilde{V}_{\lambda'}]
    =&~
    (\tilde{V}_{\lambda})^{-1}
    (\tilde{V}_{\lambda'})^{-1}
    \tilde{V}_{\lambda}
    \tilde{V}_{\lambda'} \\
    =&~
    i^{-\int \lambda' \cup \lambda
        - \lambda \cup \lambda'
        + \delta\lambda \cup_1 \lambda'
        - \delta\lambda' \cup_1 \lambda}~.
\end{eqs}

\subsection{Semionic membrane statistics}

Next, we show that Eq.~\eqref{eq: Statistics2p+2} evaluates to $\pm i$ for this membrane excitation, indicating that it has semionic membrane statistics. We choose a mesoscopic triangulation $\partial\Delta^7$ of $M_6$ and label its vertices by $\{0,\cdots,7\}$. Each $p$-simplex $\sigma$ in $\partial\Delta^7$ naturally corresponds to a $p$-chain in $M_6$; we denote its Poincar\'e dual by $\sigma^*$, which is a $(6-p)$-cochain supported in the same region. For notational convenience, we represent a simplex by listing its vertices explicitly. If two vertex lists are disjoint, their (higher) cup products vanish.

Substituting Eq.~\eqref{eq: SemionicMembraneCommutator} into Eq.~\eqref{eq: Statistics2p+2}, we are led to evaluate the membrane statistical phase
\begin{equation}
    \mu_{\mathrm{membrane}} = i^n, ~~ \mathrm{with} ~~
    n=\sum_{\substack{\{k\}\sqcup\tau_{p+1}\sqcup\sigma_{p+1}\\=\{1,\cdots,2p+3\}}}
    (-1)^{\#(\tau<k)+k+\sum\tau}
    \int (0\sigma)^* \cup (0\tau)^*
    + (\partial (0\tau))^* \cup_1 (0\sigma)^*~.
\end{equation}
Unlike Eq.~\eqref{eq: Statistics2p+2}, here we treat $(\sigma,\tau)$ and $(\tau,\sigma)$ as distinct contributions, reflecting the fact that the four terms in Eq.~\eqref{eq: SemionicMembraneCommutator} naturally separate into two independent parts.

We first show that the $\cup_1$ contribution to $n$, denoted by $n_1$, vanishes. Writing $\partial\tau=\sum_i(-1)^i\partial_i\tau$, and denoting the $i$th element of $\tau$ by $l$, we have $i=\#(\tau<l)$, and therefore
\[
\partial\tau=\sum_{l\in\tau}(-1)^{\#(\tau<l)}(\tau-\{l\})~.
\]
Denoting $\tau-\{l\}$ by $\mu$ in the summation defining $n_1$, we obtain
\begin{equation}
    n_1=-\sum_{\{k,l\}\sqcup\mu_{p}\sqcup\sigma_{p+1}=\{1,\cdots,2p+3\}}(-1)^{\#(\mu\sqcup l<k)+\sum\sigma+\#(\mu<l)}\int  (0\mu)^* \cup_1 (0\sigma)^*~,
\end{equation}
where we have used $k+\sum \tau + \sum \sigma = \sum_{n=1}^{2p+3} n = 0 \pmod{2}$ for even $p$.
For fixed $\mu$ and $\sigma$, there are two corresponding terms related by exchanging $k$ and $l$. The sum of the exponents in the associated $(-1)$ factors is
\begin{equation}
    \#(\mu\sqcup l<k)+\sum\sigma+\#(\mu<l)+\#(\mu\sqcup k<l)+\sum\sigma+\#(\mu<k),
\end{equation}
which is always odd, since either $l<k$ or $k<l$. Hence, these terms cancel pairwise and we conclude that $n_1=0$.

We now compute the $\cup$ contribution,
\begin{equation}\label{eqRandom1}
    n_0=\sum_{\{k\}\sqcup\tau_{p+1}\sqcup\sigma_{p+1}=\{1,\cdots,2p+3\}}(-1)^{\#(\tau<k)+\sum\sigma}\int (0\sigma)^* \cup (0\tau)^*.
\end{equation}
The key observation is that, for fixed $\sigma$, the summation over $\{k\}\sqcup\tau_{p+1}$ together with the sign $(-1)^{\#(\tau<k)}$ precisely reproduces the boundary of $\alpha=\{1,\cdots,2p+3\}-\sigma$. Thus, up to an overall sign, we may rewrite
\begin{equation}
    \begin{aligned}
    n_0&=\pm\sum_{\alpha_{p+2}\sqcup\sigma_{p+1}=\{1,\cdots,2p+3\}}(-1)^{\sum\sigma}\int (0\sigma)^* \cup (\partial(0\alpha))^*\\
    &=\pm\sum_{\alpha_{p+2}\sqcup\sigma_{p+1}=\{1,\cdots,2p+3\}}(-1)^{\sum\alpha}\int (\partial(0\sigma))^* \cup (0\alpha)^*\\
    &=\pm\sum_{\alpha_{p+2}\sqcup\beta_{p}\sqcup\{k\}=\{1,\cdots,2p+3\}}(-1)^{\#(\beta<k)+\sum\alpha}\int (0\beta)^* \cup (0\alpha)^*~.
    \end{aligned}
\end{equation}
This expression has the same structure as Eq.~\eqref{eqRandom1}, except that the size of $\beta$ is reduced by one. Iterating this procedure yields
\begin{equation}
n_0=\pm\sum_{\alpha_{2p+2}\sqcup\beta_{0}\sqcup\{k\}=\{1,\cdots,2p+3\}}(-1)^{\#(\beta<k)+\sum\alpha}\int (0\beta)^* \cup (0\alpha)^*.
\end{equation}
Consequently,
\begin{equation}
\begin{aligned}
    n_0&=\pm\sum_{k\in\{1,\cdots,2p+3\}}(-1)^{k}\int (0)^* \cup (\{0,\cdots,\hat{k},\cdots,2p+3\})^*\\
    &=\pm\sum_{k\in\{1,\cdots,2p+3\}}(-1)^{k}\int (0)^* \cup M^*~.
\end{aligned}
\label{eq: n0 is +-1}
\end{equation}
Here, $M$ denotes the fundamental class of the manifold $M$, whose dual is the constant $1$. The dual of the vertex $0$ is a $d$-cochain with unit integral. We therefore obtain $n=\pm1$, implying that the statistical phase of the semionic membrane is $\pm i$.

A similar, and simpler, analysis applies to a general fusion group $\ZZ_N$, yielding the generator of statistics $T_{\mathrm{P}}\simeq \ZZ_{N \times \gcd(2,N)}$. As in the case of anyons in $(2{+}1)$D, the braiding of two identical membranes in $(6{+}1)$D produces twice the corresponding self-statistics.

\subsection{Anyonic membrane excitations in $(6{+}1)$D membrane-only $\ZZ_N$ toric code}

In this subsection, we show that in the standard $(6{+}1)$D membrane-only $\ZZ_N$ toric code, the composite $em$ membrane carries nontrivial $\ZZ_N$ anyonic membrane statistics. This is reminiscent of the familiar fact that anyons in $(2{+}1)$D can exhibit $\ZZ_N$ spin statistics. However, this analogy is not generic: it is specific to spatial dimensions $d =4k{+}2$. By contrast, the $em$ loop in the standard $(4{+}1)$D loop-only $\ZZ_N$ toric code is bosonic~\cite{Chen2023Loops4d}.

The unitary operator creating an $em$-membrane excitation in the $(6{+}1)$D membrane-only $\ZZ_N$ toric code is
\begin{equation}
    V^{em}_t := X_t \prod_{t'} Z_{t'}^{\int \bt' \cup \bt}~.
\end{equation}
Promoting $t$ to an arbitrary $3$-cochain $\lambda$, we define
\begin{equation}
    V^{em}_\lambda := X_\lambda \prod_{t'} Z_{t'}^{\int \bt' \cup \lambda}~,
\end{equation}
whose commutator is
\begin{equation}
    [V^{em}_{\lambda_1}, V^{em}_{\lambda_2}]
    =
    \omega^{\int -\lambda_1 \cup \lambda_2 + \lambda_2 \cup \lambda_1}~,
\end{equation}
where $\omega=\exp(2\pi i/N)$.
Substituting this commutator into the statistics formula \eqref{eq: Statistics2p+2}, with the identification $U_{0ijk}:=V^{em}_{(0ijk)^*}$, we obtain
\begin{eqs}
    \mu_{\mathrm{membrane}}
    &=
    {\prod}'_{k\sqcup\sigma_3\sqcup\tau_3=\{1,2,3,4,5,6,7\}}
    \bigl[U_{0\sigma_3},\,U_{0\tau_3}\bigr]^{\,o'_\pm}
    \label{eq: membrane_stat_product} :=
    \omega^{n_0}~,
\end{eqs}
where
\begin{equation}
    o'_\pm := (-1)^{\,k+\#(\sigma_3<k)+\sum \sigma_3}~,
\end{equation}
and the exponent $n_0$ is exactly the quantity defined in Eq.~\eqref{eqRandom1}.
Since the previous subsection showed that $n_0=\pm 1$, we conclude that
\begin{equation}
    \mu_{\mathrm{membrane}}=\omega^{\pm 1}~,
\end{equation}
so the $em$ membrane indeed exhibits $\ZZ_N$ anyonic membrane statistics. In particular, the generator has a statistical phase $\omega$ (up to orientation conventions).

\section{Fermionic-membrane toric code in arbitrary dimensions}
\label{sec: Fermionic-membrane toric code in arbitrary dimensions}

The semionic membrane discussed above is special to $(6{+}1)$D, in direct analogy with the semion in $(2{+}1)$D. In higher dimensions, only $\mathbb{Z}_2$ membrane statistics can remain nontrivial. In this section, we introduce a new family of $\mathbb{Z}_2$ topological orders, which we call the \emph{fermionic-membrane toric codes}, defined in $d$ spatial dimensions. These phases host fermionic membrane excitations and bosonic $(d-4)$-brane excitations with mutual $-1$ braiding statistics.

Our construction applies for all $d \geq 6$. However, when $d=6$, the resulting stabilizer group coincides with that of the standard membrane-only bosonic $\mathbb{Z}_2$ toric code. Genuinely new topological orders arise only for $d \geq 7$.

\subsection{Review of fermionic-particle toric codes in $(d{+}1)$D}

We begin by reviewing the exact bosonization construction introduced in Refs.~\cite{Chen:2017fvr, Chen:2018nog, Chen2020}, where the topological action is given by
\begin{equation}
    S = \pi \int \operatorname{Sq}^2 B_{d-1} = \pi \int B_{d-1} \cup_{d-3} B_{d-1}~.
    \label{eq: action of Sq^2 B}
\end{equation}
This action describes the fermionic-particle toric code. The Hilbert space consists of $\mathbb{Z}_2$ qubits residing on each $(d-1)$-cell. The corresponding stabilizer Hamiltonian is
\begin{equation}
    H_{\mathrm{TC}}^{\mathrm{fermionic-particle}} = - \sum_{c_{d-2}} G_{c_{d-2}} - \sum_{c_d} B_{c_d}~,
\end{equation}
with stabilizers
\begin{equation}
    G_{c_{d-2}} := X_{\delta \bc_{d-2}} \prod_{c_{d-1}} Z_{c_{d-1}}^{\int \delta \bc_{d-2} \cup_{d-2} \bc_{d-1}}~, \quad \mathrm{and} \quad
    B_{c_d} = Z_{\partial c_d}~.
    \label{eq: G and B for exact bosonization}
\end{equation}
The hopping operator of the fermionic particle is defined as
\begin{equation}
    V_{c_{d-1}} = X_{c_{d-1}} \prod_{c'_{d-1}} Z_{c'_{d-1}}^{\int \bc'_{d-1} \cup_{d-2} \bc_{d-1}}~.
\end{equation}
In contrast, the single operator $Z_{c_{d-1}}$ creates a bosonic $(d-2)$-brane excitation on $\partial c_{d-1}$, which has mutual $-1$ braiding with the fermionic particle.
To verify that the hopping operator is indeed fermionic, we consider the following $T$-junction process~\cite{Levin2003Fermions},
\begin{equation}
    \mu_{\mathrm{particle}} := [U_{01}, U_{02}]~[U_{02}, U_{03}]~[U_{03}, U_{01}]~,
\end{equation}
where we choose $U_{0i} = V_{(0i)^*}$. A direct computation shows that $\mu_{\mathrm{particle}} = (-1)^n$, with
\begin{eqs}
    n &= \int (01)^* \cup_{d-2} (02)^* + (02)^* \cup_{d-2} (01)^* 
    + (02)^* \cup_{d-2} (03)^* + (03)^* \cup_{d-2} (02)^* \\
    & \qquad 
    + (03)^* \cup_{d-2} (01)^* + (01)^* \cup_{d-2} (03)^* \\
    &= \int (01)^* \cup_{d-1} (\partial(02))^* + (\partial (01))^* \cup_{d-1} (02)^*
    + (02)^* \cup_{d-1} (\partial(03))^* + (\partial (02))^* \cup_{d-1} (03)^* \\
    & \qquad + (03)^* \cup_{d-1} (\partial(01))^* + (\partial (03))^* \cup_{d-1} (01)^* \\
    &= \int (01)^* \cup_{d-1} (0)^* + (0)^* \cup_{d-1} (02)^* 
    + (02)^* \cup_{d-1} (0)^* + (0)^* \cup_{d-1} (03)^* \\
    & \qquad + (03)^* \cup_{d-1} (0)^* + (0)^* \cup_{d-1} (01)^* \\
    &= \int (0)^* \cup_d (\partial(01))^*
    + (0)^* \cup_d (\partial(02))^*
    +(0)^* \cup_d (\partial(03))^* = 3 \int (0)^* \cup_d (0)^* \\
    &= 1 \pmod{2}~,
\end{eqs}
where we have used the identity
\begin{equation}
    \delta(A \cup_i B) = \delta A \cup_i B + A \cup_i \delta B + A \cup_{i-1}B + B \cup_{i-1} A \pmod{2},
\end{equation}
together with the fact that all (higher) cup products between disjoint cochains vanish.

It is worth emphasizing that the action in Eq.~\eqref{eq: action of Sq^2 B} is nontrivial only for $d \geq 3$. When $d=2$, the $G_v$ term in Eq.~\eqref{eq: G and B for exact bosonization} reduces to
\begin{equation}
    G_v = X_{\delta \bv} \prod_e Z_e^{\int \delta\bv \cup \be} = X_{\delta \bv} \prod_e Z_e^{\int \bv \cup \delta\be} = X_{\delta \bv} \prod_f B_f^{\int \bv \cup \bface}~.
\end{equation}
In this case, one can multiply $G_v$ by appropriate plaquette stabilizers $B_f$ to cancel the factor $\prod_f B_f^{\int \bv \cup \bface}$ without changing the stabilizer group. The resulting model is therefore equivalent to the standard bosonic $\mathbb{Z}_2$ toric code, featuring a bosonic particle and a bosonic $(d-2)$-brane with mutual braiding.

\subsection{Fermionic-membrane toric codes in $(d{+}1)$D}

We now generalize the above construction to higher-form symmetries. We start from the topological action
\begin{equation}
    S = \pi \int \operatorname{Sq}^4 B_{d-3} = \pi \int B_{d-3} \cup_{d-7} B_{d-3}~.
\end{equation}
We place $\mathbb{Z}_2$ qubits on each $(d{-}3)$-cell of the spatial manifold $M_d$. Following the same derivation as in the exact bosonization construction, we arrive at the stabilizer Hamiltonian
\begin{equation}
    H_{\mathrm{TC}}^{\mathrm{fermionic-membrane}} = - \sum_{c_{d-4}} G_{c_{d-4}} - \sum_{c_{d-2}} B_{c_{d-2}}~,
\end{equation}
where
\begin{equation}
    G_{c_{d-4}} := X_{\delta \bc_{d-4}} \prod_{c_{d-3}} Z_{c_{d-3}}^{\int \delta \bc_{d-4} \cup_{d-6} \bc_{d-3}}~, \quad
    B_{c_{d-2}} = Z_{\partial c_{d-2}}~.
    \label{eq: G and B for fermionic-membrane}
\end{equation}
The operator creating a membrane excitation is
\begin{equation}
    V_{c_{d-3}} = X_{c_{d-3}} \prod_{c'_{d-3}} Z_{c'_{d-3}}^{\int \bc'_{d-3} \cup_{d-6} c_{d-3}}~.
    \label{eq: U operator for fermionic membrane}
\end{equation}
This operator violates the $B_{c_{d-2}}$ stabilizers on the boundary $\delta \bc_{d-3}$, which forms a closed membrane in the dual lattice. On the other hand, the single operator $Z_{c_{d-3}}$ creates a bosonic $(d-4)$-brane excitation on $\partial c_{d-3}$, which has mutual $-1$ braiding with the membrane excitation.

We now verify that the membrane excitation created by Eq.~\eqref{eq: U operator for fermionic membrane} is fermionic. We choose the sublattice $\partial \Delta^7$, with vertices labeled by $\{0,1,2,3,4,5,6,7\}$, and compute the membrane statistics using the formula in Eq.~\eqref{eq: Statistics2p+2}.
The commutator of two membrane creation operators is
\begin{equation}
    [V_{c_{d-3}}, V_{c'_{d-3}}] = (-1)^{\int \bc_{d-3} \cup_{d-6} \bc'_{d-3} +  \bc'_{d-3} \cup_{d-6} \bc_{d-3}}~.
\end{equation}
Since we work with $\mathbb{Z}_2$ qubits, this phase is well defined modulo two, and the statistical formula simplifies to
\begin{equation}
    \mu_{\mathrm{membrane}} := (-1)^{n'_0}~, ~~\mathrm{with}~~
    n'_0=\sum_{\{k\}\sqcup\tau_{3}\sqcup\sigma_{3}=\{1,\cdots,7\}}
    \int_{M_d} (0\sigma)^* \cup_{d-6} (0\tau)^*~,
\label{eq: n'0 for membrane statistics}
\end{equation}
where we emphasize that the integral is taken over the spatial manifold $M_d$. 

However, note that both $(0\sigma)^*$ and $(0\tau)^*$ necessarily overlap along directions perpendicular to $\partial \Delta^7$. By the definition of higher cup products, this allows the integral over $M_d$ to be reduced to an ordinary cup product on the low-dimensional space:
\begin{equation}
    \int_{M_d} (0\sigma)^* \cup_{d-6} (0\tau)^* 
    = \int_{\partial \Delta^7} (0\sigma)^*\big|_{\partial \Delta^7} \cup (0\tau)^*\big|_{\partial \Delta^7}~.
\end{equation}
To illustrate this reduction, it is useful to consider a simpler example,
\begin{equation}
    A_2 \cup_1 B_2(0123) = A_2(023) B_2(012) + A_2(013) B_2(123)~.
\end{equation}
If both $A_2$ and $B_2$ contain the vertex $3$ (which may be viewed as perpendicular to the $012$ plane), we may define $\overline{A}_1(ij) := A_2(ij3)$ and $\overline{B}_1(ij) := B_2(ij3)$. In this case,
\begin{equation}
    A_2 \cup_1 B_2(0123) = \overline{A}_1 \cup \overline{B}_1 (012)~.
\end{equation}
Applying the same reasoning to Eq.~\eqref{eq: n'0 for membrane statistics}, we find that $n'_0$ reduces to $n_0$ in Eq.~\eqref{eq: n0 is +-1}, which takes the value $\pm 1$. We therefore conclude that the membrane excitation carries fermionic statistics,
\begin{equation}
    \mu_{\mathrm{membrane}}=-1~.
\end{equation}

\section{Fermionic-volume toric code in arbitrary dimensions}
\label{sec: Fermionic-volume toric code in arbitrary dimensions}

In this section, we introduce the fermionic-volume toric code.
We begin with the simplest realization, obtained by condensing the $e^2 m^2$ volume excitation in the volume-only $\ZZ_4$ toric code in $(8{+}1)$ dimensions, which is discussed in Sec.~\ref{sec: Fermionic-volume toric code in (8+1)D}.
The corresponding Dijkgraaf--Witten cohomology class is
\begin{equation}
    \frac{1}{2}\operatorname{Sq}^4 \operatorname{Sq}^1 b_4
    = \frac{1}{4} \, b_4 \cup \delta b_4
    \in H^9(B^4\ZZ_2, \RR/\ZZ)~.
\end{equation}

The generalization to higher dimensions is straightforward and is presented in Sec.~\ref{sec: Fermionic-volume toric code in (d+1)D}.
More generally, in $(d{+}1)$ spacetime dimensions, we realize a topological quantum field theory characterized by the cocycle
\begin{equation}
    \frac{1}{2}\operatorname{Sq}^4 \operatorname{Sq}^1 b_{d-4}
    =
    \frac{1}{4}
    \bigl(
        b_{d-4} \cup_{d-9} b_{d-4}
        +
        b_{d-4} \cup_{d-8} \delta b_{d-4}
    \bigr)
    \in H^{d+1}(B^{d-4}\ZZ_2, \RR/\ZZ)~.
\end{equation}
The relevant relations among Steenrod operations are reviewed in Appendix~\ref{app: Calculation of stable Z2 operations}.

\subsection{Fermionic-volume toric code in $(8{+}1)$D}\label{sec: Fermionic-volume toric code in (8+1)D}

We begin by defining the \emph{$(8{+}1)$D $\ZZ_4$ volume-only toric code}.
We place a $\ZZ_4$ qudit on each $4$-cell of an arbitrary triangulation (or cellulation) of the spatial manifold $M_8$.
Using the simplicial cohomology notation introduced in Refs.~\cite{Chen2023Highercup, sun2025cliffordQCA}, the stabilizers take the compact form
\begin{equation}
    A_{c_3}
    = \prod_{c_4} X_{c_4}^{\,\delta \boldsymbol{c}_3(c_4)}
    := X_{\delta \boldsymbol{c}_3},
    \qquad
    B_{c_5}
    = \prod_{c_4} Z_{c_4}^{\,\boldsymbol{c}_4(\partial c_5)}
    := Z_{\partial c_5}.
\end{equation}
Here, $A_{c_3}$ multiplies $X$ operators over the coboundary of the $3$-cell $c_3$, while $B_{c_5}$ multiplies $Z$ operators over the boundary of the $5$-cell $c_5$.
All orientation signs are implicitly encoded in the boundary and coboundary operators $\partial$ and $\delta$.

We now condense the $e^2 m^2$ volume excitation by enforcing the following condensation term supported on each $4$-cell $c_4$:
\begin{equation}
    \tilde{C}_{c_4}
    :=
    X_{c_4}^{\,2}\;
    \prod_{c_4'} Z_{c_4'}^{\,2\int \boldsymbol{c}_4' \cup \boldsymbol{c}_4 + \bc_4 \cup_1 \delta \bc'_4}.
\end{equation}
Keeping only the terms that commute with the condensation operators, the condensed Hamiltonian can be written as
\begin{equation}
    H_{\mathrm{condensed}}
    = -\sum_{c_3} G_{c_3}
      \;-\sum_{c_5} B_{c_5}^{2}
      \;-\sum_{c_4} \tilde{C}_{c_4}
      \;+\; \mathrm{h.c.},
    \label{eq: H_condensed of fermionic-volume TC in (8+1)D}
\end{equation}
where
\begin{equation}
    G_{c_3}
    :=
    A_{c_3}\prod_{c_5} B_{c_5}^{\int \boldsymbol{c}_3 \cup \boldsymbol{c}_5}
    =
    X_{\delta \boldsymbol{c}_3}\prod_{c_5} Z_{\partial c_5}^{\int \boldsymbol{c}_3 \cup \boldsymbol{c}_5}
    =
    X_{\delta \boldsymbol{c}_3}\prod_{c_4'} Z_{c_4'}^{\int \delta \boldsymbol{c}_3 \cup \boldsymbol{c}_4'}.
\label{eq: Gc3 in (8+1)D}
\end{equation}
The two excitation operators supported on a $4$-cell take the compact form
\begin{eqs}
    V^C_{c_4} &:= Z_{c_4}^2, \\
    \tilde{V}^F_{c_4}
    &:=
    X_{c_4}\prod_{c_4'}
    Z_{c_4'}^{\int \boldsymbol{c}_4' \cup \boldsymbol{c}_4
    - \delta \boldsymbol{c}_4' \cup_1 \boldsymbol{c}_4}.
\label{eq: modified flux volume in (8+1)D fermionic-volume toric code}
\end{eqs}
The operator $V^C_{c_4}$ violates the $G_{c_3}$ stabilizers along $\partial c_4$, creating a bosonic charge-volume excitation. 
In contrast, $\tilde{V}^F_{c_4}$ violates the $B_{c_5}^2$ stabilizers on the coboundary of $c_4$, creating a flux volume excitation, which we will show below to be fermionic.

The charge-volume excitation has a manifest $\ZZ_2$ fusion rule since $(V^C_{c_4})^2=1$. 
For the flux volume excitation, we verify that the square of $\tilde{V}^F_{c_4}$ is a product of stabilizers:
\begin{equation}
    (\tilde{V}^F_{c_4})^{2}
    =
    X_{c_4}^{2}\prod_{c_4'} Z_{c_4'}^{\,2\int \boldsymbol{c}_4' \cup \boldsymbol{c}_4
      - \delta \boldsymbol{c}_4' \cup_1 \boldsymbol{c}_4}
    =
    \tilde{C}_{c_4}\prod_{c_5} B_{c_5}^{\,2\int \boldsymbol{c}_5 \cup_1 \boldsymbol{c}_4}~.
\end{equation}
Hence, $(\tilde{V}^F_{c_4})^2$ acts trivially on the ground space, and the flux volume excitation created by $\tilde{V}^F_{c_4}$ also obeys a $\ZZ_2$ fusion rule.

We now verify that this flux volume has fermionic volume statistics, in close analogy with the $(4{+}1)$D analysis of fermionic loop statistics in Sec.~\ref{sec: Fermionic loop statistics in (4+1)D}. 
\begin{theorem}
The flux volume created by $\tilde{V}^F_{c_4}$ in Eq.~\eqref{eq: modified flux volume in (8+1)D fermionic-volume toric code} has fermionic volume statistics. In particular,
\begin{equation}
    \mu_{\mathrm{volume}}
    :=
    \prod_{\sigma_3 \sqcup \tau_4 = \{2,3,4,5,6,7,8\}}
    \bigl[\tilde{V}^F_{(01\sigma_3)^*},\, \tilde{V}^F_{(0\tau_4)^*}\bigr]^{2}
    = -1~,
\end{equation}
where $\sigma_3$ and $\tau_4$ are disjoint index sets with $|\sigma_3|=3$ and $|\tau_4|=4$, and $(0ijkl)^*$ denotes the $4$-cochain Poincar\'e dual to the 4-simplex $\langle 0ijkl\rangle$.
Moreover, this $\mu_{\mathrm{volume}}$ corresponds to the special cases of Theorem~\ref{thm: order 2 statistics for Z_2N} with $N=1$ and $p=3$.
\end{theorem}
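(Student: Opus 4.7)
My plan is to mirror the algebraic-topology argument used for the general $(d{+}1)$D fermionic loop in Theorem~\ref{thm: fermionic flux loop statistics in (d+1)D}, with every cochain degree shifted upward by two so that $4$-cochains on the $8$D spatial manifold replace the $2$-cochains on the $4$D spatial manifold of Sec.~\ref{sec: Fermionic loop statistics in (4+1)D}. First I would extend the flux-volume operator to an arbitrary integer $4$-cochain $\lambda\in C^4(M_8,\ZZ)$,
\begin{equation}
    \tilde{V}^F_\lambda
    :=
    X_\lambda \prod_{c_4'} Z_{c_4'}^{\int \bc_4' \cup \lambda - \delta \bc_4' \cup_1 \lambda}~,
\end{equation}
which reduces to Eq.~\eqref{eq: modified flux volume in (8+1)D fermionic-volume toric code} for $\lambda=\bc_4$. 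Then $\mu_{\mathrm{volume}}$ becomes a product of pairwise commutators $[\tilde{V}^F_{\lambda},\tilde{V}^F_{\lambda'}]$ evaluated at $\lambda=(01\sigma_3)^*$ and $\lambda'=(0\tau_4)^*$.

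Second, I would compute this pairwise commutator by combining the canonical $ZX$ exchange phases with the Leibniz identity for the $\cup_1$ product of $4$-cochains. By direct structural parallel with Eq.~\eqref{eq: (4+1)D commutator of V}, I expect the mixed $\cup$ and $\cup_1$ contributions from the two operators to conspire into a single clean expression of the form
\begin{equation}
    [\tilde{V}^F_\lambda,\tilde{V}^F_{\lambda'}]
    = i^{\pm\int \delta\lambda \cup_2 \delta\lambda'}~,
\end{equation}
which is the natural $4$-cochain analog of the $(4{+}1)$D formula; verifying it is a direct substitution.

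Third, I would plug the Poincar\'e duals $\lambda_{0ijk}=(0ijk)^*$ into $\mu_{\mathrm{volume}}$ and expand each coboundary $\delta(0ijk)^*$ as the alternating sum of Poincar\'e duals of its codimension-one faces. Most cross terms vanish because (higher) cup products of Poincar\'e duals of vertex-disjoint simplices are zero, exactly as in the loop case. The remaining sum telescopes via successive Leibniz rules for $\cup_k$ with $k=3,4,\ldots,7$, each step raising the cup index by one while reducing the number of undifferentiated Poincar\'e duals, until one is left with the top-degree expression $\int (0)^* \cup_8 (0)^*$. Using $p_0\cup_8 p_0 = p_0 \pmod{2}$ for the $8$-cochain $p_0=(0)^*$ on the $8$-manifold and $\int (0)^* = 1$, this yields $\mu_{\mathrm{volume}} = -1$.

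The main obstacle I anticipate is sign bookkeeping in the telescoping, since each $\cup_k$ Leibniz rule carries parity-dependent signs that change with the cochain degree, and one must check that the combinatorial coefficient collected by summing over partitions $\sigma_3\sqcup\tau_4$ of $\{2,\ldots,8\}$ preserves an odd count of surviving terms. Conceptually, however, the proof is structurally identical to those of Sec.~\ref{sec: Fermionic loop statistics in (4+1)D} and Theorem~\ref{thm: fermionic flux loop statistics in (d+1)D} under the $+2$ degree shift, and the theorem's identification with the $N=1$, $p=3$ specialization of Theorem~\ref{thm: order 2 statistics for Z_2N} already guarantees nontriviality in $T_{\mathrm P}$, so the only remaining content is the explicit evaluation to fix the sign as $-1$ rather than $+1$.
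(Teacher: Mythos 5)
Your plan follows the same route as the paper's own proof: extend the flux-volume operator to arbitrary integer $4$-cochains, derive the commutator $[\tilde{V}^F_{\lambda},\tilde{V}^F_{\lambda'}]=i^{-\int\delta\lambda\cup_2\delta\lambda'}$ (Eq.~\eqref{eq: (8+1)D commutator of V}), and reduce the resulting exponent to $\int (0)^*\cup_8 (0)^*=1$, giving $\mu_{\mathrm{volume}}=-1$. Up to that point the two arguments are identical.

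However, the step you describe as telescoping ``exactly as in the loop case'' is where the actual content of the theorem lies, and your sketch does not supply it. Disjoint-support vanishing only removes the boundary faces that miss the vertex $0$; what remains is a sum over all unordered partitions $\{\sigma_4,\tau_4\}$ of $\{1,\dots,8\}$ and over all pairs of their codimension-one faces through $0$, and whether the exponent is odd or even is a genuine counting problem, not a small enumeration as in the $24$-step loop computation. The paper resolves it by fixing one vertex (say $8\in\tau$), discarding sub-sums that cancel in pairs mod $2$ under exchange of two free summation indices, symmetrizing over unordered partitions so that $A\cup_k B + B\cup_k A$ can be traded via the Leibniz rule for terms with $\cup_{k+1}$ acting on coboundaries, and iterating this two-step reduction; the surviving contribution carries the odd multiplicity $7\cdot 5\cdot 3$ in front of $\int(0)^*\cup_8(0)^*$, which is precisely what makes $n_0=1\pmod 2$. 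Your stated worry about ``parity-dependent signs'' is beside the point: since each commutator enters squared, the exponent is only defined mod $2$ and every Leibniz sign can be dropped; the real issue is the mod-$2$ multiplicity of surviving terms, which your proposal defers. Finally, note that Theorem~\ref{thm: order 2 statistics for Z_2N} only guarantees that this product of commutators is a well-defined statistical expression of order at most two; it does not guarantee that this particular stabilizer model evaluates it to $-1$ rather than $+1$, so the combinatorial count cannot be bypassed—it is the theorem.
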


\begin{proof}
We extend the definition of the operator to an arbitrary $4$-cochain $\lambda\in C^4(M_8,\ZZ)$ (where $M_8$ is the spatial manifold) by
\begin{equation}
    \tilde{V}^F_\lambda
    =
    \prod_{c_4} X_{c_4}^{\lambda(c_4)}
    \prod_{c_4'} Z_{c_4'}^{\int \boldsymbol{c}_4'\cup\lambda
        - \delta\boldsymbol{c}_4'\cup_1 \lambda}
    :=
    X_\lambda
    \prod_{c_4'} Z_{c_4'}^{\int \boldsymbol{c}_4'\cup\lambda
        - \delta\boldsymbol{c}_4'\cup_1 \lambda}
    ~.
\end{equation}
Its commutator takes the compact form
\begin{equation}
    [\tilde{V}^F_{\lambda},\tilde{V}^F_{\lambda'}]
    :=
    (\tilde{V}^F_{\lambda})^{-1}
    (\tilde{V}^F_{\lambda'})^{-1}
    \tilde{V}^F_{\lambda}
    \tilde{V}^F_{\lambda'}
    =
    i^{-\int \delta\lambda\cup_2\delta\lambda'}~.
\label{eq: (8+1)D commutator of V}
\end{equation}
The volume statistics can be represented as
\begin{eqs}
    \mu_{\mathrm{volume}}
    &=
    {\prod}'_{\sigma_4 \sqcup \tau_4 = \{1,2,3,4,5,6,7,8\}}
    \bigl[\tilde{V}^F_{(0\sigma_4)^*},\, \tilde{V}^F_{(0\tau_4)^*}\bigr]^{2} \\
    &=
    {\prod}'_{\sigma_4 \sqcup \tau_4 = \{1,2,3,4,5,6,7,8\}}
    (-1)^{\int \delta (0\sigma)^* \cup_2 \delta (0\tau)^*}
    \;:=\;
    (-1)^{n_0}~,
\end{eqs}
where the primed product indicates that each unordered partition $\{\sigma_4,\tau_4\}$ is counted only once (i.e., $\{\sigma_4,\tau_4\}$ and $\{\tau_4,\sigma_4\}$ contribute only one factor).

Without loss of generality, we fix $8\in\tau$ and write $\tau=\tau_3\sqcup\{8\}$ with $|\tau_3|=3$ and $|\sigma|=4$.
Then
\begin{eqs}
    n_0
    &=
    \sum_{\sigma_4 \sqcup \tau_3 = \{1,2,3,4,5,6,7\}}
    \int \delta(0\sigma_4)^* \cup_2 \delta(0\tau_3 8)^* \\
    &=
    \sum_{\sigma_4 \sqcup \tau_3 = \{1,2,3,4,5,6,7\}}
    \int (0\,\partial\sigma_4)^* \cup_2 (0\tau_3)^*
    \;+\;
    (0\,\partial\sigma_4)^* \cup_2 (0\,(\partial\tau_3)\,8)^*~.
\end{eqs}
The second term vanishes modulo two:
\begin{equation}
    \sum_{k \sqcup l \sqcup \sigma_3 \sqcup \tau_2 = \{1,2,3,4,5,6,7\}}
    \int (0\sigma_3)^* \cup_2 (0\tau_2 8)^*
    = 0 \pmod{2},
\end{equation}
since exchanging $k$ and $l$ produces the same contribution twice. Therefore,
\begin{equation}
    n_0
    =
    \sum_{k \sqcup \sigma_3 \sqcup \tau_3 = \{1,2,3,4,5,6,7\}}
    \int (0\sigma_3)^* \cup_2 (0\tau_3)^*~.
\end{equation}
Next, we symmetrize the sum:
\begin{equation}
    n_0
    =
    {\sum}'_{k \sqcup \sigma_3 \sqcup \tau_3 = \{1,2,3,4,5,6,7\}}
    \int (0\sigma_3)^* \cup_2 (0\tau_3)^*
    \;+\;
    (0\tau_3)^* \cup_2 (0\sigma_3)^*~,
\end{equation}
where the prime indicates that exchanging $\{\sigma_3,\tau_3\}$ is counted only once. Using the Leibniz rule for higher cup products, this becomes
\begin{equation}
    n_0
    =
    {\sum}'_{k \sqcup \sigma_3 \sqcup \tau_3 = \{1,2,3,4,5,6,7\}}
    \int (\partial\,0\sigma_3)^* \cup_3 (0\tau_3)^*
    \;+\;
    (0\sigma_3)^* \cup_3 (\partial\,0\tau_3)^*~.
\end{equation}
We now use the identity
\begin{eqs}
    &{\sum}'_{k \sqcup \sigma_3 \sqcup \tau_3 = \{1,2,3,4,5,6,7\}}
    \int (\partial\,0\sigma_3)^* \cup_3 (0\tau_3)^*
    \;+\;
    (\partial\,0\tau_3)^* \cup_3 (0\sigma_3)^* \\
    &=
    {\sum}_{k \sqcup \sigma_3 \sqcup \tau_3 = \{1,2,3,4,5,6,7\}}
    \int (\partial\,0\sigma_3)^* \cup_3 (0\tau_3)^* \\
    &=
    {\sum}_{k \sqcup l \sqcup \sigma_2 \sqcup \tau_3 = \{1,2,3,4,5,6,7\}}
    \int (0\sigma_2)^* \cup_3 (0\tau_3)^*
    = 0 \pmod{2},
\end{eqs}
since exchanging $k$ and $l$ again produces the same term twice. Hence, we may rewrite
\begin{eqs}
    n_0
    &=
    {\sum}'_{k \sqcup \sigma_3 \sqcup \tau_3 = \{1,2,3,4,5,6,7\}}
    \int (\partial\,0\tau_3)^* \cup_3 (0\sigma_3)^*
    \;+\;
    (0\sigma_3)^* \cup_3 (\partial\,0\tau_3)^* \\
    &=
    {\sum}'_{k \sqcup \sigma_3 \sqcup \tau_3 = \{1,2,3,4,5,6,7\}}
    \int (\partial\,0\sigma_3)^* \cup_4 (\partial\,0\tau_3)^*~.
\end{eqs}
We now iterate the same reduction:
\begin{eqs}
    n_0
    &=
    7\cdot
    {\sum}'_{\sigma_3 \sqcup \tau_3 = \{1,2,3,4,5,6\}}
    \int (\partial\,0\sigma_3)^* \cup_4 (\partial\,0\tau_3)^* \\
    &=
    7\cdot 5\cdot
    {\sum}'_{\sigma_2 \sqcup \tau_2 = \{1,2,3,4\}}
    \int (\partial\,0\sigma_2)^* \cup_6 (\partial\,0\tau_2)^* \\
    &=
    7\cdot 5\cdot 3\cdot
    {\sum}'_{\sigma_1 \sqcup \tau_1 = \{1,2\}}
    \int (\partial\,0\sigma_1)^* \cup_8 (\partial\,0\tau_1)^* \\
    &=
    7\cdot 5\cdot 3\cdot \int (0)^* \cup_8 (0)^*
    = 1 \pmod{2}~,
\end{eqs}
where in the last few lines we recursively apply the same argument to reduce the dimension step by step. Therefore,
\begin{equation}
    \mu_{\mathrm{volume}}
    =
    (-1)^{n_0}
    = -1~.
\end{equation}
\end{proof}

\subsection{Fermionic-volume toric code in $(d{+}1)$D}
\label{sec: Fermionic-volume toric code in (d+1)D}

We now generalize the above construction to $(d{+}1)$ dimensions. We place a $\ZZ_4$ qudit on each $(d{-}4)$-cell of an arbitrary triangulation (or cellulation) of the spatial manifold $M_d$. 
As in Sec.~\ref{sec: Fermionic-loop toric codes in arbitrary dimensions}, Eq.~\eqref{eq: H_condensed of fermionic-volume TC in (8+1)D} extends directly to arbitrary $d$, yielding a commuting-projector Hamiltonian
\begin{equation}
    H_{\mathrm{condensed}}
    = -\sum_{c_{d-5}} G_{c_{d-5}}
      \;-\sum_{c_{d-3}} B_{c_{d-3}}^{2}
      \;-\sum_{c_{d-4}} \tilde{C}_{c_{d-4}}
      \;+\; \mathrm{h.c.}~,
\end{equation}
with stabilizers
\begin{eqs}
    G_{c_{d-5}}
    &:= X_{\delta \boldsymbol{c}_{d-5}}\;
    \prod_{c_{d-4}'} Z_{c_{d-4}'}^{\int \delta \boldsymbol{c}_{d-5} \cup_{d-8} \boldsymbol{c}_{d-4}'}~, \\
    B_{c_{d-3}}
    &:= Z_{\partial c_{d-3}}
    = \prod_{c_{d-4}} Z_{c_{d-4}}^{\,\boldsymbol{c}_{d-4}(\partial c_{d-3})}~, \\
    \tilde{C}_{c_{d-4}}
    &:= X_{c_{d-4}}^{\,2}\;
    \prod_{c_{d-4}'} Z_{c_{d-4}'}^{\,2\int \boldsymbol{c}_{d-4}' \cup_{d-8} \boldsymbol{c}_{d-4}
    + \boldsymbol{c}_{d-4} \cup_{d-7} \delta \boldsymbol{c}_{d-4}'}~.
\end{eqs}

The model supports two excitation operators on a $(d{-}4)$-cell:
\begin{eqs}
    V^C_{c_{d-4}} &:= Z_{c_{d-4}}^2, \\
    \tilde{V}^F_{c_{d-4}}
    &:= X_{c_{d-4}}\prod_{c_{d-4}'}
    Z_{c_{d-4}'}^{\int \boldsymbol{c}_{d-4}' \cup_{d-8} \boldsymbol{c}_{d-4}
    - (-1)^d \delta \boldsymbol{c}_{d-4}' \cup_{d-7} \boldsymbol{c}_{d-4}}~.
    \label{eq: modified flux volume in (d+1)D fermionic-volume toric code}
\end{eqs}
The operator $V^C_{c_{d-4}}$ violates the $G_{c_{d-5}}$ stabilizers along $\partial c_{d-4}$, creating a bosonic charge-$(d{-}5)$-brane excitation. 
In contrast, $\tilde{V}^F_{c_{d-4}}$ commutes with $G_{c_{d-5}}$ and $\tilde{C}_{c_{d-4}}$, but violates $B_{c_{d-3}}^2$ on the coboundary $\delta \boldsymbol{c}_{d-4}$, thereby creating a flux volume excitation.
Moreover, the flux volume excitation has $\ZZ_2$ fusion:
\begin{equation}
    \bigl(\tilde{V}^F_{c_{d-4}}\bigr)^{2}
    =
    \tilde{C}_{c_{d-4}}
    \prod_{c_{d-3}} B_{c_{d-3}}^{\,2\int \boldsymbol{c}_{d-3} \cup_{d-7} \boldsymbol{c}_{d-4}}~,
\end{equation}
is a product of stabilizers.

We now verify that this flux volume has fermionic volume statistics.
\begin{theorem}
The flux volume created by $\tilde{V}^F_{c_{d-4}}$ in Eq.~\eqref{eq: modified flux volume in (d+1)D fermionic-volume toric code} has fermionic volume statistics:
\begin{equation}
    \mu_{\mathrm{volume}}
    :=
    \prod_{\sigma_3 \sqcup \tau_4 = \{2,3,4,5,6,7,8\}}
    \bigl[\tilde{V}^F_{(01\sigma_3)^*},\, \tilde{V}^F_{(0\tau_4)^*}\bigr]^{2}
    = -1~,
\end{equation}
where $\sigma_3$ and $\tau_4$ are disjoint index sets with $|\sigma_3|=3$ and $|\tau_4|=4$, and $(0ijkl)^*$ denotes the $(d{-}4)$-cochain Poincar\'e dual to the 4-simplex $\langle 0ijkl\rangle$.
\end{theorem}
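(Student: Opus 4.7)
The plan is to imitate the $(8{+}1)$D proof of the previous subsection with all relevant cochain degrees and higher-cup indices shifted up by $d-8$, and then to invoke the dimensional-reduction trick introduced in Sec.~\ref{sec: Fermionic-membrane toric code in arbitrary dimensions} so that the general $d$ case ultimately reduces to the $d=8$ calculation already established. First, I would promote $\tilde{V}^F_{c_{d-4}}$ to an operator $\tilde{V}^F_\lambda$ defined for an arbitrary $\ZZ$-valued $(d{-}4)$-cochain $\lambda$. A direct computation parallel to Eq.~\eqref{eq: (8+1)D commutator of V}, using the higher-cup Leibniz identity
\begin{equation}
    \delta(A_{d-4}\cup_{d-7}B_{d-4})
    = \delta A_{d-4}\cup_{d-7}B_{d-4}
      \pm A_{d-4}\cup_{d-7}\delta B_{d-4}
      \pm A_{d-4}\cup_{d-8}B_{d-4}
      \pm B_{d-4}\cup_{d-8}A_{d-4},
\end{equation}
should cancel the cross terms and yield
\begin{equation}
    [\tilde V^F_\lambda,\tilde V^F_{\lambda'}]
    = i^{\pm \int \delta\lambda\cup_{d-6}\delta\lambda'}.
\end{equation}
The global sign is immaterial because the statistics expression in Theorem~\ref{thm: order 2 statistics for Z_2N} contains each commutator to an even power.

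Second, I would substitute this commutator into the $p=3$, $N=1$ instance of Theorem~\ref{thm: order 2 statistics for Z_2N}, set $U_{0\sigma}:=\tilde V^F_{(0\sigma)^*}$, and obtain $\mu_{\mathrm{volume}}=(-1)^{n_0}$ with
\begin{equation}
    n_0 \;=\; {\sum}'_{\sigma_4\sqcup\tau_4=\{1,\dots,8\}}
    \int_{M_d}\delta(0\sigma_4)^*\cup_{d-6}\delta(0\tau_4)^*.
\end{equation}
Third, I would exploit the mesoscopic localization of the Poincar\'e duals on the subcomplex $\partial\Delta^8\subset M_d$: the supports of $(0\sigma_4)^*$ and $(0\tau_4)^*$ overlap only along the $(d-8)$ directions transverse to $\partial\Delta^8$. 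By the geometric definition of $\cup_{d-6}$, exactly as used in the proof of fermionic-membrane statistics in Sec.~\ref{sec: Fermionic-membrane toric code in arbitrary dimensions}, the transverse directions can be integrated out, reducing the integral to an ordinary $\cup_2$ pairing on $\partial\Delta^8$. This identifies $n_0$ (mod $2$) with the quantity computed in the $(8{+}1)$D theorem of Sec.~\ref{sec: Fermionic-volume toric code in (8+1)D}, where repeated Leibniz cancellations and pairing of $\sigma\leftrightarrow\tau$ terms collapse the expression to $7\cdot5\cdot3\int (0)^*\cup_8(0)^*\equiv \int(0)^*\equiv 1\pmod 2$. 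Hence $\mu_{\mathrm{volume}}=-1$.

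The main obstacle is the second step: justifying that the reduction from $\cup_{d-6}$ on $M_d$ to $\cup_2$ on $\partial\Delta^8$ preserves the mod-$2$ count of nonvanishing contributions, and not merely each individual term. The combinatorial pairings that drive the cancellations in the $(8{+}1)$D proof (exchanging $k\leftrightarrow l$ in the auxiliary sums, and pairing $(\sigma,\tau)\leftrightarrow(\tau,\sigma)$) rely on dimensional coincidences that must remain valid after restriction; this is essentially the same verification done for the fermionic-membrane proof and should go through without difficulty. An alternative route is to mimic the cascade of Leibniz identities in Theorem~\ref{thm: fermionic flux loop statistics in (d+1)D}, raising the cup-index from $\cup_{d-6}$ through $\cup_{d-5},\cup_{d-4},\ldots$ until one arrives at $p_0\cup_d p_0=p_0\pmod 2$; this avoids the geometric reduction but demands careful bookkeeping of the parity-dependent signs in each Leibniz identity.
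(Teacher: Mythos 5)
Your first two steps coincide with the paper's proof: promote $\tilde V^F_{c_{d-4}}$ to $\tilde V^F_\lambda$ for an arbitrary $(d-4)$-cochain, compute the (squared) commutator to get $(-1)^{\int\delta\lambda\cup_{d-6}\delta\lambda'}$ (the paper indeed only asserts the square, precisely because the sign conventions depend on the parity of $d$, and as you note only even powers enter), and substitute into the $p=3$, $N=1$ instance of Theorem~\ref{thm: order 2 statistics for Z_2N}. Where you diverge is the finish: the paper's actual argument is what you list as the ``alternative route''---it repeats the $(8{+}1)$D cascade of Leibniz identities verbatim with every cup index shifted by $d-8$, ending at $\int (0)^*\cup_d(0)^*=\int(0)^*=1\pmod 2$; no restriction to a mesoscopic subcomplex is invoked. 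Your primary route, reducing the $\cup_{d-6}$ pairing on $M_d$ to the already-established $d=8$ computation via the support argument of Sec.~\ref{sec: Fermionic-membrane toric code in arbitrary dimensions}, is a legitimate and arguably more economical alternative (it reuses the $(8{+}1)$D result rather than redoing the cascade), but your bookkeeping has a slip: $\partial\Delta^8$ is a $7$-dimensional complex, so the transverse dimension is $d-7$ and the restriction lands on a $\cup_1$ pairing of $4$-cochains on $\partial\Delta^8$, not a $\cup_2$ pairing; to literally recover the $(8{+}1)$D expression with $\cup_2$ you must instead restrict to an $8$-dimensional tubular region containing $\partial\Delta^8$ (transverse dimension $d-8$), and in either version the reduction rests on the same product-neighborhood regularity of the Poincar\'e duals that the membrane-section trick implicitly assumes, applied termwise to $\delta(0\sigma_4)^*=(\partial(0\sigma_4))^*$. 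With that correction the reduction does preserve the mod-2 count and the conclusion $\mu_{\mathrm{volume}}=-1$ follows; alternatively, your fallback route is exactly the paper's proof.
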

\begin{proof}
We extend the definition of the operator to an arbitrary $(d{-}4)$-cochain $\lambda\in C^{d-4}(M_d,\ZZ)$ by
\begin{equation}
    \tilde{V}^F_\lambda
    :=
    X_\lambda
    \prod_{c_{d-4}'}
    Z_{c_{d-4}'}^{\int \boldsymbol{c}_{d-4}'\cup_{d-8}\lambda
        -(-1)^d \delta\boldsymbol{c}_{d-4}'\cup_{d-7} \lambda}~.
\end{equation}
A direct computation shows that its commutator square is
\begin{equation}
    \bigl([\tilde{V}^F_{\lambda},\tilde{V}^F_{\lambda'}]\bigr)^2
    =
    (-1)^{\int \delta\lambda\cup_{d-6}\delta\lambda'}~.
\label{eq: (d+1)D commutator of V}
\end{equation}
The volume statistic is then defined by the same product of commutators as in the $(8{+}1)$D case,
\begin{eqs}
    \mu_{\mathrm{volume}}
    &=
    {\prod}'_{\sigma_4 \sqcup \tau_4 = \{1,2,3,4,5,6,7,8\}}
    \bigl[\tilde{V}^F_{(0\sigma_4)^*},\, \tilde{V}^F_{(0\tau_4)^*}\bigr]^{2}  \\
    &=
    {\prod}'_{\sigma_4 \sqcup \tau_4 = \{1,2,3,4,5,6,7,8\}}
    (-1)^{\int \delta (0\sigma_4)^* \cup_{d-6} \delta (0\tau_4)^*}~,
\end{eqs}
where the primed product indicates that each unordered partition $\{\sigma,\tau\}$ is counted only once.

The remaining simplification is identical to the computation in Sec.~\ref{sec: Fermionic-volume toric code in (8+1)D}, with all higher cup products shifted by $d{-}8$ (e.g., $\cup_2$ is replaced by $\cup_{d-6}$ throughout). 
Carrying out the same cancellations yields
\begin{equation}
    \mu_{\mathrm{volume}}
    =
    (-1)^{\int (0)^* \cup_{d} (0)^*}
    = -1~,
\end{equation}
which confirms the fermionic volume statistics.
\end{proof}

\section{Conclusion and Outlook}

In this work, we developed a Pauli stabilizer formalism for realizing higher-dimensional TQFTs as stabilizer Hamiltonians and for computing the generalized statistics of their extended excitations directly on the lattice. In particular, we constructed explicit Pauli stabilizer models for a broad class of higher-form gauge theories and showed how their (extended) excitations have nontrivial statistics, via microscopic unitary processes within the stabilizer setting.

Our main example is the fermionic-loop toric code: in $(4{+}1)$D it arises from condensing the $e^2m^2$ loop in the loop-only $\ZZ_4$ toric code, and the resulting flux loop exhibits fermionic loop statistics detected by the loop-flipping process. We further constructed Pauli stabilizer realizations of all $(4{+}1)$D Dijkgraaf--Witten $2$-form gauge theories classified by $H^5(B^2G,U(1))$ via systematic loop condensations, and we generalized the fermionic-loop construction to arbitrary dimensions using algebraic-topology notation (including higher cup products). 
Beyond loops, our framework led to new exactly solvable higher-dimensional models with nontrivial statistics of higher-dimensional excitations, including anyonic/semionic membrane statistics in $(6{+}1)$D and fermionic membrane and volume statistics in higher dimensions. 

{\change
From the viewpoint of quantum information, the stabilizer realizations constructed here provide more than microscopic representatives of continuum TQFTs. 
They also give explicit higher-dimensional quantum codes whose stabilizers, excitation operators, and generalized statistics are all algebraically tractable. 
In such models, extended excitations play the role of syndrome defects, while their creation and transport operators give natural candidates for logical operators and fault-tolerant processes. 
Moreover, because our construction is based on condensation and higher-form gauge structure, it suggests systematic ways to engineer boundaries, defects, and domain walls, which may lead to new logical gate implementations or improved memory properties. 
The present work focuses on the bulk topological orders and their generalized statistics; developing concrete decoding algorithms, finite-size code families, and fault-tolerant protocols based on these models is an important direction for future work.
}

A central theme emerging from these results is the distinction between \emph{Pauli} and \emph{non-Pauli} statistics. In our formalism, Pauli realizations impose that relevant excitation operators commute up to $U(1)$ phases, greatly simplifying statistical processes and making the invariants efficiently computable. At the same time, certain stable statistics are known to be intrinsically non-Pauli. A prominent example is the $\ZZ_3$ Pontryagin component of membrane statistics, which requires at least Clifford structure and cannot be captured by Pauli operators alone~\cite{feng2025anyonic}.

This motivates several concrete directions for future work:
\begin{enumerate}
\item \textbf{Relating Pauli and non-Pauli statistics~\cite{FHH21, CH21, kobayashi2024generalized, xue2025statisticsabeliantopologicalexcitations}.}
Our Pauli theory defines a subgroup of realizations and yields a corresponding notion of Pauli statistics.
As discussed in Appendix~\ref{app: Statistics in non-Pauli models}, there is a canonical map from Pauli statistics to the general (non-Pauli) statistics; it is not surjective in general, and its injectivity remains open.
Guided by all examples we have computed, we conjecture that Pauli statistics embeds into non-Pauli statistics, i.e., Pauli statistics forms a subgroup of the full classification.

\item \textbf{Quadratic representatives and Pauli realizability~\cite{Eilenberg:1946, Eilenberg1953OnTG, Eilenberg1954OnTG, Wang2020InAbelian}.}
We further conjecture that Pauli-realizable statistics correspond to cohomology classes that admit \emph{quadratic} representatives in the gauge fields (at the cochain level), compatible with phase factors accessible in Pauli/qudit Pauli algebras.
In other words, a necessary (and plausibly close to sufficient) condition for Pauli realizability is that the relevant topological term can be written as a quadratic expression, schematically of the form ``$b \cup_i b$'' or ``$b \cup_i \delta b$'', possibly after an appropriate group extension. Developing a precise algebraic criterion for when a cocycle class admits such a quadratic representative would provide a conceptual classification of Pauli statistics.

\item \textbf{Beyond Pauli: Clifford and higher-level realizations~\cite{Breuckmann2024CupsandGates, Hsin2025NonAbelian, Kobayashi2025CliffordHierarchy, Sakura2025TransversalCliffordHierarchy}.}
A natural extension is to enlarge the operator set from Pauli to Clifford (or higher levels of the Clifford hierarchy), with the goal of realizing the missing non-Pauli invariants such as Pontryagin statistics. Understanding how the statistical processes change when commutators are no longer pure phases, and how these processes are organized by higher categorical or homotopy-theoretic data, would bridge stabilizer models with more general exactly solvable lattice constructions.

\item \textbf{Boundaries, defects, and code design~\cite{Fowler2012Surfacecodes, Bravyi2024HighThreshold, Barkeshli2023Codimension, Barkeshli2024higher, Barkeshli2024higherfermion,hsin2025classifyinglogicalgatesquantum, Liang2024Extracting, liang2024operator, liang2025generalized, liang2025planar, Ruba:2025faz}.}
Finally, it would be valuable to extend the present constructions to include systematic boundary and defect engineering in higher dimensions, where different choices of condensations and stabilizer generators may translate into distinct fault-tolerant gate sets or improved memory properties. The explicit microscopic control provided by stabilizer realizations makes them a promising arena for exploring such code-design questions.

\item \textbf{Connections to QCAs and lattice anomalies~\cite{Gross2012GNVWindex, Else2014Classifying, Freedman2020ClassificationQCA, Haah2021CliffordQCA, Shirley2022QCA, haah_QCA_23, haah2025topological, yang2025categorifying, feng2025higherformanomalieslattices, feng2025onsiteabilityhigherformsymmetries,
kapustin2025higher, kapustin2025higher2, sun2025cliffordQCA, Thorngren2025latticeanomalies}.}
Our lattice-based viewpoint suggests a close relationship between generalized statistics, lattice anomalies, and locality-preserving dynamics (including Clifford QCAs). Clarifying this relationship---for example, identifying which anomaly classes admit Pauli (or Clifford) representatives and how they compose under stacking and condensation---may offer a unified description of invertible phases, QCAs, and their associated algebraic invariants.

\end{enumerate}

We expect that a clearer understanding of \emph{Pauli-realizable} statistics—for example, in terms of when cocycle classes admit quadratic cochain representatives—may help clarify what can and cannot be captured within the stabilizer paradigm. Such a perspective could offer a more systematic way to identify higher-dimensional topological orders and generalized statistics that admit Pauli stabilizer realizations, while also highlighting situations where genuinely non-Pauli invariants (such as Pontryagin-type statistics) may require broader operator frameworks, e.g., Clifford or higher-level constructions. More broadly, we hope that developing these connections will guide future work on lattice realizations of higher-dimensional TQFTs and their excitation structures, and may eventually inform the design of more efficient self-correcting quantum memories.

\section*{Acknowledgement}

We are grateful to Andreas Bauer, Tyler D. Ellison, Dominic Else, Anton Kapustin, Kyle Kawagoe, Sahand Seifnashri, Wilbur Shirley, Nikita Sopenko, Nathanan Tantivasadakarn, Bowen Yang, Peng Ye, and Carolyn Zhang for valuable discussions and feedback.

Y.-A.C. is supported by the National Natural Science Foundation of China (Grant No.~12474491), and the Fundamental Research Funds for the Central Universities, Peking University.
P.-S.H. is supported by the Department of Mathematics, King's College London.
R.K. is supported by the U.S. Department of Energy, Office of Science, Office of High Energy Physics under Award Number DE-SC0009988 and by the Sivian Fund at the Institute for Advanced Study.

\appendix

{\change
\section{Cochain and higher-cup-product conventions}
\label{app:cochain-conventions}

In this appendix, we collect the cochain conventions used in the main text.
The goal is to make the notation self-contained for readers who want to follow
the stabilizer Hamiltonians and the statistics calculations without referring
to a separate algebraic-topology review.

We work with cochains on a triangulated or cellulated manifold $M$. For explicit
formulas for higher cup products, one may take $M$ to be a branched
triangulation. On hypercubic lattices we use the corresponding cubical cup
products; the identities used in the main text are the same. Unless otherwise
stated, all cochains are written additively.

\subsection{Chains, cochains, and integration}

Let $C_p(M,A)$ and $C^p(M,A)$ denote the groups of $p$-chains and $p$-cochains
with coefficients in an Abelian group $A$. In this paper, $A$ is usually $\ZZ$,
$\ZZ_N$, or $\RR/\ZZ$. A $p$-cochain $a\in C^p(M,A)$ is a linear function
\begin{equation}
    a:\; C_p(M,\ZZ)\longrightarrow A~.
\end{equation}
Thus $a(\sigma)$ is the value of $a$ on the oriented $p$-simplex or $p$-cell
$\sigma$.

For an oriented simplex
\begin{equation}
    \sigma=\langle v_0v_1\cdots v_p\rangle,
\end{equation}
the boundary is
\begin{equation}
    \partial\sigma
    =
    \sum_{i=0}^p
    (-1)^i
    \langle v_0\cdots \widehat{v_i}\cdots v_p\rangle,
\end{equation}
where $\widehat{v_i}$ means that the vertex $v_i$ is omitted. The coboundary
operator
\begin{equation}
    \delta:\; C^p(M,A)\longrightarrow C^{p+1}(M,A)
\end{equation}
is defined by duality:
\begin{equation}
    (\delta a)(\sigma)=a(\partial\sigma),
    \qquad
    a\in C^p(M,A),
    \qquad
    \sigma\in C_{p+1}(M,\ZZ)~.
\end{equation}
Therefore $\delta^2=0$, because $\partial^2=0$. A cochain satisfying
$\delta a=0$ is called a cocycle.

We use bold symbols for basis cochains associated with cells. If $c$ is an
oriented $p$-cell, then $\mathbf c\in C^p(M,\ZZ)$ denotes the cochain
\begin{equation}
    \mathbf c(c')
    =
    \begin{cases}
        +1, & c'=c \text{ with the same orientation},\\
        -1, & c'=c \text{ with the opposite orientation},\\
        0,  & c'\neq c~.
    \end{cases}
\end{equation}
Over $\ZZ_2$, the signs are ignored. In the main text, we sometimes suppress
the boldface when no confusion can arise; for example, $\delta f$ may mean
$\delta\mathbf f$ when $f$ is being used as a basis cochain.

For a top-degree cochain $a\in C^d(M,A)$ on a $d$-dimensional manifold $M$, we
write
\begin{equation}
    \int_M a
    :=
    \sum_{\sigma^d\subset M} a(\sigma^d),
\end{equation}
where the sum is over all oriented $d$-simplices or $d$-cells. If the domain of
integration is omitted, it is understood to be the ambient closed manifold.
With these conventions, Stokes' theorem is
\begin{equation}
    \int_M \delta a
    =
    \int_{\partial M} a~.
\end{equation}
In particular,
\begin{equation}
    \int_M \delta a =0
\end{equation}
when $M$ is closed.

\subsection{The ordinary cup product}

The ordinary cup product is a cochain-level product. For coefficients in a
commutative ring $R$, such as $\ZZ$ or $\ZZ_N$, it is a map
\begin{equation}
    \cup:\; C^p(M,R)\times C^q(M,R)
    \longrightarrow
    C^{p+q}(M,R)~.
\end{equation}
On an ordered simplex it is defined by the Alexander--Whitney formula
\begin{equation}
    (a\cup b)(\langle 0\,1\,\cdots\,p+q\rangle)
    =
    a(\langle 0\,1\,\cdots\,p\rangle)
    b(\langle p\,p+1\,\cdots\,p+q\rangle),
\end{equation}
for $a\in C^p(M,R)$ and $b\in C^q(M,R)$. For example, if
$a,b\in C^1(M,R)$, then
\begin{equation}
    (a\cup b)(012)=a(01)b(12)~.
\end{equation}

The coboundary satisfies the Leibniz rule
\begin{equation}
    \delta(a\cup b)
    =
    \delta a\cup b
    +
    (-1)^p a\cup\delta b,
    \qquad
    a\in C^p(M,R)~.
\end{equation}

A key point is that the ordinary cup product is not graded-commutative at the
cochain level. Even if $a$ and $b$ are cocycles, the cochains $a\cup b$ and
$(-1)^{pq}b\cup a$ are generally not equal. They become cohomologous only after
adding a coboundary. Higher cup products provide explicit cochains whose
coboundaries measure these failures of commutativity. This is why they naturally
appear in the statistics calculations in the main text.

In the $\RR/\ZZ$-valued actions used in the main text, we usually compute cup
products using integer lifts and then multiply the resulting integer-valued
cochain by a rational coefficient, interpreting the final answer in $\RR/\ZZ$.

\subsection{Higher cup products}

For $i\geq 0$, Steenrod's higher cup products are operations
\begin{equation}
    \cup_i:\; C^p(M,R)\times C^q(M,R)
    \longrightarrow
    C^{p+q-i}(M,R),
    \qquad
    \cup_0=\cup~.
\end{equation}
We set $\cup_i=0$ for $i<0$. The convention used throughout the paper is fixed
by the identity
\begin{align}
    \delta(a\cup_i b)
    &=
    \delta a\cup_i b
    +
    (-1)^p a\cup_i \delta b
    +
    (-1)^{p+q+i}a\cup_{i-1} b
    +
    (-1)^{pq+p+q}b\cup_{i-1}a~,
\label{eq:app-higher-cup-leibniz}
\end{align}
where $a\in C^p(M,R)$ and $b\in C^q(M,R)$.

For $\ZZ_2$-valued cochains, all signs in
Eq.~\eqref{eq:app-higher-cup-leibniz} may be dropped:
\begin{equation}
    \delta(a\cup_i b)
    =
    \delta a\cup_i b
    +
    a\cup_i\delta b
    +
    a\cup_{i-1}b
    +
    b\cup_{i-1}a
    \qquad
    \mathrm{mod}\;2~.
\end{equation}
Thus, when $a$ and $b$ are $\ZZ_2$ cocycles,
\begin{equation}
    a\cup b+b\cup a
    =
    \delta(a\cup_1 b)~.
\end{equation}
So $\cup_1$ is the cochain that witnesses commutativity of the cup product up
to a coboundary. Similarly, $\cup_2$ controls the corresponding failure for
$\cup_1$, and so on. This hierarchy is the sense in which higher cup products
encode successive ``higher commutativity'' data.

For readers who want a concrete formula, the $\ZZ_2$ cup-$1$ product is
\begin{align}
    &(a\cup_1 b)(\langle 0,\ldots,p+q-1\rangle)
    \notag\\
    &\qquad =
    \sum_{j=0}^{p-1}
    a(\langle 0,\ldots,j,q+j,\ldots,p+q-1\rangle)
    b(\langle j,\ldots,q+j\rangle),
\end{align}
for $a\in C^p(M,\ZZ_2)$ and $b\in C^q(M,\ZZ_2)$. For example, if
$a,b\in C^2(M,\ZZ_2)$, then
\begin{equation}
    (a\cup_1 b)(0123)
    =
    a(023)b(012)+a(013)b(123)~.
\end{equation}
The full explicit formula for $\cup_i$ with $i\geq 2$ is not needed in this
paper; all computations use the coboundary identity
Eq.~\eqref{eq:app-higher-cup-leibniz} together with locality of supports.

Several special cases of Eq.~\eqref{eq:app-higher-cup-leibniz} are used
repeatedly. For $A_2,B_2\in C^2(M,\ZZ)$,
\begin{equation}
    \delta(A_2\cup_1 B_2)
    =
    \delta A_2\cup_1 B_2
    +
    A_2\cup_1\delta B_2
    -
    A_2\cup B_2
    +
    B_2\cup A_2~.
\end{equation}
For $A_3,B_3\in C^3(M,\ZZ)$,
\begin{equation}
    \delta(A_3\cup_1 B_3)
    =
    \delta A_3\cup_1 B_3
    -
    A_3\cup_1\delta B_3
    -
    A_3\cup B_3
    -
    B_3\cup A_3~,
\end{equation}
and
\begin{equation}
    \delta(A_3\cup_3 B_3)
    =
    \delta A_3\cup_3 B_3
    -
    A_3\cup_3\delta B_3
    -
    A_3\cup_2 B_3
    -
    B_3\cup_2 A_3~.
\end{equation}
More generally, if $A_{d-2}\in C^{d-2}(M,\ZZ)$ and
$B_{d-1}\in C^{d-1}(M,\ZZ)$, then
\begin{align}
    \delta(A_{d-2}\cup_{d-2} B_{d-1})
    &=
    \delta A_{d-2}\cup_{d-2} B_{d-1}
    +
    (-1)^d A_{d-2}\cup_{d-2}\delta B_{d-1}
    \notag\\
    &\quad
    -
    (-1)^d A_{d-2}\cup_{d-3}B_{d-1}
    -
    B_{d-1}\cup_{d-3}A_{d-2}~.
\end{align}

The higher cup products are local: if the supports of two cochains do not meet
inside the simplex or cell on which the product is evaluated, then their higher
cup product evaluates to zero. This locality property is used throughout the
statistics calculations, for example when cup products between Poincar\'e-dual
cochains supported on disjoint simplices are set to zero.

Finally, for a $\ZZ_2$ cocycle $B_n\in Z^n(M,\ZZ_2)$, the Steenrod square has
the standard cochain representative
\begin{equation}
    \mathrm{Sq}^r(B_n)
    =
    B_n\cup_{n-r}B_n~.
\end{equation}
The stable operations $\mathrm{Sq}^2\mathrm{Sq}^1$ and
$\mathrm{Sq}^4\mathrm{Sq}^1$ used in the main text are reviewed in
Appendix~\ref{app: Calculation of stable Z2 operations}.

\subsection{Coefficient conventions and Bockstein cochains}

When a cochain is $\ZZ_N$-valued, we often choose an integer lift and use the
same symbol for the lifted cochain when no confusion can arise. More explicitly,
if
\begin{equation}
    b\in C^p(M,\ZZ_N),
\end{equation}
we choose
\begin{equation}
    \widetilde b\in C^p(M,\ZZ),
    \qquad
    \widetilde b\equiv b \pmod N~.
\end{equation}
Expressions such as
\begin{equation}
    \frac{1}{N}\int a\cup\delta b,
    \qquad
    \frac{1}{4}\int b\cup\delta b,
\end{equation}
are computed using integer lifts and then interpreted in $\RR/\ZZ$. A different
choice of lift may change the cochain representative, but not the topological
action in the applications considered in the main text.

If $b\in C^p(M,\ZZ_N)$ is closed modulo $N$, then $\delta\widetilde b$ is
divisible by $N$. We define
\begin{equation}
    \frac{\delta b}{N}
    :=
    \frac{\delta\widetilde b}{N}~.
\end{equation}
The right-hand side is first an integer-valued $(p{+}1)$-cochain. When needed,
we reduce it modulo $N$ and regard it as a $\ZZ_N$-valued Bockstein cochain:
\begin{equation}
    \mathrm{Bock}(b)
    :=
    \frac{\delta\widetilde b}{N}
    \mod N
    \in C^{p+1}(M,\ZZ_N)~.
\end{equation}
In the main text we also write
\begin{equation}
    \frac{\delta b}{N}
    :=
    \mathrm{Bock}(b)~.
\end{equation}
For instance, the cocycles for $(4{+}1)$D Dijkgraaf--Witten 2-form gauge
theories contain terms of the form
\begin{equation}
    b_j\cup \frac{\delta b_k}{N_k}~.
\end{equation}

For a $\ZZ_2$ cocycle $B$, the operation $\mathrm{Sq}^1B$ is the Bockstein
associated with
\begin{equation}
    0\longrightarrow \ZZ_2
    \longrightarrow \ZZ_4
    \longrightarrow \ZZ_2
    \longrightarrow 0~.
\end{equation}
Equivalently, after choosing a $\ZZ_4$ lift $\widetilde B$ of $B$, it can be
represented as the mod-$2$ cochain
\begin{equation}
    \mathrm{Sq}^1B
    =
    \frac{\delta\widetilde B}{2}
    \mod 2~.
\end{equation}

\subsection{Poincar\'e-dual cochains}

Let $M$ be a $d$-dimensional oriented triangulated manifold. If $\sigma$ is a
$p$-simplex or $p$-chain embedded in $M$, we denote its Poincar\'e dual by
\begin{equation}
    \sigma^*\in C^{d-p}(M,\ZZ)~.
\end{equation}
We choose the convention
\begin{equation}
    \delta(\sigma^*)=(\partial\sigma)^*~.
\end{equation}
This equation fixes our orientation convention for Poincar\'e-dual cochains.
Other sign conventions differ by systematic orientation signs, but all
computations in the main text use the convention above.

For the oriented triangle $(012)$,
\begin{equation}
    \delta(012)^*
    =
    (\partial 012)^*
    =
    (12)^*-(02)^*+(01)^*~.
\end{equation}
Equivalently, if $\nu_{ij}:=(ij)^*$, then
\begin{equation}
    \delta(012)^*
    =
    \nu_{12}-\nu_{02}+\nu_{01}~.
\end{equation}
Over $\ZZ_2$, the signs are omitted.

In the statistics calculations, we often write a simplex by listing its vertices
without commas. For example, $012$ means the oriented simplex $\langle 0\,1\,2\rangle$.
If $\sigma=\{a_0<a_1<\cdots<a_p\}$ is an ordered set of vertices, then $0\sigma$
means the simplex obtained by adjoining the vertex $0$ and then using the
induced order. We also use
\begin{equation}
    \partial_i\sigma
    :=
    \{a_0,\ldots,\widehat{a_i},\ldots,a_p\}
\end{equation}
when expanding boundaries.

The following elementary consequence is used several times. If $p_0=(0)^*$ is
the Poincar\'e dual of a vertex in a $d$-dimensional manifold, then $p_0$ is a
$d$-cochain, and
\begin{equation}
    p_0\cup_d p_0=p_0
    \qquad
    \mathrm{mod}\;2~.
\end{equation}
Therefore
\begin{equation}
    \int p_0\cup_d p_0=1
    \qquad
    \mathrm{mod}\;2
\end{equation}
when the vertex $0$ is included with unit orientation. This is the final step in
several fermionic statistics computations in the main text.

\subsection{Cochain notation for Pauli products}

We now explain the shorthand used for Pauli stabilizer operators. Suppose an
$N$-state qudit is placed on each $p$-cell $c$ of $M$, with generalized Pauli
operators $X_c$ and $Z_c$ satisfying
\begin{equation}
    Z_cX_c
    =
    \exp\left(\frac{2\pi i}{N}\right)
    X_cZ_c~.
\end{equation}
For an integer-valued or $\ZZ_N$-valued $p$-cochain $\lambda$, define
\begin{equation}
    X_\lambda
    :=
    \prod_{c\in M_p}X_c^{\lambda(c)},
    \qquad
    Z_\lambda
    :=
    \prod_{c\in M_p}Z_c^{\lambda(c)}~,
\end{equation}
where $M_p$ denotes the set of $p$-cells. Exponents are understood modulo $N$
for $N$-state qudits. Negative exponents mean inverses; for example,
$Z_c^{-1}=Z_c^\dagger$.

With these conventions, if $\lambda,\mu\in C^p(M,\ZZ_N)$, then
\begin{equation}
    Z_\mu X_\lambda
    =
    \exp\left(
        \frac{2\pi i}{N}
        \sum_{c\in M_p}\mu(c)\lambda(c)
    \right)
    X_\lambda Z_\mu~.
\end{equation}
Equivalently, defining
\begin{equation}
    \langle \mu,\lambda\rangle
    :=
    \sum_{c\in M_p}\mu(c)\lambda(c)
    \in \ZZ_N~,
\end{equation}
we have
\begin{equation}
    Z_\mu X_\lambda
    =
    \exp\left(
        \frac{2\pi i}{N}\langle\mu,\lambda\rangle
    \right)
    X_\lambda Z_\mu~.
\end{equation}

This notation is also used when the exponent is naturally a chain. The chain is
first paired with the corresponding basis cochains. For example, if qudits live
on faces $f$, then
\begin{equation}
    X_{\delta e}
    =
    \prod_f X_f^{\delta e(f)},
    \qquad
    Z_{\partial t}
    =
    \prod_f Z_f^{f(\partial t)}~.
\end{equation}
This is the convention behind the compact stabilizer notation
\begin{equation}
    A_e=X_{\delta e},
    \qquad
    B_t=Z_{\partial t}~.
\end{equation}

More general Pauli operators in the main text contain cup-product exponents. A
typical example is
\begin{equation}
    X_\lambda
    \prod_{c'}
    Z_{c'}^{\int c'\cup_i\lambda}~,
\end{equation}
where $\lambda$ is a cochain of the same degree as the cells carrying the
qudits. The number
\begin{equation}
    \int c'\cup_i\lambda
\end{equation}
is an integer obtained by evaluating the top-degree cochain $c'\cup_i\lambda$
on the ambient manifold. Thus it gives the exponent of the Pauli operator
$Z_{c'}$. This notation is what allows the Hamiltonians and excitation operators
in the main text to be written in a dimension-independent form.
}

\section{Calculation of stable $\ZZ_2$ operations}\label{app: Calculation of stable Z2 operations}

In this appendix, we explicitly calculate the cochain-level expressions of stable cohomology operations $\Sq^2\Sq^1$ and $\Sq^4\Sq^1$. This result is used in the main text to prove that certain flux anomalies are stable in higher dimensions. Also, we relate them to Stiefel-Whitney classes on the manifold. 

We begin by reviewing the cochain-level representation of Steenrod square operation. Given two integers $k,n\geq 0$ and a $\ZZ_2$-cocycle $B_n$ of degree $n$, the Steenrod square is represented by
\begin{eqs}
    \Sq^k(B_n)=B_n\cup_{n-k} B_n~.\label{sq k Bn}
\end{eqs}
In particular, when $k=1$ we obtain the Bockstein homomorphism
\begin{eqs}
    \Sq^1(B_n)=\delta \left(\frac{B_n}{2}\right)~,\label{sq 1 Bn}
\end{eqs}
where $B_n/2$ represents an arbitrary lift of $B_n$ to $\ZZ_4$-cocycle. Combining Eqs.~\eqref{sq k Bn} and \eqref{sq 1 Bn}, we obtain
\begin{eqs}
    \Sq^2\Sq^1(B_n)=\delta \left(\frac{B_n}{2}\right)\cup_{n-1}\delta \left(\frac{B_n}{2}\right)~.
\end{eqs}
This expression can be further simplified using the following identity of higher cup product,
\begin{eqs}
    \delta(a_n\cup_k b_m)=\delta a_n\cup_k b_m+(-1)^n a_n\cup_k\delta b_m+(-1)^{n+m+k} a_n\cup_{k-1} b_m+(-1)^{nm+n+m} b_m\cup_{k-1} a_n~.\label{higher cup id}
\end{eqs}
Following Eq.~\eqref{higher cup id}, we can deduce two identities
\begin{eqs}
    \delta\left(\frac{B_n}{2}\cup_{n-1}\delta\left(\frac{B_n}{2}\right)\right)&=\delta \left(\frac{B_n}{2}\right)\cup_{n-1}\delta \left(\frac{B_n}{2}\right)+(-1)^n \frac{B_n}{2}\cup_{n-2}\delta\left(\frac{B_n}{2}\right)-\delta\left(\frac{B_n}{2}\right)\cup_{n-2}\frac{B_n}{2}~\label{identity 1}
\end{eqs}
and
\begin{eqs}
    \delta\left(\frac{B_n}{2}\cup_{n-2}\frac{B_n}{2}\right)=\delta\left(\frac{B_n}{2}\right)\cup_{n-2}\frac{B_n}{2}+(-1)^n \frac{B_n}{2}\cup_{n-2}\delta\left(\frac{B_n}{2}\right)+2\cdot (-1)^n \frac{B_n}{2}\cup_{n-3}\frac{B_n}{2}~.\label{identity 2}
\end{eqs}
Now, adding Eqs.~\eqref{identity 1} and \eqref{identity 2} together, we obtain that
\begin{eqs}
    \delta \left(\frac{B_n}{2}\right)\cup_{n-1}\delta \left(\frac{B_n}{2}\right)+2\cdot (-1)^n\left(\frac{B_n}{2}\cup_{n-3}\frac{B_n}{2}+\frac{B_n}{2}\cup_{n-2}\delta\left(\frac{B_n}{2}\right)\right)~
\end{eqs}
is a full differential, and thus represents trivial cohomology class. Therefore, we have proved the equivalence
\begin{eqs}
    \frac{1}{2}\Sq^2\Sq^1(B_n)\sim (-1)^{n+1}\left(\frac{B_n}{2}\cup_{n-3}\frac{B_n}{2}+\frac{B_n}{2}\cup_{n-2}\delta\left(\frac{B_n}{2}\right)\right)~.\label{z4 rep of sq2sq1}
\end{eqs}
In the case where $n=2$, the first term of RHS vanishes, and Eq.~\eqref{z4 rep of sq2sq1} simplifies to
\begin{eqs}
    \frac{1}{2}\Sq^2\Sq^1(B_2)\sim -\frac{B_2}{2}\cup\delta\left(\frac{B_2}{2}\right)~
\end{eqs}
as expected. 

The calculation process for $\Sq^4\Sq^1$ is analogous. In this case, we have the following cochain-level expression of $\Sq^4\Sq^1$,
\begin{eqs}
    \Sq^4\Sq^1(B_n)=\delta \left(\frac{B_n}{2}\right)\cup_{n-3}\delta \left(\frac{B_n}{2}\right)~.
\end{eqs}
Note that $(n-3)$ has the same parity as $(n-1)$, we just need to lower the degree of the cup products by $2$. Finally, we obtain
\begin{eqs}
    \frac{1}{2}\Sq^4\Sq^1(B_n)\sim (-1)^{n+1}\left(\frac{B_n}{2}\cup_{n-5}\frac{B_n}{2}+\frac{B_n}{2}\cup_{n-4}\delta\left(\frac{B_n}{2}\right)\right)~,
\end{eqs}
which simplifies to 
\begin{eqs}
    \frac{1}{2}\Sq^4\Sq^1(B_4)\sim -\frac{B_4}{2}\cup\delta\left(\frac{B_4}{2}\right)~
\end{eqs}
when $n=4$. 

\paragraph*{Relation to Stiefel-Whitney Classes.}
We can also derive equivalent expressions of Steenrod squares using characteristic classes. To this end, we make use of the Wu relation
\begin{eqs}
    w=\Sq(\nu)~,
\end{eqs}
together with the Cartan formula
\begin{eqs}
    \Sq^i(x\cup y)=\sum_j \Sq^{i-j}(x)\cup \Sq^j(y)~,
\end{eqs}
where $\nu$ denotes the total Wu class, and $w$ denotes the total Stiefel-Whitney class. Now, on a manifold of dimension $(n+3)$, we obtain
\begin{eqs}
    \Sq^2\Sq^1(B_n)&=\nu_2\cup \Sq^1(B_n)\\
    &=\Sq^1(\nu_2)\cup B_n+\Sq^1(\nu_2\cup B_n)\\
    &=(w_3+\nu_3)\cup B_n+(w_1\nu_2) \cup B_n\\
    &=(w_3+w_1^3)\cup B_n~.
\end{eqs}
In the last equality, we have used the fact that $\nu_3=w_1w_2$ and $\nu_2=w_2+w_1^2$. When the background manifold is orientable, we can simplify the expression to
\begin{eqs}
    \Sq^2\Sq^1(B_n)=w_3\cup B_n~,
\end{eqs}
showing that the stable operation $\Sq^2\Sq^1$ corresponds to $w_3$ anomaly. 

Similarly, on a manifold of dimension $(n+5)$, we obtain
\begin{eqs}
    \Sq^4\Sq^1(B_n)&=\nu_4\cup \Sq^1(B_n)\\
    &=\Sq^1(\nu_4)\cup B_n+\Sq^1(\nu_4\cup B_n)\\
    &=(w_5+\nu_5+\Sq^2(\nu_3)+w_1\nu_4)\cup B_n\\
    &=(w_5+w_1w_2^2+w_1^2w_3+w_1^5)\cup B_n~.
\end{eqs}
When the background manifold is orientable, we can simplify the expression to
\begin{eqs}
    \Sq^4\Sq^1(B_n)=w_5\cup B_n~,
\end{eqs}
showing that the stable operation $\Sq^4\Sq^1$ corresponds to $w_5$ anomaly. Finally, for $\Sq^4$ we have
\begin{eqs}
    \Sq^4(B_n)&=\nu_4\cup B_n\\
    &=(w_4+w_3w_1+w_2^2+w_1^4)\cup B_n~,
\end{eqs}
where we make use of the Stiefel-Whitney class expression for $\nu_4$. On an orientable $(n+4)$-manifold, we obtain
\begin{eqs}
    \Sq^4(B_n)=(w_4+w_2^2)\cup B_n~.
\end{eqs}

{\change
\section{Statistics of fermions and fermionic loops as quantum anomaly}
\label{app: statistics as anomaly}

In this appendix, we explain the self-statistics of fermions and fermionic loops from the viewpoint of quantum anomaly. 
The general idea is that the self-statistics of an extended operator can be detected by changing its framing. 
For example, in a $(2{+}1)$D TQFT, a Wilson line represents the worldline of an anyon, and its topological spin can be defined by the phase acquired when the framing of the line is changed by one unit. 
When the extended operator generates a higher-form symmetry, inserting the operator is equivalent to turning on a background gauge field for that symmetry. 
The phase ambiguity under changing the framing is then encoded by the corresponding 't Hooft anomaly.

We first review this relation for a fermionic line operator. 
We then discuss the analogous anomaly for a fermionic loop operator and explain how the orientation-reversing geometry of a Klein bottle enters. 
Finally, we relate this continuum picture to the 24-step loop-flipping process used in the main text.

\subsection{Fermionic worldline}\label{sec:Fermionic_worldline}

Consider the $\ZZ_2$ gauge theory in $(2{+}1)$D with action
\begin{equation}
    S_1[M^3,a_1,b_1]
    =
    \pi \int_{M^3} a_1\cup \delta b_1 ,
\end{equation}
where $a_1,b_1\in C^1(M^3,\ZZ_2)$. 
This theory is the ordinary $\ZZ_2$ toric code. 
The Wilson lines for $a_1$ and $b_1$ create the $e$ and $m$ particles, respectively, while the composite
\[
    \psi=e\times m
\]
is a fermion. 
The corresponding line operator on a closed curve $\gamma$ is
\begin{equation}
    W_{\psi}(\gamma)
    =
    \exp\left(i\pi\int_{\gamma} a_1+b_1\right).
\end{equation}
This line generates a $\ZZ_2$ one-form symmetry. 
Let $B_2\in Z^2(M^3,\ZZ_2)$ denote the corresponding background field. 
The one-form symmetry has the anomaly
\begin{equation}
    S_{\rm bulk}
    =
    \pi\int_{X^4} B_2\cup B_2 ,
    \label{eq: anomaly fermion line}
\end{equation}
where $X^4$ is a four-dimensional bulk with boundary $M^3$.

We now recall how Eq.~\eqref{eq: anomaly fermion line} encodes fermionic statistics. 
A framing of the line $\gamma\subset M^3$ is a choice of two orthonormal normal vector fields along $\gamma$. 
Given such a framing, one may push $\gamma$ slightly along one normal vector to obtain a nearby curve $\gamma'$. 
The framing is measured by the linking number between $\gamma$ and $\gamma'$.

Changing the framing by one unit can be viewed as a movie in one higher dimension, shown in Fig.~\ref{fig:changeframing}. 
During this movie, the swept-out surfaces $\widetilde\gamma$ and $\widetilde\gamma'$ intersect once. 
Thus, the framing change contributes the phase
\begin{equation}
    (-1)^{\#\,{\rm int}(\widetilde\gamma,\widetilde\gamma')}.
\end{equation}
In the Poincar\'e-dual description, the background field $B_2$ is dual to the worldsheet swept out by the line insertion, and the intersection number is precisely computed by
\[
    \int_{X^4} B_2\cup B_2 .
\]
Therefore the anomaly Eq.~\eqref{eq: anomaly fermion line} gives a factor of $-1$ under a unit change of framing. 
This is the usual fermionic statistics of the line operator $W_\psi$.

\begin{figure}[thb]
    \centering
    \includegraphics[width=0.5\textwidth]{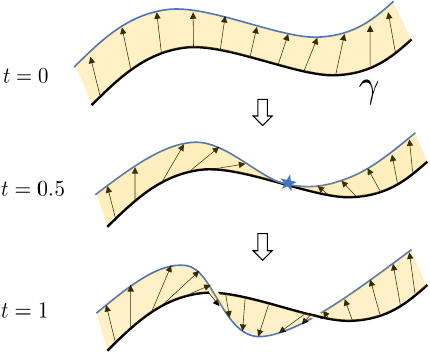}
    \caption{The movie of changing the framing along a curve $\gamma$. The blue curve represents the pushed-off curve $\gamma'$ obtained by perturbing $\gamma$ along a framing vector.}
    \label{fig:changeframing}
\end{figure}

\subsection{Fermionic worldsheet and the Klein-bottle framing}
\label{sec:Fermionic_worldsheet_Klein_bottle}

There is an analogous description for fermionic loop operators. 
As a simple example, consider the $(3{+}1)$D $\ZZ_2$ gauge theory
\begin{equation}
    S_2[M^4,a_1,b_2]
    =
    \pi\int_{M^4} a_1\cup \delta b_2 ,
\end{equation}
where $a_1\in C^1(M^4,\ZZ_2)$ and $b_2\in C^2(M^4,\ZZ_2)$. 
A fermionic loop is created by the Wilson surface operator
\begin{equation}
    W_f(\Sigma)
    =
    \exp\left(i\pi\int_{\Sigma} b_2+a_1\cup a_1\right),
\end{equation}
defined on a closed surface $\Sigma$. 
This surface operator generates a $\ZZ_2$ one-form symmetry. 
Let $B_2\in Z^2(M^4,\ZZ_2)$ be the corresponding background field. 
In the anomaly-inflow description, $B_2$ is extended to a five-dimensional bulk $X^5$ with boundary $M^4$. 
The anomaly of this one-form symmetry is the five-dimensional response
\begin{equation}
    S_{\rm bulk}
    =
    \pi\int_{X^5} B_2\cup Sq^1B_2 .
    \label{eq: fermionic surface anomaly}
\end{equation}
On an oriented five-manifold, this anomaly may equivalently be written as~\cite{Thorngren:2014pza, Theo2020}
\begin{equation}
    \pi\int_{X^5} w_3(TX)\cup B_2 ,
    \label{eq:w3UB}
\end{equation}
using the Wu relation.

We now explain the geometric meaning of $Sq^1B_2$. 
Suppose that the Wilson surface is supported on a possibly nonorientable surface $\Sigma\subset M^4$, and let
\[
    B_2=\mathrm{PD}_{M^4}(\Sigma)
\]
be its Poincar\'e-dual cochain. 
Choose a $\ZZ_4$ lift $\widetilde B_2$ of $B_2$. 
Then
\[
    Sq^1B_2
    =
    \frac{\delta\widetilde B_2}{2}
    \qquad (\mathrm{mod}\;2)
\]
is the Bockstein of $B_2$. 
Geometrically, the Poincar\'e dual of this Bockstein is a one-dimensional locus inside $\Sigma$ across which the orientation of the surface is reversed.

On an oriented $5$-manifold, we have $w_3(TX)=Sq^1w_2(TX)$. 
Using the Cartan formula,
\[
    Sq^1\bigl(w_2(TX)\cup B_2\bigr)
    =
    w_3(TX)\cup B_2
    +
    w_2(TX)\cup Sq^1B_2 ,
\]
and dropping the total coboundary, the anomaly in Eq.~\eqref{eq:w3UB} can be rewritten as
\begin{equation}
    \pi\int_{X^5} w_3(TX)\cup B_2
    =
    \pi\int_{X^5} w_2(TX)\cup Sq^1B_2
    =
    \pi\int_{X^5} w_2(TX)\cup \frac{\delta\widetilde B_2}{2}
    \qquad (\mathrm{mod}\;2).
    \label{eq:w2Sq1B}
\end{equation}
Since $Sq^1B_2$ is the Poincar\'e-dual current for the orientation-reversing defect of the worldsheet, the rewritten anomaly implies that this defect couples to the ambient $w_2(TX)$ in the same way as an ordinary fermionic worldline. 
In this sense, the fermionic nature of the loop is carried by its orientation-reversing defect.

Equivalently, the Thom identity gives
\begin{equation}
    Sq^1B_2
    =
    w_1(N\Sigma)\cup B_2 ,
    \label{eq: Thom Sq1 surface}
\end{equation}
where $N\Sigma$ is the normal bundle of $\Sigma$ in spacetime. 
Since the ambient spacetime $M^4$ is oriented, we have
\[
    w_1(T\Sigma)+w_1(N\Sigma)=0 .
\]
Therefore
\[
    w_1(N\Sigma)=w_1(T\Sigma),
\]
and the Poincar\'e dual of $Sq^1B_2$ is also the orientation-reversing defect of the surface itself. 
As shown in Fig.~\ref{fig:kleinframing}, we denote this defect by
\(
    \gamma_{w_1}\subset \Sigma .
\)

The relevant surface for fermionic loop statistics is a Klein bottle. 
If a loop evolves in time and returns to itself with the same orientation, its worldsheet is a torus,
\[
    S^1_{\rm loop}\times S^1_{\rm time}.
\]
If instead the loop returns to itself with the opposite orientation, then going once around the time direction acts on the spatial loop by orientation reversal. 
The corresponding worldsheet is the mapping torus
\begin{equation}
    S^1_{\rm loop}\rtimes_{\rm rev} S^1_{\rm time},
\end{equation}
which is a Klein bottle. 
Thus the Klein bottle naturally appears as the spacetime history of an orientation-flipping loop process.

Let us now describe the framing of this Klein-bottle worldsheet. 
Away from the orientation-reversing defect, a framing of $\Sigma$ is a choice of two normal vector fields $v_1,v_2$ spanning the normal bundle $N\Sigma$. 
Because $\Sigma$ is nonorientable, the normal bundle also undergoes orientation reversal along $\gamma_{w_1}$. 
Thus a choice of local framing necessarily has a codimension-one twist, or branch cut, along the orientation-reversing defect. 
For a Klein bottle, one has
\[
    w_1(T\Sigma)^2=0 ,
\]
so we may perturb $\gamma_{w_1}$ slightly inside $\Sigma$ to a nearby curve $\gamma'_{w_1}$ without introducing additional mod-two self-intersections. 
Along $\gamma'_{w_1}$, the normal frame $(v_1,v_2)$ is nonsingular, and we can define
\[
    \Theta\in \ZZ_2
\]
to be its mod-two winding number. 
This winding number measures a $\ZZ_2$ framing of the Klein-bottle surface.

\begin{figure}[t]
    \centering
    \includegraphics[width=0.5\textwidth]{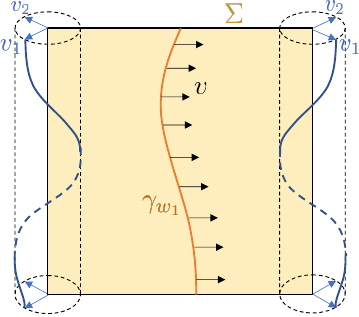}
    \caption{The geometry of a Klein-bottle surface $\Sigma$ and its framing. The curve $\gamma_{w_1}$ is the orientation-reversing defect Poincar\'e dual to $w_1(T\Sigma)$. We perturb it to a nearby curve and measure the mod-two winding of the normal frame $(v_1,v_2)$ along this perturbation.}
    \label{fig:kleinframing}
\end{figure}

Fig.~\ref{fig:kleinframing} makes the origin of the anomaly transparent. 
The orange curve is the orientation-reversing defect $\gamma_{w_1}$. 
The black arrows indicate a tangent vector $v$ used to push $\gamma_{w_1}$ slightly inside $\Sigma$ to the nearby curve $\gamma'_{w_1}$. 
Along this nearby curve, the normal frame $(v_1,v_2)$ can wind. 
The parity of this winding is the invariant $\Theta\in\ZZ_2$.

The anomaly in Eq.~\eqref{eq: fermionic surface anomaly} describes how the partition function changes when this mod-two winding number is shifted. 
Consider a movie in which the surface $\Sigma$ is fixed but the framing changes from $\Theta=0$ to $\Theta=1$. 
Let $\widetilde\Sigma\subset X^5$ be the three-dimensional history of the surface in this movie, and use the same notation $B_2$ for the background field Poincar\'e dual to $\widetilde\Sigma$. 
The Bockstein $Sq^1B_2$ is Poincar\'e dual to the two-dimensional history of the orientation-reversing defect. 
Equivalently, one may use the normal frame to push the nearby curve $\gamma'_{w_1}$ slightly off $\Sigma$; let $\widehat\gamma_{w_1}$ denote this normal push-off, and let $\widetilde{\widehat\gamma}_{w_1}$ denote its two-dimensional history in the movie. 
Then
\begin{equation}
    \int_{X^5} B_2\cup Sq^1B_2
    =
    \#\bigl(\widetilde\Sigma\cap \widetilde{\widehat\gamma}_{w_1}\bigr)
    \qquad (\mathrm{mod}\;2).
    \label{eq: B Sq1B intersection}
\end{equation}
When $\Theta$ changes by one, the normal push-off necessarily passes through the surface once. 
In the picture of Fig.~\ref{fig:kleinframing}, this is the statement that changing the mod-two winding of the normal frame by one requires one singular event, where the pushed-off orientation-reversing defect intersects the surface. 
For this fundamental framing-changing process,
\[
    \int_{X^5} B_2\cup Sq^1B_2
    =
    1
    \qquad (\mathrm{mod}\;2),
\]
and therefore the anomaly contributes the phase
\begin{equation}
    (-1)^{\int_{X^5}B_2\cup Sq^1B_2}
    =
    -1 .
\end{equation}
This is the continuum anomaly interpretation of fermionic loop statistics.

Equivalently, the 24-step loop-flipping process in the main text is a microscopic lattice realization of this Klein-bottle framing process. 
The 24-step sequence returns a single $\ZZ_2$ loop to the same physical configuration with the opposite orientation; in spacetime, the loop therefore sweeps out the Klein-bottle worldsheet $S^1_{\rm loop}\rtimes_{\rm rev} S^1_{\rm time}$. 
The minus sign above is the continuum counterpart of the lattice result $\mu_{24}=-1$.

}

\section{Statistics in non-Pauli models}\label{app: Statistics in non-Pauli models}

In Section \ref{sec: Statistics in Pauli Stabilizer Models}, we develop a specialized theory of statistics for Pauli realizations, and in this appendix, we compare it to the non-Pauli theory, established in Ref.~\cite{kobayashi2024generalized,xue2025statisticsabeliantopologicalexcitations}.

Our definitions of excitation pattern and localization for the Pauli case are exactly the same as the non-Pauli case defined in Ref.~\cite{xue2025statisticsabeliantopologicalexcitations}. But in the definition of non-Pauli realizations, we do not assume Eq.~\eqref{axiomPauli}, while we take a stronger version of the locality axiom.

\begin{definition}\label{defRealization}
	A realization of the excitation pattern \(m=(A,S,\partial,\supp)\) consists of a Hilbert space $\mathcal{H}$, a collection of \textit{configuration states} $\{|a\rangle|a\in A\}$ such that $|a\rangle$ and $|a'\rangle$ are either orthogonal or collinear\footnote{One can assume that $\{|a\rangle\}$ is a basis of \(\mathcal{H}\); there is only a little difference.}, and a collection of \textit{excitation operators} $\{U(s)|s\in S\}$, satisfying the following two axioms.
	\begin{itemize}
			\item \textbf{The configuration axiom:} For any $s\in S$ and $a\in A$, the equation
			\begin{equation}\label{eqChangeConfig}
				U(s)|a\rangle=e^{i\theta(s,a)}|a+\partial s\rangle
			\end{equation}
			holds for some $\theta(s,a)\in \RR/2\pi\ZZ$.
			\item \textbf{The locality axiom:} \(\forall s_1, s_2, \dots, s_k \in S\) satisfying \(\operatorname{supp}(s_1) \cap \operatorname{supp}(s_2) \cap \cdots \cap \operatorname{supp}(s_k) = \emptyset\), the equation
			\begin{align}\label{axiomLocalityIdentity}
				[ U(s_k), [\cdots, [ U(s_2),  U(s_1)]]]=1 \in U(\mathcal{H}),
			\end{align}
			holds, where \([a,b] = a^{-1}b^{-1}ab\).
	\end{itemize}
	We say two realizations are equal if and only if their phase data $\{\theta(s,a)|s\in S,a\in A\}$ are equal, and we denote all realizations of an excitation pattern $m$ by $R(m)$. We change the notation of Pauli realizations by $R_{\mathrm P}(m)$.
\end{definition}

In Eq.~\eqref{axiomLocalityIdentity}, the $k=2$ case is the same as the locality axiom of Pauli realizations. The case for $k\ge 3$ can be interpreted as that all excitation operators are finite-depth local quantum circuits. This implies that $[U(s_2),U(s_1)]$ only acts on a small neighborhood of $\supp(s_1)\cap\supp(s_2)$, so $[U(s_3),[U(s_2),U(s_1)]]=1$ if $\supp(s_1)\cap\supp(s_2)\cap \supp(s_3)=\emptyset$, for example, when $s_1,s_2,s_3$ are the three edges of the triangle \hbox{\raisebox{-2ex}{ \begin{tikzpicture}
 		\draw[thick] (0,0) -- (0.8,0) -- (0.4,0.7) -- cycle; 

 		\foreach \x in {(0,0), (0.8,0), (0.4,0.7)} \fill[red] \x circle (2pt);
 	\end{tikzpicture}}}. Another interpretation is that any string can be decomposed into shorter strings, such as \hbox{\raisebox{-2ex}{ \begin{tikzpicture}
 		\draw[thick] (0,0) -- (0.8,0) -- (0.4,0.7) -- cycle; 

 		\foreach \x in {(0,0), (0.8,0), (0.4,0.7),(0.4,0),(0.2,0.35),(0.6,0.35)} \fill[red] \x circle (2pt);
 	\end{tikzpicture}}}. We have decompositions $U(s_i)=U(s_i^L)U(s_3^R)$, and each of $s_i^L,s_i^R$ is supported at a half of $s_i$. Thus, the configuration axiom and the locality axiom for $k=2$ implies that $[U(s_3),[U(s_2),U(s_1)]]=1$. The case for $k>3$ is similar.

Our formalism for non-Pauli realizations is slightly different because we are describing them using $\theta(s,a)$ instead of the phase factors of commutators. Actually, we have
\begin{equation}
    [U(s'),U(s)]|a\rangle=e^{i\varphi(s',s,a)}|a\rangle,
\end{equation}
where 
\begin{equation}\label{eqVarphiAndTheta}
    \varphi(s',s,a)=\theta(s,a)+\theta(s',a+\partial s)-\theta(s,a+\partial s')-\theta(s',a).
\end{equation}
In contrast to the Pauli case, $\varphi(s',s,a)$ depends on $a\in A$, making $\varphi(s',s,a)$ less convenient. Thus, we prefer to define $R(m)$ using the variables $\{\theta(s,a)\}$. On the other hand, one might think only $\{\varphi(s',s,a)\}$ are physical, then he will say $\{\theta(s,a)\}$ have some "gauge redundancies". In fact, for any $\{\theta(s,a)\}\in R(m)$, the translation from $\{\theta(s,a)\}$ to $\{\varphi(s',s,a)\}$ is equivalent to quotient out the subgroup $R(m|_\emptyset)$ that contains all realizations such that $[U(s'),U(s)]=1,\forall s,s'\in S$. In other words, $R(m|_\emptyset)$ is the solution space of 
\begin{equation}\label{eqREmpty}
    \theta(s,a)+\theta(s',a+\partial s)-\theta(s,a+\partial s')-\theta(s',a)=0,\quad\forall s,s'\in S,a\in A.
\end{equation}

A geometric view makes them clearer. Consider the cubic lattice $\ZZ^S$: every vertex is labeled by an element in $\ZZ[S]$, and every $p$-cell is labeled by a vertex and an $p$-element subset of $S$ (corresponding to the "direction" of this cell). Next, we quotient this cell complex by identifying cells whose vertices have the same image under the map $\ZZ[S]\rightarrow A$. We denote the resulting cell complex by $\operatorname{G}(m)$, whose topology is the product of some $\RR^1$ and $S^1$.

In the non-Pauli case, a collection of phases $\{\theta(s,a)\}$ corresponds to a $1$-cochain (additionally satisfying the locality axiom), so $R(m)\subset C^1(\operatorname{G}(m),\RR/2\pi\ZZ)$. Eq.~\eqref{eqVarphiAndTheta} is exactly $\varphi=d\theta\in B^2(\operatorname{G}(m),\RR/2\pi\ZZ)$. Also, Eq.~\eqref{eqREmpty} says that $R(m|_\emptyset)\simeq Z^1(\operatorname{G}(m),\RR/2\pi\ZZ)\subset R(m)$. Thus, translating from $\theta(s,a)$ to $\varphi(s',s,a)$ is equivalent to quotient out $Z^1(\operatorname{G}(m),\RR/2\pi\ZZ)$, whose image is a subgroup of $B^2(\operatorname{G}(m),\RR/2\pi\ZZ)$.

Our definition of Pauli realization exactly corresponds to the special case when $\varphi(s',s,a)$ does not depend on $a\in A$. Under this assumption, $\varphi$ is automatically in $Z^2(\operatorname{G}(m),\RR/\ZZ)$. On top of that, the configuration axiom $\varphi(k',k)=0,\forall k,k'\in K$ means that $\varphi\in B^2(\operatorname{G}(m),\RR/\ZZ)$. Any $2$-cycle in $\operatorname{G}(m)$ is represented by a parallelogram in $\ZZ[S]$, whose vertices are labeled by $0,k,k',k+k'$, where $\partial k=\partial k'=0$. Then, the evaluation of $\varphi$ on this $2$-cycle is exactly $\varphi(k,k')$. It is also direct to check that the Pauli and non-Pauli versions of the locality axiom are consistent. Altogether, we have proved that
\begin{theorem}
    There is a canonical embedding $\iota: R_{\mathrm{P}}(m)\rightarrow R(m)/R(m|_\emptyset)$.
\end{theorem}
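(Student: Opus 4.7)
The plan is to construct $\iota$ directly via the coboundary operator on the cell complex $\mathsf{G}(m)$ introduced in the geometric discussion just before the theorem. Given a Pauli realization, its $\varphi_P\in R_{\mathrm{P}}(m)$ sits inside $Z^2(\mathsf{G}(m),\RR/\ZZ)$ (alternation plus the Pauli configuration axiom), and the paper has already argued that the configuration axiom $\varphi_P(k,k')=0$ for $k,k'\in K$ upgrades this to $\varphi_P\in B^2(\mathsf{G}(m),\RR/\ZZ)$. I would therefore choose any $\theta\in C^1(\mathsf{G}(m),\RR/\ZZ)$ with $d\theta=\varphi_P$ and define $\iota(\varphi_P):=[\theta]\in R(m)/R(m|_\emptyset)$.

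First I would check that this $\theta$ really defines a non-Pauli realization, i.e., $\theta\in R(m)$. The configuration axiom is automatic from the cochain definition. For locality, the identity $d\theta=\varphi_P$ together with Eq.~\eqref{eqVarphiAndTheta} gives $\varphi(s',s,a)=\varphi_P(s',s)$ independent of $a$, so $[U(s'),U(s)]$ acts as the scalar $e^{i\varphi_P(s',s)}$ on every $\ket{a}$. The $k=2$ case of Eq.~\eqref{axiomLocalityIdentity} then reduces to $\varphi_P(s',s)=0$ whenever $\supp(s')\cap\supp(s)=\emptyset$, which is exactly the Pauli locality axiom. For $k\ge 3$ the inner commutator is a scalar operator, so the outer nested commutators are trivial automatically, and there is nothing further to verify.

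Next I would establish well-definedness and injectivity. Two lifts $\theta,\theta'$ of the same $\varphi_P$ differ by an element of $Z^1(\mathsf{G}(m),\RR/\ZZ)\simeq R(m|_\emptyset)$, so they represent the same class. Conversely, if $\iota(\varphi_P)=0$, some lift $\theta$ lies in $R(m|_\emptyset)\simeq Z^1$, forcing $\varphi_P=d\theta=0$. Group homomorphism follows from bilinearity of $d$, and canonicity is clear because no noncanonical data enters after the quotient by $R(m|_\emptyset)$.

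The main obstacle is not any of these algebraic steps individually, but rather the bookkeeping that makes the identifications $R_{\mathrm{P}}(m)\hookrightarrow C^2(\mathsf{G}(m),\RR/\ZZ)$ and $R(m)\subset C^1(\mathsf{G}(m),\RR/\ZZ)$ genuinely compatible. Concretely, one must verify that the $2$-cycles of $\mathsf{G}(m)$ are generated precisely by the parallelograms $\{0,k,k',k+k'\}$ with $k,k'\in K$ (so that $B^2$ is cut out by the equations $\varphi(k,k')=0$ on $Z^2$), and that the non-Pauli locality axiom for $k\ge 3$ imposes no constraint beyond $k=2$ once $\varphi$ is independent of $a$. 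Both assertions reduce to elementary facts about products of free abelian groups with relations, but they must be written out carefully in order to justify treating $R_{\mathrm{P}}(m)$ and $R(m)/R(m|_\emptyset)$ as subgroups and quotients of the same cochain complex. Once this cellular dictionary is fixed, the embedding $\iota$ is the plain coboundary map $B^2\to C^1/Z^1$ restricted to the image of $R_{\mathrm{P}}(m)$.
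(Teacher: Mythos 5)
Your proposal is correct and takes essentially the same route as the paper: you identify $R(m|_\emptyset)$ with $Z^1(\operatorname{G}(m),\RR/\ZZ)$, view the Pauli data as a translation-invariant $2$-cocycle whose periods on the parallelograms spanned by $k,k'\in K$ vanish by the configuration axiom (hence $\varphi_P\in B^2$), lift to a $1$-cochain $\theta$ that automatically satisfies the non-Pauli locality axiom because all commutators act as $a$-independent scalars, and recover injectivity from $d\theta=\varphi_P$. The two bookkeeping points you flag (that $H_2(\operatorname{G}(m))$ is generated by the parallelogram classes, and that the $k\ge 3$ locality conditions add nothing once commutators are scalars) are precisely the facts the paper also asserts with comparable brevity.
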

Physically, this means that Pauli stabilizer realizations form a special subclass of general (non-Pauli) realizations, and we can also define their statistics within the full non-Pauli framework. In the dual picture, one may start from the non-Pauli statistical processes and reduce them to the Pauli formulas by using the vanishing of higher commutators, such as $[[A,B],C]=1$. For example, the $T$-junction process
\begin{equation}
    U_{03}\, U_{02}^{-1}\, U_{01}\, U_{03}^{-1}\, U_{02}\, U_{01}^{-1}
\end{equation}
simplifies to
\begin{equation}
    \bigl[U_{01},\,U_{02}\bigr]\,
    \bigl[U_{02},\,U_{03}\bigr]\,
    \bigl[U_{03},\,U_{01}\bigr]~.
\end{equation}
The non-Pauli version of statistics is defined by
\begin{equation}
    T^*(m)=R(m)/\sum_{x}R(m|_x).
\end{equation}
Using localization, we immediately get a homomorphism $T_{\mathrm{P}}^*(m)\rightarrow T^*(m)$. We know that it is not surjective in general, but we do not know whether it is injective.






\bibliographystyle{utphys}
\bibliography{bibliography.bib}

\end{document}